\documentclass[11pt]{article}

\usepackage{url}
\usepackage{color,microtype}
\usepackage{graphicx,subfigure}
\usepackage{amssymb,amsmath,amssymb,amsfonts}
\usepackage{amsthm}
\usepackage{textcomp}
\usepackage{enumerate}
\usepackage{multirow}
\usepackage{gensymb}
\usepackage{array}
\usepackage{algorithm}
\usepackage[noend]{algorithmic}
\usepackage{graphicx}
\usepackage{comment}
\usepackage{mathptmx}
\DeclareMathAlphabet{\mathcal}{OMS}{cmsy}{m}{n}

\usepackage[sort]{cite}
\newcommand{\argmax}{\operatornamewithlimits{argmax}}

\newcommand{\vol}{{\ensuremath{\mathit{vol}}}}
\newcommand{\cut}{{\ensuremath{\mathit{cut}}}}

\newcounter{lp}
\setcounter{lp}{0}
\newcommand{\lptag}{\tag{LP\arabic{lp}}\addtocounter{lp}{1}}
\newcounter{sdp}
\setcounter{sdp}{0}

\newcommand{\cgeq}{1}
\newcommand{\cleq}{2}
\newcommand{\cneg}{3}

\newcommand{\prob}[1]{\textrm{Pr} \left (#1 \right )}

\newcommand{\calC}{\mathcal{C}}
\newcommand{\bA}{\mathbf{A}}

\newcommand{\bC}{\mathbf{C}}
\newcommand{\bP}{\mathbf{P}}
\newcommand{\bD}{\mathbf{D}}
\newcommand{\bX}{\mathbf{X}}
\newcommand{\bY}{\mathbf{Y}}

\newcommand{\bF}{\mathbf{F}}

\newcommand{\bZ}{\mathbf{Z}}

\newcommand{\bb}{\mathbf{b}}
\newcommand{\bc}{\mathbf{c}}
\newcommand{\bx}{\mathbf{x}}
\newcommand{\by}{\mathbf{y}}

\newcommand{\bu}{\mathbf{u}}
\newcommand{\bg}{\mathbf{g}}

\newcommand{\mwmlp}{{MWM-LP}}

\newtheorem{definition}{Definition}

\newtheorem{theorem}{Theorem}

\newtheorem{corollary}[theorem]{Corollary}
\newtheorem{lemma}[theorem]{Lemma}

\mathchardef\mhyphen="2D

\newcommand{\expec}[1]{\mathbb E\left [ #1 \right ]}
\newcommand{\var}[1]{\mathbb V\left [ #1 \right ]}
\newcommand{\rounddown}[1]{\left \lfloor  #1 \right \rfloor}
\newcommand{\roundup}[1]{\left \lceil  #1 \right \rceil}
\newcommand{\mindisagree}{\ensuremath{\mathsf{min\mhyphen disagree}}}
\newcommand{\maxagree}{\ensuremath{\mathsf{max\mhyphen agree}}}
\newcommand{\disagree}{\ensuremath{\mathsf{disagree}}}
\newcommand{\agree}{\ensuremath{\mathsf{agree}}}
\DeclareMathOperator{\polylog}{polylog}
\DeclareMathOperator{\poly}{poly}
\DeclareMathOperator{\argmin}{argmin}
\newcommand{\C}{\ensuremath\mathcal{C}}
\newcommand{\ind}{{\sc Index}}
\newcommand{\disj}{{\sc Disj}}

\usepackage{fullpage}
\usepackage{times}
\usepackage{graphicx} 
\usepackage{subfigure} 
\usepackage{natbib}
\usepackage{algorithm}
\usepackage{algorithmic}

\usepackage{hyperref}

\begin{document} 

\title{Correlation Clustering in Data Streams}
\author{
Kook Jin Ahn\thanks{University of Pennsylvania. \texttt{kookjin@cis.upenn.edu}. The author is currently at Google, \texttt{kookjin@google.com}.}
\and
Graham Cormode\thanks{University of Warwick.  Supported in part by European Research Council grant ERC-2014-CoG 647557, a Royal  Society Wolfson Research Merit Award and the Yahoo Faculty Research  Engagement Program.  \texttt{G.Cormode@warwick.ac.uk}}
\and
Sudipto Guha\thanks{This work was done while the author was at University of Pennsylvania, supported by NSF Award CCF-1546141. \texttt{sudipto@cis.upenn.edu}}  
 \and  
Andrew McGregor\thanks{University of Massachusetts Amherst. Supported by NSF  Award CCF-1637536. \texttt{mcgregor@cs.umass.edu}}
\and
Anthony Wirth\thanks{School of Computing and Information Systems, The University of Melbourne. Supported by ARC Future Fellowship FT120100307. \texttt{awirth@unimelb.edu.au}}}
\maketitle

\begin{abstract}
Clustering is a fundamental tool for analyzing large data sets. A
rich body of work has been devoted to designing data-stream algorithms
for the relevant optimization problems such as $k$-center, $k$-median,
and $k$-means. Such algorithms need to be both time  and and space efficient. 
In this paper, we address the problem of \emph{correlation clustering} in the dynamic data stream
model. The stream consists of updates to the edge weights of a graph
on~$n$ nodes and the goal is to find a node-partition such that the
end-points of negative-weight edges are typically in different
clusters whereas the end-points of positive-weight edges are typically
in the same cluster. We present polynomial-time, $O(n\cdot \polylog
n)$-space approximation algorithms for natural problems that arise.

We first develop data structures based on linear sketches that allow
the ``quality'' of a given node-partition to be measured. We then 
combine these data structures with convex programming and
sampling techniques to solve the relevant approximation problem.
Unfortunately, the standard LP and SDP formulations are not
obviously
solvable in $O(n\cdot \polylog n)$-space. Our work presents
space-efficient algorithms for the convex programming required, as well as
approaches to reduce the adaptivity of the sampling.
\end{abstract}

\newpage

\section{Introduction}

The correlation clustering problem was first formulated as an optimization
problem by~\citet*{BansalBC04}.
The input is a complete weighted graph~$G$ on $n$~nodes, where each pair of
nodes~$uv$
has weight $w_{uv}\in {\mathbb R}$.
A positive-weight edge indicates that~$u$
and~$v$ should be in the same cluster, whereas a negative-weight edge
indicates that~$u$ and~$v$ should be in different clusters.
Given a node-partition $\C=\{C_1,C_2,\ldots \}$, we say edge
$uv$ \emph{agrees} with~$\C$, denoted by $uv \sim \C$, if the relevant soft
constraint is observed.
The goal is to find the partition~$\C$ that maximizes 
\[
\agree(G,\C):=\sum_{uv \sim \C} |w_{uv}| 
\, 
\]
or, equivalently, that minimizes $\disagree(G,\C) := \sum_{uv}
|w_{uv}|-\agree(G,\C)$. 
Solving this problem exactly is known to be NP-hard.
A large body of work
has been devoted to approximating $\maxagree(G)=\max_{\C} \agree(G,\C)$ and
$\mindisagree(G)=\min_{\C} \disagree(G,\C)$, along with variants
$\mindisagree_k(G)$ and $\maxagree_k(G)$, where we consider partitions with at most $k$ clusters. 
In this paper, we focus on multiplicative approximation results. 
If all weights are $\pm 1$, there is a polynomial time approximation scheme
(PTAS) for  $\maxagree$ \citep{BansalBC04,GiotisG06}
and a 
 $2.06$-approximation \citep{ChawlaMSY15},
for $\mindisagree$. 
When there is an upper bound,~$k$, on the number of clusters in~$\C$, and all
weights are~$\pm 1$, \citet{GiotisG06} introduced a PTAS for both problems.
Even $k=2$ is interesting, with an efficient local-search approximation
introduced by~\citet*{ColemanSW08}.

If the weights are arbitrary, there is a $0.7666$-approximation for
$\maxagree$ \citep*{Swamy04,CharikarGW05} and an $O(\log n)$-approximation for
$\mindisagree$ \citep{CharikarGW05,DemaineEFI06}.
These methods use convex programming: as originally described, this
cannot be implemented in $O(n\polylog n)$ space, even when
 the input graph is sparse.
This aspect is well known in practice,
and
\citet*{ES09,BG11,kdd2014} discuss the difficulty of scaling the convex programming approach.

\paragraph{Clustering and Graph Analysis in Data Streams.}
Given the importance of clustering as
a basic tool for analyzing massive data sets, it is unsurprising that a considerable effort
has gone into designing clustering algorithms in the relevant
computational models.
In particular, in the data-stream model we are
permitted a limited number of passes (ideally just one) over the data
while using only limited memory. This model abstracts the challenges
in traditional applications of stream processing such as network
monitoring, and also leads to I/O-efficient external-memory
algorithms. Naturally, in either context, an algorithm should also be fast,
both in terms of the time to process each stream element
and in returning the final answer.

Classical clustering problems including
$k$-median~\citep{GuhaMMO00,CharikarOP03}, $k$-means~\citep*{AilonJM09}, and
$k$-center~\citep{Guha09,mk2008,CharikarCFM04} have all been studied in the
data stream model, as surveyed by~\citet{Silva13}.
Non-adaptive sampling algorithms for correlation clustering can be implemented
in the data stream model, as applied by~\citet{arxiv}, to
construct {\em additive} approximations.
\cite*{ChierichettiDK14} presented the first multiplicative approximation data stream algorithm: a polynomial-time
$(3+\epsilon)$-approximation for $\mindisagree$ on $\pm1$-weighted
graphs using $O(\epsilon^{-1} \log^2 n)$ passes and \emph{semi-streaming} space --- that is,
a streaming algorithm using $\Theta (n \polylog n)$
memory~\citep{FKMSZ05b}.
\citet{PanPORRJ15} and \citet{kdd2014} discuss faster non-streaming implementations of related ideas but  \cite*{ChierichettiDK14} remained the state of the art data stream algorithm until our work. Using space roughly proportional to the number of nodes can be shown to be necessary for solving many natural graph problems including, it will turn out, correlation clustering. For a recent survey of the semi-streaming algorithms and graph sketching see \citet{McGregor14}. 

\paragraph{Computational Model.}
In the basic graph stream model, the input is a sequence of edges and their weights.
The available space to process the stream and perform any necessary post-processing is $O(n \polylog n)$ bits.
Our results also extend to the \emph{dynamic graph stream model} where the stream consists of both insertions and deletions of edges; the weight of an edge is specified when the edge is inserted and deleted (if it is subsequently deleted).
For simplicity, we assume that all weights are integral. We will consider three types of weighted graphs: (a) \emph{unit weights}, where
all $w_{uv}\in \{-1,1\}$; (b) \emph{bounded weights}, where all weights
are in the range $[-w_*,-1]\cup [1,w_*]$ for some constant $w_*\geq 1$;
and (c) \emph{arbitrary weights}, where  all weights
are in the range $[-w_*,w_*]$ where $w_*=\poly(n)$. 
We denote the sets of positive-weight and negative-weight edges by~$E^+$
and~$E^-$, respectively,  and define $G^+=(V,E^+)$ and $G^-=(V,E^-)$. 

We note that many of our algorithms, such as those based on sparsification~\citep{access}, can also be implemented in MapReduce.

\subsection{Our Results}
We summarize our results in Table~\ref{table:summary}.

\paragraph{Max-Agree.}
For $\maxagree$, we provide the following single-pass streaming
algorithms, each needing $\tilde{O}(n\epsilon^{-2})$ space: (i) a
polynomial-time $(1-\epsilon)$-approximation for bounded weights
(Theorem~\ref{thm:ptasboundedagree}),
and (ii) a $0.766(1-\epsilon)$ approximation for arbitrary weights in
$\tilde{O}(n\epsilon^{-10})$ time
(Theorem~\ref{thm:max}).

\begin{table}
  \caption{Summary of approximation results in this paper.}
  \label{table:summary}
  \centering
  \begin{tabular}{|c|c|c|c|c|c|}
\hline
    Section & Problem & Weights & Passes & Space Bound & Approximation Factor \\
    \hline
    \ref{sec:l2combrec} & \disagree & unit & 1 & $\tilde{O}(\epsilon^{-2})$ & $1+\epsilon$  \\
    \ref{sec:l2combrec} & \mindisagree & unit & 1 & $O(n + \epsilon^{-2} t)$ & $1+\epsilon$ if $\mindisagree(G)\leq t$  \\
    \ref{sec:2ds} & \maxagree & bounded & 1 & $O(n \poly(k,
    \epsilon^{-1}))$ & $1-\epsilon$   \\
    \ref{sec:3ds} & $\disagree_2$ & arbitrary & 1 &
    $\tilde{O}(n\epsilon^{-2})$ & $1+\epsilon$   \\
    \ref{sec:3ds} & $\mindisagree_2$ & bounded & 1 & $\tilde{O}(n
    \epsilon^{-2})$ & $1+\epsilon$  \\
    \ref{sec:mindisagree} &  \mindisagree & arbitrary & 1 &
    $\tilde{O}(n\epsilon^{-2} + |E^{-}|)$  & $(3 + \epsilon) \log |E|$  \\
    \ref{sec:sdp} & \maxagree & arbitrary & 1 & 
    $\tilde{O}(n\epsilon^{-2})$ & $0.7666-\epsilon$  \\
    \hline
    \ref{sec:mind3} & \mindisagree & unit & $\log \log n$ &
    $\tilde{O}(n)$ & 3\\
    \ref{sec:mdfc} & $\mindisagree_k$ & unit & $\log \log n$ &
    $\tilde{O}(n \poly(k,\epsilon^{-1}))$ & $1+\epsilon$  \\
\hline
\ref{sec:lower} & \mindisagree & unit & $p$ & $\Omega(n/p)$ & Any  \\
\ref{sec:lower} & \mindisagree & arbitrary & 1 & $\Omega(n+|E^-|)$ &  Any   \\
\ref{sec:lower} & $\disagree_k$  for $k\geq 3$ & arbitrary & 1 & $\Omega(n^2)$ & Any \\
    \hline
  \end{tabular}
\end{table}

\paragraph{Min-Disagree.}

We  show that any constant pass algorithm that can test whether $\mindisagree(G)=0$
in a single pass,
for unit  weights, must store $\Omega(n)$ bits 
 (Theorems~\ref{thm:n3lb}).
For arbitrary weights, the lower bound increases to $\Omega(n+|E^-|)$ (Theorem~\ref{thm:n2lb})
and  to  $\Omega(n^2)$ in the case the graph of negative edges may be dense.
We provide a single-pass algorithm that uses $s=\tilde{O}(n \epsilon^{-2}
 + |E^-|)$ space
and $\tilde{O}(s^2)$
time and provides an $O(\log |E^-|)$ approximation (Theorem~\ref{thm:min}).
Since \citet{DemaineEFI06} and \citet{CharikarGW05} provide
approximation-preserving reductions from the ``minimum multicut" problem to
$\mindisagree$ with arbitrary weights,
it is expected to be difficult
to approximate the latter to 
 better than a $\log |E^-|$ factor in polynomial time.
For unit weights when $\mindisagree(G) \leq t$, we provide a single-pass
polynomial time algorithm that uses $\tilde{O}(n + t)$ space
(Theorem~\ref{thm:smalldisagree}). We provide a
$\tilde{O}(n \epsilon^{-2} )$-space PTAS for $\mindisagree_2$
for bounded weights (Theorem~\ref{thm:boundeddisagree2}). 

We also consider multiple-pass streaming algorithms.
For unit weights,
we present a $O(\log \log n)$-pass algorithm that mimics the algorithm
of~\citet{AilonCN08}, and provides
a $3$-approximation in expectation (Theorem~\ref{thm:mult1}),
improving on the result of~\citet{ChierichettiDK14}.
For $\mindisagree_k(G)$, on unit-weight graphs with $k \geq 3$, we give
a $\min(k-1,O(\log \log n))$-pass polynomial-time algorithm 
using $\tilde{O}(n\epsilon^{-2})$ space (Theorem~\ref{thm:mult2}). This result is based on emulating an algorithm by \cite{GiotisG06} in the data stream model.

\subsection{Techniques and Roadmap}
In Section~\ref{sec:basic}, we present three basic data structures for the $\agree$
and $\disagree$ query problems where a partition~$\C$ is specified at
the end of the stream, and the goal is to return an approximation of
$\agree(G,\C)$ or $\disagree(G,\C)$. 
They are based on linear sketches and incorporate
ideas from work on constructing graph sparsifiers via linear
sketches.
These data structures can be
constructed in the semi-streaming model and can be queried in
$\tilde{O}(n)$ time. 
As algorithms rely on relatively simple matrix-vector operations, they
can be implemented fairly easily in MapReduce. 

In Section~\ref{sec:convexsmall} and \ref{sec:sdp}, we introduce several
new ideas for solving the LP and SDP for 
\mindisagree\ and \maxagree.
In each case, the convex formulation must allow each candidate solution
to be represented, verified, and updated in small space. 
But the key point made here is that the formulation plays an outsized role
in terms of space efficiency, both from the perspective of the state required to compute
and the operational perspective of efficiently updating that state.
In future, we expect the space efficiency of solving convex optimization to be
increasingly important.

We discuss multipass for algorithms for \mindisagree\  in Section~\ref{sec:multipass}. Our results are based on adapting existing algorithms that, if implemented in the data stream model, may appear to take $O(n)$ passes. However, with a more careful analysis we show that $O(\log \log n)$ passes are sufficient. Finally, we present space lower bounds  in Section~\ref{sec:lower}. These are proved using reductions from communication complexity and establish that many of our algorithms are space-optimal.

\section{Basic Data Structures and Applications}
\label{sec:basic}

We introduce three basic data structures that can be constructed with a
single-pass over the input stream that defines the weighted graph~$G$.
Given a query partition~$\calC$, these data structures return estimates of
$\agree(G,\calC)$ or $\disagree(G,\calC)$. 
Solving the correlation clustering optimization problem with these structures
directly would require exponential time or $\omega(n \polylog n)$ space.
Instead, we will exploit them carefully to design more efficient
solutions. However, in this section, we will present a short application of each data structure
that illustrates their utility.

\subsection{First Data Structure: Bilinear Sketch}\label{sec:l2combrec}

Consider a graph~$G$ with unit weights ($w_{ij}\in \{-1,1\}$)
and a clustering~$\calC$.
Our first data structure allows us to solve the {\em query problem},
which is, given $G$ and $\calC$, to report (an approximation of) $\disagree(G,\calC)$. 
Define the matrices~$M^G$
and~$M^\calC$ where $M_{ij}^G=\max(0,w_{ij})$ and 
\[M_{ij}^\calC=
\begin{cases}
0 & \mbox{ if $i$ and $j$ are separated in $\calC$} \\
1 & \mbox{ if $i$ and $j$ are not separated in $\calC$} \,.
\end{cases}
\]
Note that if $w_{ij}=1$, then 
\[(M_{ij}^G-M_{ij}^\calC)^2=(1-M_{ij}^\calC)^2=I[\mbox{$i$ and $j$ are separated in $\calC$}]\]
whereas, if $w_{ij}=-1$ then 
\[(M_{ij}^G-M_{ij}^\calC)^2=(M_{ij}^\calC)^2=I[\mbox{$i$ and $j$ are not separated in $\calC$}] \ . \]

Hence, the (squared) matrix distance, induced by the Frobenius norm,
gives exactly
\[\disagree(G,\calC)=\|M^G-M^\calC\|_F^2=\sum_{ij} (M_{ij}^G-M_{ij}^\calC)^2 \ .\]
To efficiently estimate $\|M^G-M^\calC\|_F^2$ when $\calC$ is not known a
priori, we can repurpose the bilinear sketch approach of \citet{IndykM08}. The basic sketch is as follows:
\begin{enumerate}
\item Let $\alpha\in \{-1,1\}^n$ and $\beta\in \{-1,1\}^n$ be independent
random vectors whose entries are 4-wise independent;
in a single pass over the input,
compute \[Y=\sum_{ij\in E^+} \alpha_i \beta_j \ .\]
 Specifically, we maintain a counter that is initialized to 0 and for each $ij\in E^+$ in the stream we add $\alpha_i \beta_j$ to the counter and if $ij\in E^+$ is deleted we subtract $\alpha_i \beta_j$ from the counter; the final value of the counter equals $Y$. Note that $\alpha$ and $\beta$ can be determined by a hash function that can be stored in $\tilde{O}(1)$ space such that each entry can be constructed in $\tilde{O}(1)$ time. 
\item 
Given query partition $\calC=\{C_1,C_2, \ldots\}$, return
$
X=\left (Y- \sum_{\ell} \left (\sum_{i\in C_\ell} \alpha_i\right ) \left ( \sum_{i\in C_\ell} \beta_i\right )  \right )^2$.
\end{enumerate}

To analyze the algorithm we will need the following lemma due to  \citet{IndykM08} and \citet{BravermanCLMO10}.

\begin{lemma} For each 
$\{f_{ij}\}_{i,j\in [n]}$, $\expec{(\sum_{i,j} \alpha_i\beta_j f_{ij})^2}=\sum_{i,j}f_{ij}^2$ 
and $\var{(\sum_{i,j} \alpha_i\beta_j f_{ij})^2}\leq 9(\sum_{i,j}f_{ij}^2)\,^2 $.
\label{lemma:IndykM}
\end{lemma}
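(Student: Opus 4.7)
For the mean, I would first expand
$(\sum_{i,j} \alpha_i \beta_j f_{ij})^2 = \sum_{i,k,j,\ell} \alpha_i \alpha_k \beta_j \beta_\ell f_{ij} f_{k\ell}$,
then take expectations term-by-term. Since $\alpha$ and $\beta$ are independent and each has pairwise (in fact $4$-wise) independent $\pm 1$ coordinates, $\expec{\alpha_i \alpha_k} = \delta_{ik}$ and $\expec{\beta_j \beta_\ell} = \delta_{j\ell}$. Only the terms with $i=k$ and $j=\ell$ survive, each contributing $f_{ij}^2$, so the expectation equals $\sum_{ij} f_{ij}^2$.

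For the variance, let $Z = \sum_{ij} \alpha_i \beta_j f_{ij}$ and $T = \sum_{ij} f_{ij}^2$, so $\var{Z^2} = \expec{Z^4} - T^2$. Expanding $Z^4$ yields a sum over $8$-tuples of indices $(i_1,k_1,i_2,k_2;\, j_1,\ell_1,j_2,\ell_2)$. By $4$-wise independence of $\alpha$ and the fact that its coordinates are $\pm 1$, the expectation $\expec{\alpha_{i_1}\alpha_{i_2}\alpha_{i_3}\alpha_{i_4}}$ is nonzero (and equals $1$) iff each distinct value in the multiset $\{i_1,i_2,i_3,i_4\}$ appears an even number of times; the same holds for $\beta$. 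So the contributing index patterns are exactly those whose four $\alpha$-indices split into two equal pairs, and likewise for the four $\beta$-indices.

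There are three matchings on each side, giving nine combinations. One of them --- the ``straight'' matching $(i_1{=}k_1,\, i_2{=}k_2)$ coupled with $(j_1{=}\ell_1,\, j_2{=}\ell_2)$ --- produces $\sum_{i,i',j,j'} f_{ij}^2 f_{i'j'}^2 = T^2$, which cancels the $T^2$ subtracted in the variance. Each of the remaining eight combinations produces a sum of the form $\sum_{i,i'} \bigl(\sum_j f_{ij} f_{i'j}\bigr)^2$ (or a transpose thereof), and Cauchy--Schwarz applied coordinate-wise bounds each such sum by $\bigl(\sum_j f_{ij}^2\bigr)\bigl(\sum_j f_{i'j}^2\bigr)$ summed over $i,i'$, i.e.\ by $T^2$. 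Summing these eight gives $\expec{Z^4} \le 9T^2 + T^2$ before cancellation, hence $\var{Z^2} \le 9T^2$.

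The main obstacle I anticipate is the bookkeeping of degenerate cases where indices coincide within a pair (e.g.\ all four $\alpha$-indices equal), since such tuples are counted by more than one matching and could in principle inflate the constant. These contributions are nonnegative, so I would check that they are already absorbed by the Cauchy--Schwarz bound applied to the full (unrestricted) sum over pairings, which is why working with the full expansion --- rather than partitioning into ``distinct'' and ``degenerate'' cases --- yields the clean bound $9T^2$.
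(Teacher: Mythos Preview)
The paper does not give its own proof of this lemma: it is quoted as a known result of Indyk--McGregor and Braverman \etal, so there is nothing to compare your approach against on the paper's side. Your plan is the standard moment computation for bilinear $\pm 1$ sketches and is essentially correct.

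Two small corrections to the bookkeeping. First, among the nine $(M_\alpha,M_\beta)$ pairings there are \emph{three} ``aligned'' ones (where the $\alpha$-matching and the $\beta$-matching pair the same positions), each contributing exactly $T^2$; you single out only one. This does not hurt the bound, since each of the nine is $\le T^2$ anyway, but your sentence ``summing these eight gives $\expec{Z^4}\le 9T^2+T^2$'' is garbled: the matching sum is at most $9T^2$, giving $\var{Z^2}\le 8T^2$, which is of course $\le 9T^2$.

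Second, and more important, the reason the over-counting of degenerate tuples is harmless is not really ``Cauchy--Schwarz absorbs it'' but rather that every over-counted term is itself a nonnegative product of squares (e.g.\ $f_{ap}^4$ or $f_{ap}^2 f_{ap'}^2$). Since the $f_{ij}$ may be negative, you cannot simply say ``the indicator $\expec{\alpha_{i_1}\alpha_{i_2}\alpha_{i_3}\alpha_{i_4}}$ is dominated by the sum of three matching indicators'' without checking signs; the point is that the \emph{difference} between the matching sum and $\expec{Z^4}$ consists entirely of extra copies of such nonnegative terms, so the matching sum is a genuine upper bound. With that observation made explicit, your argument goes through cleanly.
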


The following theorem will be proved by considering an algorithm that computes multiple independent copies of the above sketch and combines the estimates from each.

\begin{theorem}\label{firstds}
For unit weights,
there exists an $O(\epsilon^{-2} \log \delta^{-1} \log n)$-space algorithm for
the $\disagree$ query problem. 
Each positive edge is processed in
$\tilde{O}(\epsilon^{-2})$ time, while the query time is $\tilde{O}(\epsilon^{-2} n)$.
\end{theorem}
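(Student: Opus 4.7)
The plan is to use the bilinear sketch described just before the theorem statement as a building block, and amplify accuracy and confidence by the standard mean-of-medians / Chebyshev + median trick. Setting $f_{ij} = M^G_{ij} - M^\calC_{ij}$, one checks that the random variable inside the square in step~2 is exactly $\sum_{ij}\alpha_i\beta_j f_{ij}$: the counter $Y$ contributes the $M^G_{ij}$ part, while $\sum_\ell(\sum_{i\in C_\ell}\alpha_i)(\sum_{i\in C_\ell}\beta_i)$ expands to $\sum_{ij}\alpha_i\beta_j M^\calC_{ij}$. By Lemma~\ref{lemma:IndykM}, the single-sketch estimator $X$ therefore satisfies $\expec{X}=\sum_{ij}f_{ij}^2=\disagree(G,\calC)$ and $\var{X}\leq 9\,\disagree(G,\calC)^2$.

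Next, I would run $t\cdot s$ independent copies of the sketch in parallel, where $t=\Theta(\epsilon^{-2})$ and $s=\Theta(\log\delta^{-1})$, using independent pairs $(\alpha^{(r)},\beta^{(r)})$. I would group the copies into $s$ blocks of $t$, average $X$ within each block (so each block's estimator has expectation $\disagree(G,\calC)$ and variance at most $9\,\disagree(G,\calC)^2/t$), apply Chebyshev to conclude that each block is within a $(1\pm\epsilon)$ factor with probability at least $2/3$, and finally output the median across the $s$ blocks. A Chernoff bound on the number of "good" blocks yields the $(1\pm\epsilon)$-guarantee with probability at least $1-\delta$.

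For the resource analysis: each copy stores a single counter of $O(\log n)$ bits (the weights are $\pm 1$, so $|Y|\leq n^2$), giving $O(\epsilon^{-2}\log\delta^{-1}\log n)$ space for the counters, and the $4$-wise independent hash families defining the $\alpha^{(r)},\beta^{(r)}$ add only a $\tilde{O}(1)$ overhead per copy. Updating a single positive edge touches each counter once, requires evaluating $\alpha^{(r)}_i\beta^{(r)}_j$ in $\tilde{O}(1)$ time from the hash, so the per-edge cost is $\tilde{O}(\epsilon^{-2})$ (the $\log\delta^{-1}$ factor being absorbed in the $\tilde O$). For the query, observe that each $\sum_{i\in C_\ell}\alpha^{(r)}_i$ and $\sum_{i\in C_\ell}\beta^{(r)}_i$ can be computed by a single scan of the vertex-to-cluster map in $O(n)$ time per copy, since the clusters partition $[n]$; summing the products over $\ell$ is then linear in the number of clusters. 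Multiplying by the number of copies gives $\tilde{O}(\epsilon^{-2} n)$ total query time.

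The only subtlety worth emphasizing is the variance bound: it is proportional to $\disagree(G,\calC)^2$ rather than to a worst-case quantity, which is precisely what enables a purely \emph{relative} $(1\pm\epsilon)$ approximation via Chebyshev averaging over $\Theta(\epsilon^{-2})$ copies. Everything else is a routine assembly of known primitives, so I expect no genuine obstacle beyond bookkeeping the constants in the sketch-count $ts$ and verifying that $\alpha,\beta$ being only $4$-wise independent is enough to invoke Lemma~\ref{lemma:IndykM} (it is, since the second-moment computation in its proof uses only fourth moments of $\alpha_i\beta_j$).
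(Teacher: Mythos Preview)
Your proposal is correct and follows essentially the same route as the paper: identify the squared bilinear form with $\disagree(G,\calC)$ via $f_{ij}=M^G_{ij}-M^\calC_{ij}$, invoke Lemma~\ref{lemma:IndykM} for the mean and variance, and then apply the median-of-means amplification (Chebyshev within $\Theta(\epsilon^{-2})$-sized blocks, Chernoff over $\Theta(\log\delta^{-1})$ blocks) together with the same space and time accounting. One cosmetic slip: you wrote ``mean-of-medians'' but correctly described and used median-of-means.
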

\begin{proof}
We first observe that, given $Y$, the time to compute $X$ 
is~$\tilde{O}(n)$. This follows because for a cluster $C_\ell \in \C$, on $n_\ell$~nodes, we can compute $\sum_{i\in C_\ell} \alpha_i$ and $\sum_{i\in C_\ell} \beta_i$ in $\tilde{O}(n_\ell)$ time. Hence the total query timeis $\tilde{O}(\sum_\ell n_\ell)=\tilde{O}(n)$ as claimed.  

We next argue that repeating the above scheme a small number of times
in parallel yields a good estimate of $\disagree(G,\calC)$.  To do this, note that
\[
X
=\left (\sum_{ij\in E^+} \alpha_i \beta_j - \sum_{\ell} \left (\sum_{i\in C_\ell} \alpha_i\right ) \left ( \sum_{i\in C_\ell} \beta_i\right )  \right )^2
=\left (\sum_{ij} \alpha_i \beta_j (M_{ij}^G-M_{ij}^\calC)  \right )^2
\ .
\]
We then apply  Lemma~\ref{lemma:IndykM} to $f_{ij}=M_{ij}^G-M_{ij}^\calC$ and deduce that 
\[\expec{X}=\disagree(G,\calC) ~~\mbox{ and }~~ \var{X}\leq 9 (\disagree(G,\calC))^2 \ .\]
Hence, running $O(\epsilon^{-2} \log \delta^{-1})$ parallel repetitions
of the scheme and averaging the results
appropriately 
yields a $(1\pm \epsilon)$-approximation for $\disagree(G,\calC)$  with probability at
least~$1-\delta$. Specifically, we partition the estimates into $O(\log \delta^{-1})$
groups, each of size $O(\epsilon^{-2})$.
We can ensure that with probability at least $2/3$, the mean of each group is within
a $1\pm \epsilon$ factor by an application of the Chebyshev bound; we then argue using the Chernoff bound that  the median of the resulting
group estimates is a $1\pm \epsilon$ approximation with probability at least $1-\delta$.
\end{proof}

\paragraph{Remark.} We note that 
by setting $\delta=1/n^n$ in the above theorem, it follows that we may estimate $\disagree(G,\calC)$ for all partitions $\calC$ using $\tilde{O}(\epsilon^{-2} n)$
space.
Hence,
given exponential time,
we can also $(1+\epsilon)$-approximate $\mindisagree(G)$
While this is near-optimal in terms of space,
in this paper we focus on polynomial-time algorithms.

\paragraph{Application to Cluster Repair.}
Consider the Cluster Repair problem~\citep{GrammGHN05},
in which, for some  constant~$t$,
we are promised $\mindisagree(G)\leq t$ and want to find the clustering $\argmin_{\C} \disagree(G,\C)$. 

We first argue that, given spanning forest $F$ of $(V,E^+)$ we can limit our attention to checking a polynomial number of possible clusterings. The spanning forest $F$ can be constructed using a $\tilde{O}(n)$-space  algorithm in the dynamic graph stream model~\citep*{AhnGM12}. Let $\calC_F$ be the clustering corresponding to the connected components of $E^+$.
Let $F_1, F_2,\ldots , F_p$ be the forests that can be generated by adding $t_1$ and then removing $t_2$  edges from $F$ where $t_1+t_2\leq t$. Let $\calC_{F_i}$ be the node-partition corresponding to the connected components of~$F_i$.

\begin{lemma}\label{lem:partitions}
The optimal partition of~$G$ is $\calC_{F_i}$ for some $1\leq i\leq p$. Furthermore, $p= O(n^{2t})$.
\end{lemma}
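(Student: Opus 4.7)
My approach splits naturally into the counting bound on $p$ and the existence of an appropriate $F_i$ realizing the optimal partition.

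For the bound $p=O(n^{2t})$, I would enumerate directly. For each pair $(t_1,t_2)$ of non-negative integers with $t_1+t_2\le t$, the added-edge set $A\subseteq\binom{V}{2}$ can be chosen in at most $\binom{\binom{n}{2}}{t_1}\le n^{2t_1}$ ways, and the removed-edge set $B\subseteq F\cup A$ of size $t_2$ in at most $\binom{n-1+t_1}{t_2}\le n^{t_2}$ ways (since $|F|\le n-1$ and $t_1\le t\le n$). Summing over the $O(t^2)$ choices of $(t_1,t_2)$ and observing that $2t_1+t_2$ is maximized subject to $t_1+t_2\le t$ at $(t,0)$ gives $p=O(n^{2t})$.

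For the existence of an $F_i$ whose components form an optimal partition $\calC^*$, I would first establish a key structural fact by a local-swap argument: every cluster of $\calC^*$ lies within a single connected component of $(V,E^+)$. If some cluster $C$ spanned two $E^+$-components $X,X'$, then the $|C\cap X|\cdot|C\cap X'|\ge 1$ edges between them would all be negative (by definition of the components) and within-cluster, hence disagreements; splitting $C$ along this cut turns them into agreeing cross-cluster edges without creating any new disagreements, contradicting optimality of $\calC^*$. The same swap argument applied internally shows that each cluster is connected using only positive edges within itself.

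Given this, $\calC^*$ is a refinement of $\calC_F$, and I construct $F^*$ explicitly: within each $\calC_F$-component $X$ with spanning tree $T_X\subseteq F$ and $\calC^*$-induced partition $X_1,\ldots,X_{r_X}$, keep every $T_X$-edge whose endpoints lie in the same $X_j$, and for each $X_j$ whose restriction of $T_X$ yields $c_j>1$ pieces, add $c_j-1$ positive edges internal to $X_j$ to complete a spanning tree (such edges exist by the internal-connectivity fact). By construction, the connected components of $F^*$ are exactly the clusters of $\calC^*$. The removes are precisely the cross-cluster edges of $F$, each a positive cross-cluster edge of $G$ and hence a disagreement, so the number of removes $x$ is at most $\disagree(G,\calC^*)\le t$. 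The adds total $\sum_X(e_X-r_X+1)$, where $e_X$ is the cross-count of $T_X$ (using the identity $\sum_j c_j^X=e_X+1$); applying the no-merges consequence $\sum_X r_X=m$ and $\sum_X 1=c_F$ reduces the adds to $x-\mathrm{splits}$, which is itself bounded since each split is a positive cross-cluster disagreement.

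The main obstacle is the tight bookkeeping needed so that $|F^*\triangle F|$ matches the constraint $t_1+t_2\le t$: without the no-merges fact, $F^*$ might require many bridging adds (using negative edges) to join disparate $\calC_F$-components, inflating the operation count. Using the structural lemma to rule this out, and carefully balancing the add and remove counts against $\mindisagree(G)$, places $F^*$ within the enumeration as some $F_i$, giving $\calC^*=\calC_{F_i}$.
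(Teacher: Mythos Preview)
Your counting argument for $p=O(n^{2t})$ is correct and matches the paper's.

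For existence, your route is considerably more elaborate than the paper's. The paper does not prove or use your structural fact that optimal clusters are internally $E^+$-connected. Instead it writes the set of within-cluster pairs as $E_*^+=(E^+\cup A)\setminus D$, where $A$ is the set of within-cluster negative edges and $D$ the set of cross-cluster positive edges, so that $|A|+|D|=\disagree(G,\calC^*)\le t$; it then adds at most $|A|$ edges to $F$ to obtain a spanning forest $F'$ of $E^+\cup A$ and deletes at most $|D|$ edges from $F'$ to reach the cluster components. The key contrast is that the paper's added edges are drawn from $A$ (negative edges, hence disagreements) and are charged directly against the budget $t$, whereas your added edges are positive within-cluster edges (agreements), which admit no such charge.

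This is exactly where your bookkeeping fails to close. In your notation the total edit count is $2R-(c^*-c_F)$ with $R$ the number of cross-cluster $F$-edges; you only establish $R\le |D|\le t$, which gives at best $2t$, not $t$. A concrete failure: take $V=\{1,2,3,4\}$, $E^+=\{12,23,34,14\}$ (a $4$-cycle), $E^-=\{13,24\}$, and $F=\{12,23,34\}$. Here $\mindisagree(G)=2$, so set $t=2$. The partition $\calC^*=\{\{1,4\},\{2,3\}\}$ is optimal and satisfies your structural lemma, but your construction removes $12,34$ and must add $14$, using three operations; indeed no forest within two edits of this particular $F$ has components $\{1,4\},\{2,3\}$. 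Thus an argument that fixes an \emph{arbitrary} optimum $\calC^*$ and attempts to exhibit it as some $\calC_{F_i}$ cannot succeed as written; you would need either to allow negative added edges (as in the paper's $A$-then-$D$ scheme) or to argue separately that among all optimal partitions at least one is reachable from $F$ within $t$ edits.
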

\begin{proof}
Let $E_*^+$ be the set of edges in the optimal clustering that are between nodes in the same cluster and suppose that let $E_*^+=(E^+ \cup A) \setminus D$, i.e., $A$ is the set of positive edges that need to be added and $D$ is the set of edges that need to be deleted to transform $E^+$ into a collection of node-disjoint clusters. Since $\mindisagree(G)\leq t$, we know $|A|+|D|\leq t$. It is possible to transform $F$ into a spanning forest $F'$ of $E^+ \cup A$ by adding at most $|A|$ edges. It is then possible to generate a spanning forest of $F''$ with the same connected components as  $E_*^+=(E^+ \cup A) \setminus D$ by deleting at most $|D|$ edges from $F'$. Hence, one of the forests $F_i$ considered has the same connected components at $E_*^+$.

To bound $p$, we proceed as follows. There are less than $n^{2t_1}$ different forests that can result from adding at most $t_1$ edges to $F$. For each, there are at most $n^{t_2}$ forests that can be generated by deleting at most $t_2$ edges from the, at most $n-1$, edges in~$F'$.  Hence, $p< \sum_{t_1,t_2: 0 \leq t_1+t_2\leq t} n^{2t_1+t_2} < t^2n^{2t}$.
\end{proof}

The procedure is then to take advantage of this bounded number of
partitions by computing each $\calC_{F_i}$ in turn, and estimating
$\disagree(G,\calC_{F_i})$.
We report the $\calC_{F_i}$ that minimizes the (estimated) repair cost. 
Consequently,  setting $\delta=1/(p \poly(n))$  in Theorem~\ref{firstds} yields
the following theorem.

\begin{theorem}\label{thm:smalldisagree}
For a unit-weight graph~$G$ with $\mindisagree(G)\leq t$ where $t=O(1)$,
there exists a
polynomial-time data-stream algorithm using $\tilde{O}(n + \epsilon^{-2} t )$
space that with high probability $1+\epsilon$ approximates $\mindisagree(G)$.
\end{theorem}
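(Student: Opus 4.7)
The plan is to combine the bilinear sketch of Theorem~\ref{firstds} with the enumeration strategy justified by Lemma~\ref{lem:partitions}, using a streaming spanning-forest computation to provide the ``skeleton'' around which candidate clusterings are enumerated.

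In a single pass I would do two things in parallel: (a) compute a spanning forest $F$ of the positive-edge subgraph $(V,E^+)$ using the $\tilde{O}(n)$-space dynamic graph-stream spanning-forest algorithm of \citet{AhnGM12}, and (b) maintain $r = O(\epsilon^{-2}\log\delta^{-1}\log n)$ parallel copies of the bilinear sketch of Section~\ref{sec:l2combrec}, where $\delta$ is chosen below. Both data structures can be updated under edge insertions and deletions. After the stream, I would enumerate the $p = O(n^{2t})$ forests $F_1,\ldots,F_p$ described in Lemma~\ref{lem:partitions}, extract the induced clusterings $\calC_{F_i}$, and for each one query the bilinear sketches to obtain an estimate $\widehat{D}_i$ of $\disagree(G,\calC_{F_i})$. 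The algorithm outputs $\calC_{F_{i^\star}}$ with $i^\star = \argmin_i \widehat{D}_i$.

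For correctness, set $\delta = 1/(p\cdot n^c)$ for a suitable constant $c$, so a union bound ensures every estimate is a $(1\pm\epsilon)$ approximation of its target with high probability. Since $t=O(1)$, we have $p = \poly(n)$ and so $\log\delta^{-1} = O(\log n)$, whence by Theorem~\ref{firstds} the sketch state occupies $\tilde{O}(\epsilon^{-2} t)$ space; adding the $\tilde{O}(n)$ bits for $F$ gives the claimed $\tilde{O}(n+\epsilon^{-2}t)$ bound. By Lemma~\ref{lem:partitions} the optimal partition equals $\calC_{F_{j}}$ for some $j$, so $\widehat{D}_{i^\star} \le \widehat{D}_{j} \le (1+\epsilon)\mindisagree(G)$; on the other hand $\widehat{D}_{i^\star} \ge (1-\epsilon)\disagree(G,\calC_{F_{i^\star}})$, so the true cost of the returned partition is at most $\tfrac{1+\epsilon}{1-\epsilon}\mindisagree(G)$, which is $(1+O(\epsilon))$ after rescaling $\epsilon$. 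Running-time: $F$ has at most $n-1$ edges, enumerating the $p=\poly(n)$ forests $F_i$ takes polynomial time, and each query of the sketch costs $\tilde{O}(\epsilon^{-2} n)$ by Theorem~\ref{firstds}, so the total post-processing time is polynomial.

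The main obstacle is controlling the union bound, since the sketch from Theorem~\ref{firstds} is an \emph{oblivious} linear sketch and can only be queried on clusterings chosen independently of its randomness; however the candidate set $\{\calC_{F_i}\}$ is determined solely by the forest $F$, which is itself fixed at the end of the stream and independent of the sketch coins, so this independence holds. Given that, everything else reduces to checking that $p = O(n^{2t})$ is polynomial, which is precisely the place where the assumption $t = O(1)$ is used; if $t$ were permitted to grow then $\log p$ would dominate and the space bound would degrade accordingly.
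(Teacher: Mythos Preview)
Your proposal is correct and follows essentially the same approach as the paper: construct a spanning forest of $G^+$ via the $\tilde{O}(n)$-space dynamic-stream algorithm, enumerate the $p=O(n^{2t})$ candidate clusterings of Lemma~\ref{lem:partitions}, and evaluate each with the bilinear sketch of Theorem~\ref{firstds} with failure probability $\delta = 1/(p\cdot\poly(n))$. Your added remark that the candidate set $\{\calC_{F_i}\}$ depends only on $F$ and not on the sketch randomness is a nice point the paper leaves implicit; one small wording slip is that $\log\delta^{-1}=O(t\log n)$ (not $O(\log n)$ unless you already absorb $t$ into the constant), which is precisely what yields the $\tilde{O}(\epsilon^{-2}t)$ sketch space.
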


\subsection{Second Data Structure: Sparsification}
\label{sec:2ds}

The next data structure is based on graph sparsification and works for arbitrarily weighted graphs.
A sparsification of graph~$G$ is a weighted graph~$H$
such that the weight of every cut in~$H$ is within a $1+\epsilon$
factor of the weight of the corresponding cut in~$G$. A celebrated result of \citet{BenczurK96} shows that it is always possible to ensure the the number of edges in~$H$ is 
$\tilde{O}(n\epsilon^{-2})$. 
A subsequent result shows that this can be constructed in the
dynamic graph stream model.

\begin{theorem}[\citet{AhnGM12b,GoelKP12}]
\label{thm:sparse}
 There is a single-pass algorithm that returns a  sparsification using space~$\tilde{O}(n\epsilon^{-2})$ and time~$\tilde{O}(m)$.
\end{theorem}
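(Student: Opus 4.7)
My plan is to combine the Benczur--Karger cut-sparsification framework with dynamic graph sketching. Benczur and Karger showed that, for a suitable notion of edge strength $\lambda_e$, sampling each edge $e$ independently with probability $p_e = \Theta(\epsilon^{-2}\log n/\lambda_e)$ and reweighting survivors by $1/p_e$ yields a $(1+\epsilon)$-cut sparsifier with $\tilde{O}(n\epsilon^{-2})$ edges. To implement such sampling under arbitrary insertions and deletions, I would use hash-based subsampling: assign each possible edge identity $e$ a $\polylog(n)$-wise independent hash value $h_e \in [0,1]$, and for each geometric rate $p_i = 2^{-i}$ with $i = 0, 1, \ldots, O(\log n)$, feed into a separate linear sketch only those stream updates whose $h_e \le p_i$. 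Because the hash depends only on the edge identity, the resulting substreams are oblivious to the ordering of insertions and deletions, and each sketch is maintained with a simple increment/decrement per stream element in $\tilde{O}(1)$ time.

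To approximate $\lambda_e$ in small space, I would use the spanning-forest primitive that underlies the dynamic connectivity sketch of \citet{AhnGM12}: for every vertex $v$, maintain a linear $\ell_0$-sampling sketch of the signed edge-vertex incidence row using $\tilde{O}(1)$ bits, so that Boruvka-style spanning-forest extraction can be emulated in $O(\log n)$ rounds and yield a spanning forest in $\tilde{O}(n)$ total space. On each subsampled copy at rate $p_i$, I would iteratively extract nested spanning forests $F_1, F_2, \ldots$ by repeatedly recovering a forest and subtracting it from the sketch. Classical Benczur--Karger results on strength certify that edges first recovered at round $j$ of the rate-$p_i$ sketch behave as though sampled at a rate consistent with strength $\Theta(j/p_i)$, which is precisely the quantity needed to assign the reweighting factor $1/p_i$ to each recovered edge. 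A union bound over the $O(\log n)$ geometric scales then guarantees that every edge receives an appropriate reweighted copy in the output.

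The main obstacle is composing the hashed substreams, the $\ell_0$-sampling sketches, and the reweighting without ever materializing the graph and without committing to the $\lambda_e$ values in advance. Linearity of the sketches dissolves this: insertions and deletions commute within each substream, peeling a forest reduces to subtracting its indicator from the sketch, and the rate $p_i$ at which an edge is first recovered is a deterministic function of the maintained state, so the appropriate weight $1/p_i$ can be emitted on the fly. Combined with the standard Benczur--Karger concentration inequality for cut weights and the fact that only $\tilde{O}(n\epsilon^{-2})$ edges survive across all scales, this yields the claimed single-pass algorithm with $\tilde{O}(n\epsilon^{-2})$ space and $\tilde{O}(m)$ processing time.
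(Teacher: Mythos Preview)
The paper does not prove this theorem: it is stated as a known result, attributed to \citet{AhnGM12b,GoelKP12}, and used as a black box thereafter. There is therefore no ``paper's own proof'' to compare against.

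That said, your sketch is broadly faithful to the approach in the cited works---hash-based geometric subsampling so that sampling decisions depend only on edge identity, linear $\ell_0$-sampling sketches of the signed incidence matrix to support Bor\r{u}vka-style spanning-forest extraction under insertions and deletions, and Bencz\'ur--Karger-style reweighting. One point is looser than it should be: the claim that the round index $j$ at which an edge first appears in the peeled forests at rate $p_i$ certifies strength $\Theta(j/p_i)$ is not how the cited constructions actually proceed. They instead argue that a union of $O(\epsilon^{-2}\log n)$ edge-disjoint spanning forests at each sampling scale contains a $k$-connectivity certificate for the appropriate $k$, and then invoke the Bencz\'ur--Karger/Fung--Hariharan--Harvey--Panigrahi concentration bounds on the resulting skeleton; per-edge strength is never explicitly estimated. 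This is a real technical distinction, because correctness of the output sparsifier must hold even though the sketch cannot recover individual $\lambda_e$ values. Since the paper simply imports the theorem, however, none of this affects the present manuscript.
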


The next lemma establishes that a graph sparsifier can be used to approximate $\agree$ and $\disagree$ of a clustering.

\begin{lemma}\label{lem:preservation}
  Let~$H^+$ and~$H^-$ be sparsifications
  of~$G^+=(V,E^+)$ and~$G^-=(V,E^-)$ such that all cuts are preserved within factor $(1\pm\epsilon/6)$, and let $H=H^+\cup H^-$.
For every clustering~$\calC$, 
\[\agree(G,\calC)=(1\pm \epsilon/2) \agree(H,\calC)\pm \epsilon w(E^+)/2
\]
and
\[\disagree(G,\calC)=(1\pm \epsilon/2)
\disagree(H,\calC)\pm \epsilon w(E^-)/2 \,. \]
Furthermore, $\maxagree(G)=(1\pm \epsilon)\maxagree(H)$.
\end{lemma}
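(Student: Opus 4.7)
The plan is to decompose $\agree(G,\calC)$ and $\disagree(G,\calC)$ into a total-weight term and cut terms, exploit the cut-preservation property of the sparsifier, and then recombine. Writing $\cut_{G^+}(\calC)$ for the total weight of positive edges whose endpoints lie in different clusters of $\calC$, and similarly for $G^-$, the identities
\[
\agree(G,\calC) = w(E^+) - \cut_{G^+}(\calC) + \cut_{G^-}(\calC), \qquad \disagree(G,\calC) = \cut_{G^+}(\calC) + w(E^-) - \cut_{G^-}(\calC),
\]
hold verbatim for $H^+,H^-$ in place of $G^+,G^-$.

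The first main step is to promote the two-way cut preservation to both multi-way cuts and total weights. Since $\cut_{G^\pm}(\calC) = \tfrac{1}{2}\sum_\ell \cut_{G^\pm}(C_\ell, V\setminus C_\ell)$ is a non-negative sum of two-way cuts and each is preserved to within $1\pm\epsilon/6$, the multi-way cut is preserved to the same factor. Applied to the singleton partition (where $\sum_v \cut_{G^\pm}(\{v\})=2w(E^\pm)$), the same observation gives $w(H^\pm) = (1\pm\epsilon/6)\,w(E^\pm)$.

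Next, I would bound the difference term by term:
\[
\bigl|\agree(H,\calC)-\agree(G,\calC)\bigr| \;\le\; \tfrac{\epsilon}{6}\bigl(w(E^+) + \cut_{G^+}(\calC) + \cut_{G^-}(\calC)\bigr) \;\le\; \tfrac{\epsilon}{3} w(E^+) + \tfrac{\epsilon}{6}\,\agree(G,\calC),
\]
using $\cut_{G^+}(\calC)\le w(E^+)$ for the first bound and $\cut_{G^-}(\calC)\le \agree(G,\calC)$ for the second. Rearranging (and absorbing the $(1\pm\epsilon/6)^{-1}$ factor into a slightly larger multiplicative constant) yields $\agree(G,\calC) = (1\pm\epsilon/2)\agree(H,\calC)\pm (\epsilon/2) w(E^+)$. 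The symmetric calculation, swapping the roles of $E^+$ and $E^-$ and of cut versus uncut edges, gives the $\disagree$ statement.

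For the $\maxagree$ claim, I would apply the pointwise bound at the optimum clusterings of $G$ and of $H$ in turn. Taking the trivial one-cluster partition already gives $\maxagree(H)\ge w(H^+) \ge (1-\epsilon/6)w(E^+)$, so the additive term $(\epsilon/2)w(E^+)$ is itself at most a constant multiple of $\epsilon\cdot\maxagree(H)$, and can be absorbed to yield the clean multiplicative bound $\maxagree(G) = (1\pm\epsilon)\maxagree(H)$. There is no real conceptual obstacle: the proof is driven by the two observations that sparsifiers preserve multi-way cuts and the total edge weight; the main effort is simply bookkeeping the constants so the multiplicative and additive errors combine to the claimed $\epsilon/2$ and $\epsilon$.
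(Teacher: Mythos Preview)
Your proposal is correct and follows essentially the same approach as the paper. Both proofs rest on the two observations you identify: sparsifiers preserve multi-way cuts (as sums of two-way cuts) and total edge weight (via singleton cuts), and both exploit that the intra-cluster positive weight equals total positive weight minus the positive multi-way cut. The paper carries this out per cluster, writing $w(E^+\cap C\times C)$ as a difference of degree sums and the cut $(C,V\setminus C)$, while you do the equivalent global decomposition $\agree = w(E^+) - \cut_{G^+}(\calC) + \cut_{G^-}(\calC)$; the bookkeeping of constants and the $\maxagree$ step via the all-in-one-cluster bound are likewise the same in spirit.
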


\begin{proof}
The proofs for $\agree$ and $\disagree$ are symmetric,
so we restrict our attention to $\agree$. Let $\epsilon'=\epsilon/6$. The weight of edges in $E^-$ that are cut is within a $1+\epsilon'$ factor in the sparsifier. 
Consider an arbitrary cluster $C\in \cal C$, then letting~$w'(\cdot)$ represent the
weight in the sparsifier,
\begin{align*} w(uv\in E^+:u,v\in C) 
&=
w(uv\in E^+:u\in C,v\in V)
-w(uv\in E^+:u\in C,v\not \in C)\\
&=
\sum_{u\in C} w(uv \in E^+:v\in V)
-\sum_{u\in C} w(uv\in E^+:v\not \in C)\\
&=
(1\pm \epsilon')\sum_{u\in C} w'(uv \in E^+:v\in V) 
-(1\pm \epsilon') \sum_{u\in C} w'(uv\in E^+: v\not \in C)\\
&= 
w'(uv\in E^+:u,v\in C)
\pm 2 \epsilon' w'(uv\in E^+:u\in C,v\in V)\,,
\end{align*}
where the third line follows because, for each $u\in \calC$,
the weights of cuts $(\{u\},V\setminus \{u\})$ and $(C,V\setminus C)$ 
are approximately preserved.
Summing over all cluster $C \in \cal C$s, the total additive error is
\[2\epsilon' w'(E^+) \leq 2\epsilon' (1+\epsilon') w(E^+)\leq \epsilon
w(E^+)/2 \ ,\] (assuming $\epsilon \le 1$), as required.

The last part of theorem follows because $w(E^+)\leq \maxagree(G)$ by considering the trivial all-in-one-cluster partition. 
\end{proof}

\paragraph{Application to \maxagree\ with Bounded Weights.} 
In Section \ref{sec:lpsdp},
based on the sparsification construction,
we develop a $\poly(n)$-time streaming algorithm
that returns a~$0.766$-approximation for $\maxagree$
when~$G$ has arbitrary weights.
However, in the case of unit weights, a RAM-model  PTAS for $\maxagree$ is known \citep{GiotisG06,BansalBC04}. It would be unfortunate if, by approximating the unit-weight graph by a weighted sparsification, we lost the ability to return a $1\pm \epsilon$ approximation in polynomial time.

We resolve this by emulating an algorithm by \citet{GiotisG06}
for $\maxagree_k$ using a single pass over the stream\footnote{Note
  $\maxagree_k(G)\geq (1-\epsilon)\maxagree(G)$ for $k=O(1/\epsilon)$
  \citep{BansalBC04}.}.
Their algorithm is as follows:

\begin{enumerate}
\item Let $\{V^i\}_{i\in [m]}$ be an arbitrary node-partition,
where $m=\roundup{4/\epsilon}$ and $\rounddown{n/m}\leq |V^i|\leq \roundup{n/m}$. 
\item For each $j\in [m]$, let $S^j$ be a random sample of
$r=\poly(1/\epsilon,k,\log 1/\delta)$ nodes in $V\setminus V_j$.
\item For all possible $k$-partition of each of $S^1, \ldots, S^m$ : 
\begin{itemize}
\item For each $j$, let $\{S_i^j\}_{i\in k}$ be the partition of~$S^j$
\item Compute and record the cost of the clustering in which $v\in V^j$
is assigned to the~$i$th cluster, where
\[
i=\argmax_{i}\left ( \sum_{s\in S^j_i:\ sv\in E^+} w_{sv}+ \sum_{s\not \in
S^j_i:\ sv\in E^-} |w_{sv}| \right )\,.
\]
\end{itemize}
\item For all the clusterings generated, return the clustering $\calC$ that maximizes $\agree(G,\calC)$.
\end{enumerate}

\citet{GiotisG06} prove that the above algorithm achieves a $1+\epsilon$ approximation factor with high probability if all weights are $\{-1,+1\}$. We explain in Section \ref{sec:ggext} that their analysis actually extends to the case of bounded weights. The more important observation is that we can simulate this algorithm in conjunction with a graph sparsifier. Specifically, the sets $V_1, \ldots, V_m$ and $S_1, \ldots, S_m$ can be determined before the stream is observed. To emulate step 3, we just need to collect the $rnm$ edges incident to each $S_i$ during the stream. If we simultaneously construct a sparsifier during the stream we can evaluate all of the possible clusterings that arise. This leads to the following theorem.

\begin{theorem}
\label{thm:ptasboundedagree}
For bounded-weight inputs,
there exists a polynomial-time semi-streaming algorithm that,
with high probability,
$(1+\epsilon)$-approximates $\maxagree(G)$.
\end{theorem}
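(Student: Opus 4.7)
The plan is to emulate the \citet{GiotisG06} algorithm for $\maxagree_k$ in a single pass, with $k=\roundup{4/\epsilon}$, and to evaluate the polynomially many candidate clusterings it produces using the sparsification data structure of Section~\ref{sec:2ds}. The choice of $k$ ensures, via~\citet{BansalBC04}, that $\maxagree_k(G)\geq(1-\epsilon/3)\maxagree(G)$, and the analysis of GG (extended to bounded weights in Section~\ref{sec:ggext}) yields the same $(1-\epsilon/3)$ guarantee on $\maxagree_k(G)$ when weights lie in $[-w_*,-1]\cup[1,w_*]$ for constant $w_*$.

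Before reading the stream I would: (i) fix the arbitrary partition $\{V^j\}_{j\in[m]}$ with $m=\roundup{4/\epsilon}$; and (ii) sample each $S^j\subseteq V\setminus V^j$ of size $r=\poly(1/\epsilon,k,\log(1/\delta))$ uniformly. These choices are data-independent and use only $\tilde{O}(n)$ bits of randomness. During the stream I would maintain in parallel two structures: (a)~every edge with at least one endpoint in $S:=\bigcup_j S^j$, which fits in $\tilde{O}(n\cdot\poly(1/\epsilon))$ bits since $|S|$ is independent of~$n$; and (b)~sparsifications $H^+$ of $G^+$ and $H^-$ of $G^-$ preserving all cuts up to a $(1\pm\epsilon/12)$ factor, which by Theorem~\ref{thm:sparse} fit in $\tilde{O}(n\epsilon^{-2})$ bits.

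Post-stream, I enumerate every $k$-partition $\{S^j_i\}_{i\in[k]}$ of each~$S^j$. For each choice, the GG assignment rule for a node $v\in V^j$ depends only on edges between $v$ and $S^j$, which are in (a), so the induced clustering $\calC$ can be constructed explicitly. The number of candidate clusterings is $k^{rm}$, which is polynomial (constant, for fixed $\epsilon$). For each candidate $\calC$, I estimate $\agree(G,\calC)$ by $\agree(H,\calC)$ using $H=H^+\cup H^-$ via Lemma~\ref{lem:preservation}, and return the candidate whose estimate is largest. Total running time is polynomial since each candidate clustering is built in $\tilde{O}(nrm)$ time and scored in $\tilde{O}(n\epsilon^{-2})$ time.

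For correctness, the extended GG analysis guarantees that \emph{some} enumerated $\calC^*$ satisfies $\agree(G,\calC^*)\geq(1-\epsilon/3)\maxagree_k(G)\geq(1-2\epsilon/3)\maxagree(G)$ with high probability; then Lemma~\ref{lem:preservation} produces estimates that are within $(1\pm\epsilon/12)\agree(H,\calC)\pm(\epsilon/12)w(E^+)$, and since $w(E^+)\leq\maxagree(G)$ by the trivial all-in-one-cluster partition, this becomes a purely multiplicative $(1\pm O(\epsilon))$ error relative to $\maxagree(G)$. Hence the estimate-maximizing candidate achieves true agreement at least $(1-O(\epsilon))\maxagree(G)$, and rescaling $\epsilon$ gives the desired bound. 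The main obstacle is exactly this error-composition step: the sparsifier only preserves $\agree$ up to an additive $O(\epsilon)\,w(E^+)$ term, and the argument goes through only because $w(E^+)\leq\maxagree(G)$, converting the additive slack into a harmless multiplicative one; the bounded-weight extension of GG, deferred to Section~\ref{sec:ggext}, is the other nontrivial ingredient.
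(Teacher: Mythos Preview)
Your proposal is correct and matches the paper's approach exactly: precompute the $\{V^j\}$ and $\{S^j\}$, store all edges incident to the sampled nodes together with a sparsifier of $G^+$ and $G^-$ in one pass, then enumerate the $k^{rm}$ candidate clusterings and score them via Lemma~\ref{lem:preservation}. One minor slip: since high probability requires $\delta=1/\poly(n)$, the sample size $r$ picks up a $\log n$ factor, so $|S|$ and the number of candidates $k^{rm}$ are polylogarithmic and $n^{O_\epsilon(1)}$ respectively, not constant---but this does not affect the semi-streaming space bound or the polynomial running time.
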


\subsection{Third Data Structure: Node-Based Sketch}
\label{sec:3ds}
In this section, we develop a data structure that supports queries to
$\disagree(G,\C)$ for arbitrarily {\em weighted} graphs when~$\C$ is
restricted to be a 2-partition.
For each node~$i$, define the vector, $a^i\in {\mathbb R}^{{n \choose 2}}$,
indexed over the ${n\choose 2}$ edges,
where the only non-zero entries are:

\[a^i_{ij}=
\begin{cases}
w_{ij}/2 & \mbox{if $ij\in E^-$} \\
w_{ij}/2 & \mbox{if $ij\in E^+$, $i<j$} \\
-w_{ij}/2 & \mbox{if $ij\in E^+$, $i>j$} \\
\end{cases} 
\]

\begin{lemma}\label{lem:2nodepres}
For a two-partition $\calC=\{C_1,C_2\}$,
$\disagree(G,\calC)=
\|\sum_{\ell\in C_1} a^\ell - \sum_{\ell\in C_2} a^\ell\|_1$.
\end{lemma}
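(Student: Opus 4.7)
The plan is to verify the identity edge by edge, by computing, for each pair $uv$, the coefficient of the edge-coordinate $uv$ in the vector $b := \sum_{\ell\in C_1} a^\ell - \sum_{\ell\in C_2} a^\ell$, and showing that $|b_{uv}|$ equals $|w_{uv}|$ exactly when $uv$ disagrees with $\calC$ and equals $0$ otherwise. Summing then yields $\|b\|_1 = \disagree(G,\calC)$.

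First I would fix an edge $uv$ with (without loss of generality) $u<v$. Only the two terms $a^u_{uv}$ and $a^v_{uv}$ contribute to $b_{uv}$, so the contribution is $\sigma_u a^u_{uv} + \sigma_v a^v_{uv}$, where $\sigma_\ell = +1$ if $\ell\in C_1$ and $\sigma_\ell = -1$ if $\ell\in C_2$. From the definition of $a^i$, we have $a^u_{uv} = w_{uv}/2$ in both the positive and negative cases (since $u<v$), while $a^v_{uv} = +w_{uv}/2$ if $uv\in E^-$ and $a^v_{uv} = -w_{uv}/2$ if $uv\in E^+$ (since $v>u$ triggers the third case when the edge is positive).

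Next I would split into two cases based on whether the edge is positive or negative. If $uv\in E^+$, then $b_{uv} = (w_{uv}/2)(\sigma_u - \sigma_v)$, which is $0$ when $u,v$ lie in the same cluster and $\pm w_{uv}$ when they are separated; so $|b_{uv}| = |w_{uv}|\cdot I[u,v \text{ separated}]$, which is exactly $|w_{uv}|$ times the indicator that $uv$ disagrees with $\calC$. If $uv\in E^-$, then $b_{uv} = (w_{uv}/2)(\sigma_u + \sigma_v)$, which is $0$ when $u,v$ are in different clusters and $\pm w_{uv}$ when they are together; again $|b_{uv}| = |w_{uv}|\cdot I[u,v \text{ together}]$, matching the disagreement indicator in the negative case.

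There isn't really a hard step: the only subtlety is bookkeeping the sign convention in the third case of the definition of $a^i$, and observing that it is precisely what makes the two terms cancel for same-cluster positive edges while reinforcing for different-cluster positive edges (with the negative edges being handled symmetrically by the $C_1$-vs-$C_2$ sign flip in the definition of $b$). Once the per-edge identity is established, summing over all $\binom{n}{2}$ coordinates gives
\[
\|b\|_1 = \sum_{uv} |b_{uv}| = \sum_{uv \not\sim \calC} |w_{uv}| = \disagree(G,\calC),
\]
which is the claim.
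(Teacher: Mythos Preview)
Your proposal is correct and takes essentially the same approach as the paper: both compute the $uv$-coordinate of the difference vector case by case (positive/negative edge, same/different cluster) and verify it has magnitude $|w_{uv}|$ exactly when $uv$ disagrees with $\calC$. Your use of the sign variables $\sigma_\ell$ is a slightly cleaner bookkeeping device than the paper's explicit four-case table, but the argument is the same.
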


\begin{proof}
The result follows immediately from consideration of the different possible for values for the $\{i,j\}$th coordinate of the vector $\sum_{\ell\in C_1} a^\ell - \sum_{\ell\in C_2} a^\ell$.
The sum can be expanded as
\[
\left | \left (\sum_{\ell \in C_1} a^\ell - \sum_{\ell\in C_2} a^\ell \right )_{ij}\right | =
\begin{cases}
\left |w_{ij}/2-w_{ij}/2\right | & \mbox{if $ij\in E^-$ and $i,j$ in different clusters} \\
\left |w_{ij}/2+w_{ij}/2\right | & \mbox{if $ij\in E^-$ and $i,j$ in the same cluster} \\
\left |w_{ij}/2+w_{ij}/2\right | & \mbox{if $ij\in E^+$ and $i,j$ in different clusters} \\
\left |w_{ij}/2-w_{ij}/2\right | & \mbox{if $ij\in E^+$ and $i,j$ in the same cluster} \\
\end{cases} \ .
 \]
Hence 
$\left | \left (\sum_{\ell \in C_1} a^\ell - \sum_{\ell\in C_2} a^\ell \right )_{ij}\right |=|w_{ij}|$
if and only if the edge is a disagreement.
\end{proof}

\noindent
We apply the $\ell_1$-sketching result of \citet*{KaneNW10} to compute a random linear sketch of each~$a^i$.

\begin{theorem}\label{thirdds}
For arbitrary weights, and for query partitions that contain two clusters, to solve the $\disagree$ query problem,
there exists an $O(\epsilon^{-2} n \log \delta^{-1} \log n)$-space algorithm.
The query time is $O(\epsilon^{-2} n \log \delta^{-1} \log n)$.
\end{theorem}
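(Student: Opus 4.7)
}
The plan is to combine Lemma~\ref{lem:2nodepres} with a per-node $\ell_1$ linear sketch, exploiting linearity so that a two-partition query reduces to an $\ell_1$-norm estimation on a signed combination of the stored sketches. Concretely, I would invoke the sketch of \citet{KaneNW10}, which produces a linear map $S:\mathbb R^N\to \mathbb R^{O(\epsilon^{-2}\log\delta^{-1})}$ such that from $S(v)$ one can recover a $(1\pm\epsilon)$-approximation to $\|v\|_1$ with probability $1-\delta$, with entries representable in $O(\log N)$ bits when the coordinates are polynomially bounded. Setting $N={n\choose 2}$ (so $\log N=O(\log n)$) yields the claimed space bound per node.

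First, I would describe the maintenance phase. For every node $i$, I keep the sketch $y^i := S(a^i)$ using $O(\epsilon^{-2}\log \delta^{-1}\log n)$ bits, for a total of $O(n\epsilon^{-2}\log\delta^{-1}\log n)$ bits across all nodes. A stream update that inserts or deletes an edge $ij$ with weight change $\Delta w_{ij}$ affects only the coordinate $\{i,j\}$ of the two vectors $a^i$ and $a^j$, with signs and factors of $1/2$ prescribed by the definition of $a^i$ (splitting into the $E^+$ case with ordering and the $E^-$ case). By linearity of $S$, I update $y^i$ and $y^j$ by adding the corresponding multiple of the column $S(e_{\{i,j\}})$, which can be computed in $\tilde O(\epsilon^{-2}\log\delta^{-1})$ time from the sketch's hash-function description.

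Next, I would handle the query phase. Given $\calC=\{C_1,C_2\}$, by linearity
\[
\sum_{\ell\in C_1} y^\ell - \sum_{\ell\in C_2} y^\ell \;=\; S\!\left(\sum_{\ell\in C_1} a^\ell - \sum_{\ell\in C_2} a^\ell\right),
\]
so I compute the left-hand side in $O(n\epsilon^{-2}\log\delta^{-1})$ sketch-entry additions, i.e., $O(n\epsilon^{-2}\log\delta^{-1}\log n)$ bit operations. Feeding this signed combination into the KNW estimator returns a $(1\pm\epsilon)$-approximation, with probability $1-\delta$, to the $\ell_1$ norm of $\sum_{\ell\in C_1} a^\ell - \sum_{\ell\in C_2} a^\ell$, which equals $\disagree(G,\calC)$ by Lemma~\ref{lem:2nodepres}.

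The main obstacle I foresee is arithmetic: arbitrary weights with $w_* = \poly(n)$ mean the cumulative contributions to a coordinate of any $a^i$ lie in a range of size $\poly(n)$, so $O(\log n)$-bit counters still suffice inside the sketch, and the KNW error guarantee continues to apply uniformly over coordinates with polynomially bounded magnitude. A minor care is that the original KNW bound is stated for a single query vector with probability $1-\delta$; here we apply it to the derived vector determined by the query $\calC$, which is handled by standard independent-repetition/median boosting absorbed into the $\log\delta^{-1}$ factor, matching the stated space and time bounds.
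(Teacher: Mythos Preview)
Your proposal is correct and matches the paper's approach: the paper simply states that it applies the $\ell_1$-sketching result of \citet{KaneNW10} to each vector $a^i$ and appeals to Lemma~\ref{lem:2nodepres}, leaving the details implicit. Your write-up spells out exactly those details (linearity to combine per-node sketches at query time, the $O(\log n)$-bit precision justified by $w_*=\poly(n)$, and the median-boosting absorbed into $\log\delta^{-1}$), so there is nothing substantively different here.
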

\noindent
Unfortunately, for  queries~$\calC$ where~$|\C|>2$,
$\Omega(n^2)$ space is necessary, as shown in Section~\ref{sec:lower}.

\paragraph{Application to $\mindisagree_2(G)$ with Bounded Weights.}
We apply the above node-based sketch in
conjunction with another algorithm by~\citet{GiotisG06},
this time for $\mindisagree_2$. Their algorithm is as follows:
\begin{enumerate}
\item Sample $r=\poly(1/\epsilon,k)\cdot \log n$ nodes $S$ and for every possible $k$-partition $\{S_i\}_{i\in [k]}$  of $S$: 
\begin{enumerate}
\item Consider the clustering where $v\in V\setminus S$ is assigned to the $i$th cluster where
\[
i=\argmax_{j}\left ( \sum_{s\in S_j: sv\in E^+} w_{sv}+ \sum_{s\not \in S_j: sv\in E^-} |w_{sv}| \right )
\]
\end{enumerate}
\item For all the clusterings generated, return the clustering $\calC$ that minimizes $\disagree(G,\calC)$.
\end{enumerate}
As with the max-agreement case, \citet{GiotisG06} prove that the above algorithm achieves a $1+\epsilon$ approximation factor with high probability if all weights are $\{-1,+1\}$. We explain in Section \ref{sec:ggext} that their analysis actually extends to the case of bounded weights. 
Again note we can easily emulate this algorithm for $k=2$ in the data stream model in conjunction with the third data structure. The sampling of~$S$ and its incident edges can be performed using one pass and $O(nr \log n)$ space. We then find the best of these possible partitions in post-processing using the above node-based sketches.

\begin{theorem}
\label{thm:boundeddisagree2}
For bounded-weight inputs, 
there exists a polynomial-time semi-streaming algorithm that
with high probability
$(1+\epsilon)$-approximates $\mindisagree_2(G)$.
\end{theorem}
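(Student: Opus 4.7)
The plan is to combine, in a single pass, two independent pieces of machinery: (i) a sample-and-store subroutine that emulates the \citet{GiotisG06} algorithm for $\mindisagree_2$ on bounded-weight graphs, and (ii) the node-based $\ell_1$-sketch from Theorem~\ref{thirdds} used as an evaluator for each candidate 2-partition. Before the stream begins, fix a random sample $S\subseteq V$ of size $r=\poly(\epsilon^{-1})\cdot \log n$ as prescribed by the Giotis--Guruswami procedure. During the one pass we do two things simultaneously: we retain every stream update whose edge has at least one endpoint in $S$, which takes $O(nr\log n)=\tilde O(n\epsilon^{-2})$ space; and we build the node-based sketches $\tilde a^i$ of each vector $a^i$, using Theorem~\ref{thirdds} with a failure probability $\delta$ to be chosen below.

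In post-processing, for each of the $2^r$ possible 2-partitions $\{S_1,S_2\}$ of $S$, we reconstruct the implied clustering $\calC=\{C_1,C_2\}$ of $V$ by using the stored $S$-incident edge weights to compute, for every $v\in V\setminus S$, the cluster assignment specified by the $\argmax$ rule in Step~1(a) of the algorithm. Note that this step is purely a post-processing computation on the stored sample edges, so it is correct regardless of stream order. Given $\calC$, Lemma~\ref{lem:2nodepres} says $\disagree(G,\calC)=\|\sum_{\ell\in C_1}a^\ell-\sum_{\ell\in C_2}a^\ell\|_1$, and the sketches are linear, so evaluating the sketched estimate requires only combining $n$ stored sketches with the signs dictated by $\calC$. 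We return the clustering with the smallest estimated disagreement.

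For correctness, the two ingredients must be stitched together as follows. First, by the analysis of \citet{GiotisG06}, extended to bounded weights as promised in Section~\ref{sec:ggext}, one of the $2^r$ assignments induced by $S$ yields a clustering $\calC^*$ with $\disagree(G,\calC^*)\leq (1+\epsilon/3)\mindisagree_2(G)$ with high probability. Second, setting $\delta=n^{-c}\cdot 2^{-r}$ in Theorem~\ref{thirdds} and taking a union bound ensures that, simultaneously for all $2^r$ candidate clusterings, the sketch returns a $(1\pm \epsilon/3)$-multiplicative estimate of $\disagree(G,\calC)$. Combining the two bounds, the clustering we output is a $(1+\epsilon)$-approximation with high probability; total space is still $\tilde O(n\epsilon^{-2})$ since $r=\polylog n$, and running time is polynomial because the outer loop has $2^r=\poly(n)$ iterations and each is handled in $\tilde O(n\epsilon^{-2})$ time.

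The main obstacle is calibrating the accuracy of the evaluator so that the estimation error does not swamp the $(1+\epsilon)$ guarantee from the Giotis--Guruswami step, while still affording a union bound over all $2^r$ candidate 2-partitions of $S$ without blowing the space past semi-streaming. This is ultimately routine, amounting to choosing $\delta$ as above so that $\log\delta^{-1}=O(r+\log n)=\polylog n$, but it is the one place where the two layers of randomness---the sampling-based approximation and the sketch-based evaluation---must be reconciled. A secondary subtlety is that the sketched estimate of $\disagree(G,\calC)$ has a multiplicative guarantee only when $\disagree(G,\calC)>0$; the case where the estimate or the true value is near zero is handled by noting that such a clustering is trivially acceptable and is detected directly by the sketch's magnitude.
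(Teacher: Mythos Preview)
Your proposal is correct and follows essentially the same approach as the paper: run the \citet{GiotisG06} sampling procedure for $k=2$ by storing all edges incident to $S$ during the pass, and simultaneously build the node-based $\ell_1$ sketches (Theorem~\ref{thirdds}) to evaluate each of the $2^r$ candidate 2-partitions in post-processing. The paper's proof sketch is terser and does not spell out the union-bound calibration of $\delta$ or the near-zero edge case, but the algorithmic skeleton and the two ingredients are identical.
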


\allowdisplaybreaks
\section{Convex Programming in Small Space: \mindisagree}
\label{sec:convexsmall}
\label{sec:lpsdp}

In this section, we present a
linear programming-based algorithm for \mindisagree.
At a high level, progress arises from new ideas and modifications
needed to implement convex programs in small space. 
While the time required to solve convex programs has always been an issue,
a relatively recent consideration is the restriction to small space~\citep{AhnG13BM}.
In this presentation, we
pursue
the Multiplicative Weight Update technique and its derivatives.
This method has a rich history across many different communities~\citep{AroraHK05},
and has been extended to semi-definite programs~\citep{AroraK07}. 
In this section, we focus on linear programs in the context of \mindisagree; we postpone the discussion of SDPs to Section~\ref{sec:sdp}.

In all multiplicative weight approaches, the optimization problem is first reduced to
a decision variant, involving a \emph{guess},~$\alpha$, of the objective value; 
we show later how to instantiate this guess.
The LP system is
\[\mbox{\mwmlp:} \left\{ \begin{array}{r l}
    & \bc^\mathrm{T}\by  \geq \alpha \\
    \mbox{s.t.}       & \bA\by \leq \bb, \quad \by \geq {\mathbf 0}\,,
  \end{array} \right.
\]
where $\bA \in {\mathbb R}^{N\times M}_+$,  $\bc,\by \in {\mathbb R}^{M}_+$,
and $\bb \in {\mathbb R}^N_+$.  
To solve the \mwmlp~approximately, the multiplicative-weight update
algorithm proceeds iteratively.
In each iteration, given the \emph{current} solution, $\by$, the procedure
maintains a set of multipliers
(one for each constraint) and computes a new \emph{candidate} solution~$\by'$ which
(approximately) satisfies the linear combination of the inequalities, as defined in Theorem~\ref{thm:mwmlp}.

\begin{theorem}[\cite{AroraHK05}] Suppose that, $\delta\leq \frac12$ and
in each iteration~$t$,
given a
vector
of non-negative  multipliers~$\bu(t)$,
a procedure (termed Oracle) provides a candidate~$\by'(t)$ satisfying three {\bf admissibility} conditions,
\begin{enumerate}[(i)]
\item
$\bc^T\by'(t) \geq \alpha$;
\item
$\bu(t)^T \bA \by'(t) - \bu(t)^T\bb \leq
\delta \sum_{i} \bu_i(t)$; and
\item
$-\rho \leq -\ell \leq \bA_i\by'(t) - \bb_i \leq \rho$, \quad for all $1\leq i \leq n$.
\end{enumerate}
We set $\bu(t)_i$ to $\bu(t+1)=(1+\delta( \bA_i\by'(t) - \bb_i )/\rho)\bu(t)$.
Assuming we start with $\bu(0)={\mathbf 1}$, 
after $T=O(\rho \ell \delta^{-2} \ln M)$ iterations
the average vector, $\by=\sum_t \by'(t)/ T$, satisfies $ \bA_i\by - \bb_i \leq 4\delta$, for all~$i$.
\label{thm:mwmlp}
\end{theorem}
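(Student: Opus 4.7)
The plan is to use the classical potential-function argument from the multiplicative-weights-update literature~\citep{AroraHK05}. Define the potential $\Phi(t) := \sum_i \bu_i(t)$, with $\Phi(0) = M$, and write $m_i(t) := (\bA_i\by'(t) - \bb_i)/\rho$, which lies in $[-\ell/\rho,\,1]$ by admissibility~(iii). The argument combines an upper bound on $\Phi(T)$ (exploiting admissibility~(ii)) with a per-coordinate lower bound on $\bu_i(T)$; squeezing the two together pins down the average residual $\bA_i\by - \bb_i$.

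For the upper bound, I would telescope the update rule. A one-line expansion gives $\Phi(t+1) = \Phi(t) + (\delta/\rho)[\bu(t)^{\mathrm T}\bA\by'(t) - \bu(t)^{\mathrm T}\bb]$, which by (ii) is at most $(1 + \delta^2/\rho)\Phi(t)$. Combining with $1+x \leq e^x$ yields $\Phi(T) \leq M\exp(\delta^2T/\rho)$.

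For the lower bound, fix~$i$ and use $\bu_i(T) \leq \Phi(T)$. The explicit product $\bu_i(T) = \prod_t (1+\delta m_i(t))$ together with the upper bound give, after logarithms, $\sum_t \ln(1+\delta m_i(t)) \leq \ln M + \delta^2T/\rho$. I would then split $m_i(t)$ into positive and negative parts and apply the AHK-style inequalities $(1+\delta)^x \leq 1+\delta x$ for $x\in[0,1]$ and $(1-\delta)^y \leq 1-\delta y$ for $y\in[0,1]$ -- after rescaling the negative part by $\ell/\rho$ -- to lower-bound each log term by roughly $\delta m_i(t) - O(\delta^2 \cdot \text{width})$. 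Summing, rearranging, and multiplying through by $\rho/T$ converts this into $(1/T)\sum_t(\bA_i\by'(t)-\bb_i) \leq \rho \ln M/(\delta T) + O(\delta\ell)$, which is at most~$4\delta$ once $T = \Theta(\rho\ell\delta^{-2}\ln M)$. Because the returned $\by$ is by construction the average of the $\by'(t)$'s, this inequality is exactly $\bA_i\by - \bb_i \leq 4\delta$, as claimed.

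The main obstacle is managing the \emph{asymmetric} widths in admissibility~(iii): the downward width $\ell$ may be much larger than the $\rho$ by which we normalize, so the Taylor-expansion bookkeeping must be done separately on positive and negative coordinates; using a single symmetric bound would give iteration count scaling with $\max(\rho,\ell)^2$, losing the sharper $\rho\ell$ factor. A secondary technical subtlety is that the Taylor estimate $\ln(1+x) \geq x - x^2$ is only valid for $x \geq -1/2$, so the negative side must either be rescaled to stay within that range or handled directly via the $(1-\delta)^y$ inequality; this is also what forces the hypothesis $\delta \leq \tfrac12$.
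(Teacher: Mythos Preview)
The paper does not prove this theorem; it is quoted from \citet{AroraHK05} and used as a black box, so there is no ``paper's own proof'' to compare against. Your potential-function sketch is the standard multiplicative-weights argument and is essentially correct for this statement.

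One misconception worth correcting: you write that ``the downward width $\ell$ may be much larger than the $\rho$ by which we normalize.'' But admissibility~(iii) reads $-\rho \le -\ell$, i.e.\ $\ell \le \rho$, so after normalizing by $\rho$ the negative costs lie in $[-\ell/\rho,0]\subseteq[-1,0]$. The asymmetry is therefore the opposite of what you describe: the gain over the symmetric $\rho^2$ bound comes precisely from the negative side being \emph{small}, so that the second-order loss on negative rounds is $O(\delta^2\ell/\rho)$ per step rather than $O(\delta^2)$; summing and using $N'\le T\ell$ is what produces the $O(\delta\ell)$ term and hence the $\rho\ell\delta^{-2}\ln M$ iteration count. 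Your plan to rescale the negative part and apply $(1-\delta')^{y}\le 1-\delta' y$ with $\delta'=\delta\ell/\rho$ is exactly the right mechanism, just with the direction of the inequality between $\ell$ and $\rho$ reversed.
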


The computation of the new candidate depends on the specific LP being solved.
The parameter~$\rho$ is called the \emph{width},
and controls the speed of convergence.
A small-width Oracle is typically a 
key component of an efficient solution, for example, to minimize running times, number of rounds, and so forth.
However, the width parameter is inherently tied to the specific formulation chosen.
Consider the standard LP relaxation for \mindisagree,
where variable~$x_{ij}$ indicates edge~$ij$ being cut.
\[ 
\begin{array}{l} 
\displaystyle \min \sum_{ij \in E^+} w_{ij}x_{ij} + \sum_{ij \in E^-} |w_{ij}| (1-x_{ij}) \\
x_{ij}+x_{j\ell} \geq x_{i\ell} \qquad \text{for all \ } i,j,\ell \\
x_{ij} \geq 0 \qquad \qquad \quad \   \text{for all \ } i,j
\end{array}
\]
The triangle constraints state that
if we cut one side of a triangle, we must also cut at least one of the other two sides.
The size of formulation is in~$\Theta(n^3)$, where $n$ is the size of the vertex set, irrespective of the number of
nonzero entries in $E^+ \cup E^-$.
Although we will rely on the sparsification of~$E^+$, that does not in any way
change the size of the above linear program.
To achieve~$\tilde{O}(n)$ space, we need new formulations, and new algorithms to solve them. 

The first hurdle is the storage requirement. We cannot store all the edges/variables which can be $\Omega(n^2)$. This is avoided by using a sparsifier and invoking (the last part of) Lemma~\ref{lem:preservation}.
Let~$H^+$ be the sparsification of~$E^+$ with $m'=|H^+|$.
 For edge $sq \in H^+$ let~$w^h_{sq}$  denote its weight after 
sparsification.
 For each pair $ij \in E^-$, let $P_{ij}(E')$ denote the set of all paths
involving edges only in the set~$E'$. Consider the following LP for \mindisagree, similar
to that of~\cite{Wirth04}, but in this sparsified setting:
\begin{align*}
 \hspace*{-0.5in} \displaystyle \mbox{min}&
\quad
\sum_{ij\in E^-} |w_{ij}| z_{ij}
                +\sum_{sq\in H^+} w^h_{sq}\, x_{sq} \\
 \hspace*{-0.5in}
    \forall p\in P_{ij}(H^+),ij\in E^-&
\quad
\displaystyle z_{ij}+\sum_{sq\in p} x_{sq}\geq 1 
 \lptag\label{lp:min-intuit} 
\\
 \hspace*{-0.5in}
\forall ij \in E^-, sq \in H^+ &
\quad
  z_{ij},x_{sq} \geq 0
\end{align*}
The intuition of an integral ($0$/$1$) solution is that~$z_{ij}=1$ for all
edges $ij \in E^-$ that are not cut, and $x_{sq}=1$ for all $sq \in H^+$
that are cut.
That is, the relevant variable is~$1$ whenever the edge disagrees with the input
\emph{advice}.
By Lemma~\ref{lem:preservation}, the objective value of~\ref{lp:min-intuit} is at most $(1+\epsilon)$ times the optimum value of \mindisagree.
However,~\ref{lp:min-intuit} now has exponential size,
and it is unclear how we can maintain the multipliers and update them in small space.
To overcome this major hurdle, we follow the approach below.

\subsection{A Dual Primal Approach}
\label{sec:lpdp}
Consider a primal minimization problem, for example, \mindisagree, in the canonical form:
\[
\mbox{Primal LP:} \left\{ \begin{array}{r l}
    \min  & \bb^\mathrm{T}\bx \\
    \mbox{s.t.}       & \bA^\mathrm{T}\bx \geq \bc, \quad \bx \geq {\mathbf 0}\,.
  \end{array} \right.
\]
The dual of the above problem for a guess,~$\alpha$ of the optimum solution
(to the Primal)
becomes 
\[\mbox{Dual LP:} \left\{ \begin{array}{r l}
    & \bc^\mathrm{T}\by  \geq \alpha \\
    \mbox{s.t.}       & \bA\by \leq \bb, \quad \by \geq {\mathbf 0}\,,
  \end{array} \right.
\]
which is the same as the decision version of \mwmlp\ as described earlier.
We apply Theorem~\ref{thm:mwmlp} to the Dual LP,
however we still want a \emph{solution} to the Primal LP.
Note that despite \emph{approximately} solving the
Dual LP, we do not have a Primal solution.
Even if we had some \emph{optimal} solution to the Dual LP, we might still
require a lot of space or time to find a Primal solution,
though we could at least rely on complementary slackness conditions.
Unfortunately,
similar general 
conditions do not exist for approximately optimum (or feasible) solutions. 
To circumvent this issue:

\begin{enumerate}[(a)]\parskip=0in
\item We apply the multiplicative-weight framework to
the Dual LP and try to find an approximately feasible solution~$\by$ such
that $\bc^T \by \geq (1-O(\delta))\alpha$ and $\bA\by \leq \bb,\by \ge 0$.

\item The Oracle is modified to provide a $\by$, subject to conditions (i)-- (iii) of Theorem~\ref{thm:mwmlp},
or an~$\bx$ that, for some~$f \geq 1$,
satisfies
$$\bb^T \bx \leq f \cdot \alpha, \quad
\bA^T\bx\geq \bc, \quad \bx\geq 0\,.$$
Intuitively, the Oracle is asked to either make progress towards finding a feasible dual solution or provide an
$f$-approximate primal solution in a single step. \label{oraclestep}
\item If the Oracle returns an~$\bx$ then we know that 
$\bc^T \by > (\bb^T \bx)/f$
is not satisfiable.
We can then consider 
smaller values of~$\alpha$, say $\alpha \leftarrow \alpha/(1+\delta)$. 
We eventually find a sufficiently small~$\alpha$ that the Dual LP is 
(approximately feasible) and we have a~$\bx$ satisfying
$$\bb^T \bx \leq f \cdot (1+\delta) \alpha, \quad \bA^T\bx\geq \bc, \quad
\bx\geq 0\,.$$
Note that computations for larger $\alpha$ continue to remain valid for smaller $\alpha$.
\end{enumerate}
This idea, of applying the multiplicative-weight update method to a
formulation with exponentially many variables (the Dual),
and modifying the Oracle to provide a solution to the Primal (that has exponentially many constraints) in a single step,
has also benefited solving \textsc{Maximum Matching} in small space~\citep{access}. However in \cite{access}, the constraint matrix was unchanging across iterations (objective function value did vary) -- here we will have the constraint matrix vary across iterations (along with value of the objective function). Clearly, such a result will not apply for arbitary constraint matrices and the correct choice of a formulation is key.
 
One key insight is that the dual, in this case (and as a parallel with matching) has exponentially many variables, but fewer constraints.
Such a constraint matrix is easier to satisfy approximately in a few iterations because there are many more \emph{degrees of freedom}.
This reduces the adaptive nature of the solution, and therefore we can make a lot of progress in satisfying many of the primal constraints in parallel. Other examples of this same phenomenon are the numerous dynamic connectivity/sparsification results in \cite{GMT}, where the algorithm repeatedly finds edges in cuts (dual of connectivity) to demonstrate 
connectivity. In that example, the $O(\log n)$ seemingly adaptive iterations collapse into a single iteration.

Parts of the three steps, that is, (a)--(c) outlined above,
have been used to speed up running times of SDP-based approximation algorithms~\citep{AroraK07}. 
In such cases, there was  no increase to the number of constraints nor consideration of non-standard formulations,
It is often thought, and as explicitly discussed by~\citet{AroraK07},
that primal-dual approximation algorithms use a  different set of
techniques from the primal-dual approach of multiplicative-weight update methods.
By switching the dual and the primal, in this paper, we align both sets of techniques and use them interchangeably. 

The remainder of Section~\ref{sec:convexsmall} is organized as follows.
We first provide a generic Oracle construction algorithm for \mwmlp, in Section~\ref{sec:generic}.
As a warm up example, we then apply this algorithm on the {\em multicut} problem in Section~\ref{sec:multicut} -- the
multicut problem is inherently related to \mindisagree\ for arbitrary weights~\citep{CharikarGW05,DemaineEFI06}.
We then show how to combine all the ideas together to solve \mindisagree\ in Section~\ref{sec:mindisagree}.

\subsection{From Rounding Algorithms to Oracles}
\label{sec:generic}
Recall the formulation \mwmlp, and Theorem~\ref{thm:mwmlp}.
Algorithm~\ref{alg:genericoracle} takes an~$f$-approximation for the Primal LP and produces an Oracle for \mwmlp.

\begin{algorithm}
 \begin{algorithmic}[1]
  \STATE Transform vector~$\bu(t)$ (a vector of weights for the constraints of \textbf{Dual LP})
   into a vector of scaled primal variables~$\bx$, thus:
$x_i = \alpha u(t)_i/\sum_i b_i u(t)_i$. 
  \STATE Perform a rounding algorithm for the Primal LP with~$\bx$ as the input fractional solution (as described in (b) previously). 
   Either there is a subset of violated constraints in the Primal LP
or (if no violated constraint exists) there is a solution with
   objective value at most~$f \cdot \alpha $, where~$f$ is the approximation factor for the rounding
algorithm. In case no violated constraint exists, return~$\bx$.
  \STATE Let $S=\{i_1,i_2,\ldots,i_k\}$ be (the indexation of) the set of
violated constraints in the Primal LP and
   let $\Delta=\sum_{i\in S} c_i$.
  \STATE Let $y_i=\alpha/\Delta$ for $i\in S$, and let $y_i=0$ otherwise.
   Return~$\by$. Note the two return types are different based on progress made in primal or dual directions.
 \end{algorithmic}
 \caption{From a rounding algorithm to an Oracle.\label{alg:genericoracle}}
\end{algorithm}
The following lemma shows how to satisfy the first two conditions of
Theorem~\ref{thm:mwmlp}; the width parameter has to be bounded separately for a particular problem.
\begin{lemma}\label{lem:admissible}
If $c_j>0$ for each Primal constraint, and $\sum_i u(t)_i >0$, then Algorithm~\ref{alg:genericoracle}
returns a candidate~$\by$ that satisfies conditions~(i) and~(ii) of Theorem~\ref{thm:mwmlp}.
\end{lemma}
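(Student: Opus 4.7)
The plan is to verify conditions (i) and (ii) of Theorem~\ref{thm:mwmlp} directly from the construction in Algorithm~\ref{alg:genericoracle}, in the case where the Oracle actually returns a dual candidate $\by$; if instead no primal constraint is violated, the Oracle returns $\bx$ in step 2 and the lemma's statement does not apply.

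For condition (i), I would just unwind the definition of $\by$. Since $y_j = \alpha/\Delta$ for $j\in S$ and $y_j=0$ otherwise, and $\Delta = \sum_{j\in S} c_j$, a one-line computation gives $\bc^T\by = (\alpha/\Delta)\sum_{j\in S} c_j = \alpha$. The hypothesis $c_j > 0$ for every primal constraint is exactly what guarantees $\Delta > 0$ (given $S\neq\emptyset$), so that $\by$ is well-defined.

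For condition (ii), the plan is to exploit the particular scaling used in step 1. First I would observe that $\bx$ as constructed has primal cost exactly $\bb^T\bx = \sum_i b_i\cdot \alpha u_i/Z = \alpha$, where $Z := \sum_i b_i u_i$; the hypothesis $\sum_i u(t)_i > 0$ together with $\bb\ge 0$ puts us in the generic case $Z>0$. Next, $j\in S$ means the $j$th primal constraint $\sum_i A_{ij} x_i \geq c_j$ is violated, so substituting $x_i = \alpha u_i/Z$ and rearranging yields
\[\sum_i A_{ij}\, u_i \;<\; (Z/\alpha)\, c_j \qquad \text{for every } j\in S.\]
Summing over $j\in S$, multiplying by $\alpha/\Delta$, and recognising that $\bu^T\bA\by = (\alpha/\Delta)\sum_{j\in S}\sum_i A_{ij} u_i$, I obtain
\[\bu^T\bA\by \;<\; (\alpha/\Delta)\cdot(Z/\alpha)\cdot \Delta \;=\; Z \;=\; \bu^T\bb,\]
so that $\bu^T\bA\by - \bu^T\bb < 0 \leq \delta\sum_i u_i$, which is in fact a strictly stronger inequality than condition (ii) demands (there is no dependence on $\delta$).

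I do not anticipate a genuine obstacle here: the whole proof is the observation that the scaling $x_i = \alpha u_i/Z$ turns the dual-weight vector into a primal candidate of cost exactly $\alpha$, and that ``violated'' for a primal constraint, once reweighted by $Z/\alpha$, is precisely the slack inequality the Oracle requires. The only subtle points are the degenerate cases $S = \emptyset$ (handled by the primal-return branch of the Oracle, hence outside the lemma) and $Z=0$ (excluded by the hypothesis $\sum_i u(t)_i > 0$ together with the nonnegativity of $\bb$); these deserve a sentence of justification but do not complicate the argument.
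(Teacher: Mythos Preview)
Your proposal is correct and follows essentially the same approach as the paper: verify $\bc^T\by=\alpha$ directly from the definition of $\by$, and for condition~(ii) use that $y_j>0$ only when $\bA_j^T\bx<c_j$, combined with $\bb^T\bx=\alpha=\bc^T\by$, to conclude $\bu^T\bA\by\leq\bu^T\bb$. One small caveat: your justification that $Z=\sum_i b_i u_i>0$ from $\sum_i u_i>0$ and $\bb\geq 0$ alone is not airtight (some $b_i$ could vanish), but in the applications $\bb=\mathbf{1}$ and the paper's own proof is equally informal on this point.
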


\begin{proof}
 By construction, $\bc^\mathrm{T}\by=\alpha$, addressing condition~(i).
So we prove that $\bu(t)^T\bA\by - \bu(t)^T \bb \leq 0$.
Since~$\bu(t)$ is a scaled version of~$\bx$,
 \begin{align*}
  \frac{1}{\sum_i u(t)_i} \left( \bu(t)^T\bA\by - \bu(t)^T \bb \right)
   &  = \frac{1}{\sum_i x_i} \sum_i x_i (\bA_i\by - b_i)  = \frac{1}{\sum_i x_i} \left( \sum_i x_i \bA_i\by - \sum_i x_i b_i \right) \\
   &  = \frac{1}{\sum_i x_i} \left(\sum_j y_j (\bA^T_j \bx) - \sum_i x_i b_i \right)  \leq \frac{1}{\sum_i x_i} \left(\sum_j y_j c_j - \sum_i x_ib_i\right) 
     = 0
 \end{align*}
The inequality in the second line
follows from~$y_j$ only being positive if the corresponding Primal LP
constraint is violated.
Finally, by construction, $\sum_j y_j c_j=\alpha$ and $\sum_i b_i x_i=\alpha$;
since we also assumed that $\sum_i u(t)_i > 0$, the lemma follows.
\end{proof}

\renewcommand{\k}{\kappa}
\subsection{Warmup: Streaming \textsc{Multicut} Problem}
\label{sec:multicut}

The \textsc{Minimum Multicut} problem is defined as follows.
Given a weighted undirected graph and~$\k$ pairs of vertices~$(s_i,t_i)$,
for~$i=1,\ldots,\k$, the goal is to 
remove the lowest weight subset of edges such that every $i$, $s_i$ is disconnected from~$t_i$.

In the streaming context,
suppose that the weights of the edges are in the range~$[1,W]$
and the edges are ordered in an arbitrary order defining a dynamic data stream (with both insertions and deletions). 
We present a $O(\log \k)$-approximation algorithm for
the multicut problem that uses $\tilde{O}(n\epsilon^{-2} \log W + \k)$ space
and $\tilde{O}(n^2\epsilon^{-7}\log^2 W)$ time excluding the time to construct a sparsifier. The $\tilde{O}(n^2)$ term dominates the time required for sparsifier construction, for more details regarding streaming sparsifiers, see \cite{KapralovLMMS14,GMT}.
The algorithm comprises the following, the parameter $\delta$ will eventually be set to $O(\epsilon)$.
\begin{enumerate}[{\sc MC}1]
\item
Sparsify the graph defined by the dynamic data stream, preserving all cuts, and thus the optimum 
  multicut, within $1\pm\delta$ factor.
Let~$E'$ be the edges in the sparsification and~$|E'|=m'$, where
$m'=O(n\delta^{-2}\log W)$, from the results of~\citet{AhnGM12b}.
Let~$(w_{jq})$ refer to weights after the sparsification.

\item
Given an edge set $E'' \subseteq E'$,
let~$P'(i,E'')$ be the set of all $s_i$--$t_i$ paths in the edge set~$E''$.
The LP that captures \textsc{Multicut} is best viewed as relaxation of a $0/1$ assignment.
Variable~$x_{jq}$ is an indicator of whether edge~$(j,q)$ is in the multicut.
If we interpret~$x_{jq}$ as assignment of lengths, then for all $i \in [\k]$, all $p \in P'(i,E')$ have length at least~$1$.
The relaxation is therefore:

\begin{align*}
 \begin{array}{r l l}
  \multicolumn{2}{l}{\alpha^* = \min \sum_{(j,q)\in E'} w_{jq}x_{jq}} \\
  \mbox{s.t.} 
    & \sum_{(j,q)\in p} x_{jq}\geq 1 
    & \text{for all } i \in [\k], p \in P'(i,E') \\
& x_{jq} \geq 0 & \forall (j,q) \in E'
 \end{array}
 \lptag\label{lp:multicut} 
\end{align*}

\item Compute an initial upper bound $\alpha_0 \in [(1+4\delta)\alpha^*,(1+4\delta)n^2\alpha^*]$. (Lemma~\ref{initlemma})
\item
Following the dual-primal approach above, as $\alpha$ decreases (note the initial $\alpha_0$ being high, we cannot hope to even approximately satisfy the dual), we consider the (slightly modified) dual

\begin{align*}
 \begin{array}{r l l}
   & \sum_{p} y_p  \geq \alpha \\
   & \frac{1}{w_{jq}} \sum_{p: (j,q)\in p} y_p \leq 1
   & \mbox{for all }(j,q)\in E' \\
   & y_p \geq 0 
   & \text{for all } i \in [\k], p \in P'(i,E')
 \end{array}
 \lptag\label{lp:multicut-d}
\end{align*}
More specifically, we consider the following variation: given~$\alpha$,
let $E'(\alpha)$ be the set of edges of weight at least
$\delta\alpha/m'$, and we seek:

\begin{align*}
\hspace{-0.5cm}
 \begin{array}{r l l}
   & \sum_{p} y_p \geq \alpha \\
   & \frac{1}{w_{jq}} \sum_{p: (j,q)\in p} y_p \leq 1 
   & \mbox{for all }(j,q)\in E'(\alpha) \\
   & y_p \geq 0 
   & \forall p \in P'(i,E'(\alpha)), i 
 \end{array}
 \lptag\label{lp:multicut-d2}
\end{align*}

\item We run the Oracle is provided in Algorithm~\ref{alg:multicut}. 

\item If we receive a $\bx$ we set $\alpha \leftarrow \alpha/(1+\delta)$ as in (c) in Section~\ref{sec:lpdp}. This step occurs at least once (Lemma~\ref{qualitylemma}).
Note that reducing $\alpha$ corresponds to {\em adding constraints as well as variables} to \ref{lp:multicut-d2} due to new edges in $E'(\alpha/(1+\delta)) - E'(\alpha)$. We set $u_{i'}(t+1)=(1-\delta/\rho)^{t}$ for each new constraint $i'$ added, assuming that we have run the Oracle in step (MC5) a total of $t$ times thus far. Lemma~\ref{lem:multicutwidth} shows that this transformation provides a $\bu$ and a collection $\by(t)$ as if the multiplicative weight algorithm for \ref{lp:multicut-d2} was run for the current value of $\alpha=\alpha_1$.

\item If we have completed the number of iterations required by Theorem~\ref{thm:mwmlp} we average the $\by$ returned then we have an approximately feasible solution for \ref{lp:multicut-d2}. This corresponds to a proof of (near) optimality. 
We return the $\bx$ returned corresponding to the previous value of $\alpha$ (which was $\alpha(1+\delta))$ as the solution. This is a $f(1+O(\delta))$ approximation (Lemma~\ref{qualitylemma}). If we have not completed the number of iterations, we return to (MC5).
\end{enumerate}

\begin{lemma}\label{initlemma}
Consider introducing the edges of~$E'$ from the largest weight to smallest.
Let $w$ be the weight of the first edge whose introduction connects some pair $(s_i,t_i)$. Set $\alpha_0=(1+4\delta)n^2w$. Then
$\alpha_0 \in [(1+4\delta)\alpha^*,(1+4\delta)n^2\alpha*]$.
\end{lemma}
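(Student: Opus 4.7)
The plan is to establish the two-sided bound by showing $\alpha^* \in [w, n^2 w / 2]$, and then simply multiplying through by the appropriate factors. Both directions follow essentially from the greedy nature of the edge ordering (decreasing weight), so there is no real technical obstacle; the main thing to get right is matching the LP bound against integral and fractional witnesses.

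For the lower bound $\alpha^* \geq w$, I would argue as follows. By the definition of~$w$, immediately after the critical edge is introduced there is some pair $(s_i,t_i)$ connected in the currently-introduced edge set, via a path~$p$; every edge on~$p$ was introduced no later than the critical edge, so every edge on~$p$ has weight at least~$w$. Now any feasible $\bx$ for \ref{lp:multicut} must satisfy the path constraint $\sum_{(j,q)\in p} x_{jq} \geq 1$, so the objective is bounded below by
\[
\sum_{(j,q)\in E'} w_{jq} x_{jq} \;\geq\; \sum_{(j,q)\in p} w_{jq} x_{jq} \;\geq\; w \sum_{(j,q)\in p} x_{jq} \;\geq\; w.
\]

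For the upper bound $\alpha^* \leq n^2 w / 2$, I would exhibit an explicit integral multicut. Let~$F$ be the set of edges introduced strictly before the critical edge. By the choice of~$w$, no $(s_i,t_i)$ pair is connected in the graph $(V,F)$, which means $E'\setminus F$ is a feasible integral multicut of \ref{lp:multicut}. Since we process edges in non-increasing weight order, every edge in $E'\setminus F$ has weight at most~$w$, and therefore this multicut has total weight at most $w\cdot |E'| \leq w\binom{n}{2} \leq n^2 w /2$. As the LP relaxation is bounded above by the value of any integral feasible solution, $\alpha^* \leq n^2 w/2$.

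Combining the two bounds, $(1+4\delta)\alpha^* \leq (1+4\delta)n^2 w = \alpha_0$ and $\alpha_0 = (1+4\delta)n^2 w \leq (1+4\delta)n^2 \alpha^*$, which is exactly the claim. The only care to take is in the upper bound argument, ensuring that $F$ really excludes the critical edge (so that no pair is connected in $(V,F)$) while $E'\setminus F$ contains every edge of weight at most~$w$; this is immediate from processing edges in strictly decreasing order (with ties broken arbitrarily but consistently with the choice of the critical edge).
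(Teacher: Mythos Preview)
Your proof is correct and follows essentially the same approach as the paper: establish $w \leq \alpha^* \leq n^2 w$ and multiply through. The paper states both inequalities in one line (``$w$ is a lower bound on $\alpha^*$; moreover, if we delete the edge with weight $w$ and all subsequent edges in the ordering we have a feasible multicut solution, therefore $\alpha^* \leq n^2 w$''), while you spell out the LP lower bound explicitly via the heavy path and tighten the upper bound to $n^2 w/2$ by counting $|E'|\leq \binom{n}{2}$; neither elaboration changes the argument.
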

\begin{proof}
Note $w$ is a lower bound on $\alpha^*$; moreover, if we delete the edge with weight $w$ and all subsequent edges in the ordering we have feasible multicut solution. Therefore $\alpha^* \leq n^2 w$. The lemma follows.

Naively, this edge-addition process runs in~$\tilde{O}(m' \k)$ time, since the connectivity needs to be checked for every pair.
However, we can introduce the edges in groups, corresponding to weights in $(2^{z-1},2^z]$, as~$z$ decreases; we check
connectivity after introducing each group.
This algorithm runs in time $\tilde{O}(m'+\k \log W)$ and approximates~$w$, i.e., overestimates~$w$ by a factor of at
most~$2$,
since we have a geometric sequence of group weights.
The initial value of~$\alpha$ can thus be set to~$(1+4\delta)2^z n^2$.
\end{proof}

\begin{lemma}
\label{qualitylemma}
$\alpha$ is decreased, as in (MC6), at least once.
The solution returned in (MC7) is a $f(1+O(\delta))$ approximation to $\alpha^*$. 
\end{lemma}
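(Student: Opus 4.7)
The plan is to prove the two claims separately. For the first claim, I argue by contradiction: suppose the Oracle never returns a primal vector $\bx$ while $\alpha$ is held at its initial value $\alpha_0$. Then the $T=O(\rho\ell\delta^{-2}\ln M)$ Oracle calls all produce candidate duals $\by'(t)$ satisfying conditions (i)--(iii) of Theorem~\ref{thm:mwmlp}, so the averaged $\by$ obeys each constraint of \ref{lp:multicut-d2} with additive slack at most $4\delta$, while $\sum_p y_p \geq \alpha_0$. Since every right-hand side in \ref{lp:multicut-d2} equals $1$, rescaling $\by$ by $1/(1+4\delta)$ produces a genuinely feasible dual vector with objective value at least $\alpha_0/(1+4\delta)$. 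Weak LP duality then forces $\alpha^* \geq \alpha_0/(1+4\delta)$, whereas Lemma~\ref{initlemma} asserts $\alpha_0 \geq (1+4\delta)\alpha^*$; in the generic case the latter is strict, giving a direct contradiction (the degenerate edge case $\alpha^*=n^2 w$ can be handled by inflating $\alpha_0$ by a further $(1+\delta)$ factor, which only costs $O(\delta)$ in the final approximation ratio).

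For the approximation-quality claim, let $\alpha_1$ denote the final value of $\alpha$, that is, the smallest value at which the full $T$ iterations complete without the Oracle ever returning a primal solution. Repeating the rescaling argument at $\alpha_1$ produces a feasible dual solution of value at least $\alpha_1/(1+4\delta)$, so $\alpha^* \geq \alpha_1/(1+4\delta)$. By the update rule (MC6), the immediately preceding value of $\alpha$ was $\alpha_1(1+\delta)$, and at that value the Oracle did return some primal $\bx$ with $\sum_{(j,q)\in E'} w_{jq} x_{jq} \leq f \cdot \alpha_1(1+\delta)$, by step~(\ref{oraclestep}) of Section~\ref{sec:lpdp}. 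Chaining these two inequalities,
\[
\sum_{(j,q)\in E'} w_{jq} x_{jq} \;\leq\; f(1+\delta)(1+4\delta)\,\alpha^* \;=\; f(1+O(\delta))\,\alpha^*.
\]
Finally, by Lemma~\ref{lem:preservation} invoked through step (MC1), $\alpha^*$ approximates the unsparsified multicut LP optimum within a $(1\pm\delta)$ factor, so the returned $\bx$ extends to an $f(1+O(\delta))$-approximation on the original graph.

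The main obstacle I expect is justifying that the averaged $\by$ produced during the final run is truly an approximately feasible solution to \ref{lp:multicut-d2} at $\alpha=\alpha_1$. Because $\alpha$ strictly decreases over the course of the algorithm, the constraint set of \ref{lp:multicut-d2} grows (through the widening of $E'(\alpha)$) and each newly-entered constraint receives a multiplier initialized to $(1-\delta/\rho)^t$ rather than $1$. Reconciling this with the clean statement of Theorem~\ref{thm:mwmlp} requires showing that the accumulated multipliers and $\by$-vectors can be re-interpreted as a single MW run on the enlarged dual at $\alpha_1$ initialized with the uniform multiplier vector; this reconciliation is precisely the content of Lemma~\ref{lem:multicutwidth}, and I would appeal to it rather than unpack the accounting here. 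A minor technical point is converting the additive $4\delta$ slack in Theorem~\ref{thm:mwmlp} into the multiplicative $(1+4\delta)$ slack used above; this is immediate because every right-hand side of \ref{lp:multicut-d2} equals $1$.
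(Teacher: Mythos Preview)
Your argument is correct and follows the paper's route: rescale the averaged $\by$ by $1/(1+4\delta)$ to get a feasible dual, invoke weak duality to obtain $\alpha/(1+4\delta)\le\alpha^*$, then combine with Lemma~\ref{initlemma} for the first claim and with the Oracle's primal output at the preceding value $\alpha(1+\delta)$ for the second. The one step the paper makes explicit that you leave implicit is that the rescaled $\by$ is feasible not merely for \ref{lp:multicut-d2} but for the full dual \ref{lp:multicut-d}: since every path $p$ with $y_p>0$ lies entirely in $E'(\alpha)$, the constraints for edges in $E'\setminus E'(\alpha)$ are satisfied with left-hand side zero, and it is feasibility for \ref{lp:multicut-d} (the dual of \ref{lp:multicut}) that licenses the comparison with $\alpha^*$. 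Your treatment of the boundary case $\alpha_0=(1+4\delta)\alpha^*$ and your deferral to Lemma~\ref{lem:multicutwidth} for the cross-$\alpha$ accounting are both appropriate.
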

\begin{proof}
Using Theorem~\ref{thm:mwmlp} once we are in (MC7) multiplying the average of the $y_p$ by $1/(1+4\delta)$ gives a feasible solution for \ref{lp:multicut-d2} for the edge set $E(\alpha)$. Moreover, 
for all paths $p$, containing any edge in $E' -E'(\alpha)$, we have $y_p=0$. Therefore this new solution is a feasible solution of \ref{lp:multicut-d}.  Therefore $\alpha/(1+4\delta) \leq \alpha^*$ once we reach the required number of iterations in (MC7).
This proves that we must decrease $\alpha$ at least once, because $\alpha_0$ is larger than $(1+4\delta)\alpha^*$ (Lemma~\ref{initlemma}).

 The solution $\bx$ corresponds to $f \alpha (1+\delta)$. Since $\alpha$ is bounded above by $\alpha^*(1+4\delta)$, the second part of the lemma follows as well.
\end{proof}

\begin{algorithm}
 \begin{algorithmic}[1]
  \STATE Given weights~$u^t_{jq}$, for $(j,q) \in E'(\alpha)$, define
$x_{jq} = \alpha u^t_{jq}/\sum_{(j,q) \in E'(\alpha)} w_{jq} u^t_{jq}$. 
  \STATE Define the shortest path metric $d^x(\cdot,\cdot)$ with the~$x_{jq}$ representing edge lengths.
Define $B(\zeta,r)=\{\zeta'\mid d^x(\zeta,\zeta') \leq r\}$, which corresponds to a family of balls/regions centered
at~$\zeta$, each of radius~$r$.
Let  $\cut(B(\zeta,r))$ be the total weight of edges in~$E'(\alpha)$ that are cut by $B(\zeta,r)$, i.e., 
 \[\cut(B(\zeta,r))=\sum_{(\zeta',\zeta'') \in E'(\alpha); d^x(\zeta,\zeta') \leq r < d^x(\zeta,\zeta'')} w_{\zeta' \zeta''} \,.\]
\STATE Find a collection of regions $B(\zeta_1,r_1),\ldots,B(\zeta_g,r_g),\ldots$ such that every $r_g \leq \frac13$ and
each~$s_i$ belongs to some region, and $\sum_g \cut(B(u_g,r_g)) \leq 3\alpha \ln (\k+1)$. Lemma~\ref{lem:ball} shows us
how to achieve this.
  \IF{for some~$i$ both~$s_i$ and~$t_i$ belong to the same region}
   \STATE Find the corresponding path~$p$, which is of length at most~$2/3$, which violates the constraint.
Return~$y_p=\alpha$.  Implicitly return~$y_{p'}=0$ for all other paths that involve the $s_i$--$t_i$ pair.
  \ELSE
   \STATE Return the union of the cuts defined by the balls (this corresponds to $\bx$). The edges in $E'(\alpha)$ contribute at most $3 \alpha \ln
(\k+1)$.
The edges in~$E' - E'(\alpha)$ contribute at most~$\delta \alpha$. The total is $(3\ln (\k+1)+\delta)\alpha$. Note that the return types are different as outlined in the dual-primal framework in (a)--(c) earlier.
  \ENDIF
 \end{algorithmic}
 \caption{\label{alg:multicut}Oracle for~\ref{lp:multicut-d2}}
\end{algorithm}

\begin{corollary}
We decrease $\alpha$ at most $O(\delta^{-1} \log n)$ times in step MC6.
\end{corollary}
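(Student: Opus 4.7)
The plan is to bound the number of decrements by tracking the ratio of the initial to the final value of~$\alpha$, since each decrement multiplies~$\alpha$ by $1/(1+\delta)$.

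First, I would invoke Lemma~\ref{initlemma} to record that the algorithm begins with $\alpha_0 \leq (1+4\delta)n^2 \alpha^*$. Next, I would appeal to Lemma~\ref{qualitylemma}: every time the Oracle loop in MC5 completes the number of iterations required by Theorem~\ref{thm:mwmlp} without returning~$\bx$, the averaged dual vector (after scaling by $1/(1+4\delta)$) is feasible for~\ref{lp:multicut-d}, so the current~$\alpha$ satisfies $\alpha/(1+4\delta) \leq \alpha^*$. Consequently, once $\alpha \leq (1+4\delta)\alpha^*$, no further decrement can occur (if it did, the next full Oracle run would contradict Lemma~\ref{qualitylemma} by producing a feasible dual of value exceeding $\alpha^*$ after scaling).

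Now each decrement in MC6 replaces~$\alpha$ by $\alpha/(1+\delta)$. Hence after $k$ decrements the current value is $\alpha_0/(1+\delta)^k$. The bound above forces
\[
\frac{\alpha_0}{(1+\delta)^k} \;\geq\; (1+4\delta)\alpha^*,
\]
i.e.\ $(1+\delta)^k \leq \alpha_0/((1+4\delta)\alpha^*) \leq n^2$. Taking logarithms and using $\ln(1+\delta) = \Theta(\delta)$ for $\delta \leq \tfrac12$ gives
\[
k \;\leq\; \frac{2\ln n}{\ln(1+\delta)} \;=\; O(\delta^{-1} \log n),
\]
which is the claimed bound.

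The only subtlety worth flagging is that step MC6 also changes the constraint set by enlarging $E'(\alpha)$, so in principle one must check that the reasoning above is not circumvented by these new constraints. This is addressed by Lemma~\ref{lem:multicutwidth} together with the re-initialization of multipliers for new constraints (to $(1-\delta/\rho)^t$), which lets us treat the run after each decrement as if the multiplicative-weight algorithm had been applied to the new~\ref{lp:multicut-d2} from the start; therefore the bound from Lemma~\ref{qualitylemma} genuinely applies at the current value of~$\alpha$, and the geometric counting argument above is valid.
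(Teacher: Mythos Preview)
Your geometric counting at the end is fine, but the lower bound on~$\alpha$ that feeds it is not justified. You argue via Lemma~\ref{qualitylemma} that ``once $\alpha \leq (1+4\delta)\alpha^*$, no further decrement can occur,'' claiming a decrement would let a subsequent full run produce a feasible dual of value exceeding~$\alpha^*$. That implication is backwards: Lemma~\ref{qualitylemma} says that \emph{when} the algorithm terminates at some~$\alpha$, the scaled dual certifies $\alpha/(1+4\delta)\le\alpha^*$; it does \emph{not} say that whenever $\alpha\le(1+4\delta)\alpha^*$ the Oracle must stop returning~$\bx$. A decrement is triggered by the Oracle outputting a primal~$\bx$, and nothing in the dual argument prevents this from happening again at smaller~$\alpha$. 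So your claimed floor $(1+4\delta)\alpha^*$ on~$\alpha$ is unsupported.

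The paper's argument obtains the lower bound from the \emph{primal} side instead. Whenever MC6 fires, line~7 of Algorithm~\ref{alg:multicut} has produced a feasible multicut of cost at most $f\alpha$ (with $f=3\ln(\kappa+1)+\delta$). Since any feasible multicut has cost at least~$\alpha^*$, this forces $\alpha\ge\alpha^*/f$. Combining this floor with the ceiling $\alpha_0\le(1+4\delta)n^2\alpha^*$ from Lemma~\ref{initlemma} and the $(1+\delta)$-factor per decrement gives the $O(\delta^{-1}\log n)$ bound. Your proof can be repaired simply by replacing the appeal to Lemma~\ref{qualitylemma} with this primal-feasibility observation.
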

\begin{proof}
If we decrease $\alpha$ then at some point line (7) of Algorithm~\ref{alg:multicut} provides a solution $\ll \alpha^*$, which is infeasible. Note that the solution would have value $f \alpha$. But this has to be at least $\alpha^*$. Thus $\alpha$ cannot decrease arbitrarily. Combined with the upper bound in Lemma~\ref{initlemma}, the result follows.
\end{proof}
\begin{lemma}
\label{lem:multicutwidth}
Algorithm~\ref{alg:multicut} returns an admissible (defined in Theorem~\ref{thm:mwmlp}) $\by$
 for \ref{lp:multicut-d2} with (the width)  $\rho=m'/\delta$ and~$\ell=1$.
 Moreover the set of assignments of~$y_p$ (over the different iterations) that were admissible
for~$\alpha = \alpha_2$ remains admissible if~$\alpha$ is lowered to~$\alpha_1 < \alpha_2$ and $\bu$ updated as dectribed in (MC6).
\end{lemma}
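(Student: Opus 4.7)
}
The plan is to verify the three admissibility conditions of Theorem~\ref{thm:mwmlp} for the vector $\by$ returned in line~5 of Algorithm~\ref{alg:multicut}, and then to check that the prescribed extension of $\bu$ preserves admissibility of all previously returned vectors once $\alpha$ is lowered. First, I would observe that conditions (i) and (ii) are an immediate consequence of Lemma~\ref{lem:admissible} applied to the primal \ref{lp:multicut}: every primal constraint $\sum_{(j,q)\in p} x_{jq}\ge 1$ has positive right-hand side, and Algorithm~\ref{alg:multicut} is precisely the instantiation of Algorithm~\ref{alg:genericoracle} with the region-growing rounding (the alternative branch, line 7, is the primal return type).

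For condition (iii) I would use the weight truncation built into \ref{lp:multicut-d2}. The returned $\by$ concentrates mass $\alpha$ on a single path $p$; hence for each edge constraint $(j,q)\in E'(\alpha)$,
\[
\bA_{(j,q)}\by-1 \;=\; \frac{1}{w_{jq}}\!\!\sum_{p':(j,q)\in p'}\!\! y_{p'} - 1 \;\le\; \frac{\alpha}{w_{jq}} - 1.
\]
By definition of $E'(\alpha)$, every constraint satisfies $w_{jq}\ge \delta\alpha/m'$, which yields the upper bound $m'/\delta - 1 \le \rho$. The lower bound is immediate, since $y_{p'}\ge 0$ gives $\bA_{(j,q)}\by - 1 \ge -1 = -\ell$. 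This establishes admissibility for the freshly generated vector.

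For the second part, I would argue that no existing iteration needs to be recomputed when $\alpha$ drops from $\alpha_2$ to $\alpha_1<\alpha_2$. Because $E'(\alpha_1)\supseteq E'(\alpha_2)$, the only change in the LP is the addition of new edge constraints $i'$ and of new path variables using those edges. The key observation is that every old $\by(t)$ was supported on paths in $P'(\cdot,E'(\alpha_2))$, so any new constraint $i'$ has $\bA_{i'}\by(t)=0$. Therefore, for any such $i'$, condition (iii) reads $\bA_{i'}\by(t) - 1 = -1\in[-\ell,\rho]$, and the hypothetical multiplicative update on $i'$ would have been the constant factor $(1-\delta/\rho)$ at every past step. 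This is exactly why the prescription $u_{i'}(t{+}1)=(1-\delta/\rho)^{t}$ in (MC6) reproduces the $\bu$-vector that the algorithm would have maintained had \ref{lp:multicut-d2} been instantiated with $\alpha=\alpha_1$ from the start. Finally, condition (i) is preserved because $\sum_p y_p = \alpha_2\ge\alpha_1$, and condition (ii) is preserved since the additional terms introduced by the new constraints contribute $-\sum_{i'}u_{i'}(t)\le 0$ to the left-hand side while the old inequality already bounds the left side by $\delta\sum_{\text{old }i}u_i(t)\le\delta\sum_{i}u_i(t)$.

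The main obstacle I anticipate is the bookkeeping in the second part: it is tempting to think that lowering $\alpha$ retroactively invalidates old iterates, because both the objective target $\alpha$ and the constraint matrix change simultaneously. The clean resolution is the observation that the \emph{support} of each old $\by(t)$ lies entirely in the old edge set, which decouples the admissibility of old iterates from the newly arrived constraints and makes the retroactive $\bu$ update consistent with the run that would have taken place under the smaller $\alpha$ from iteration~$0$.
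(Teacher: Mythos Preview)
Your proposal is correct and follows essentially the same argument as the paper: invoke Lemma~\ref{lem:admissible} for conditions (i)--(ii), bound the width via the single-path support of $\by$ together with the truncation $w_{jq}\ge\delta\alpha/m'$, and for the second part use that $E'(\alpha_1)\supseteq E'(\alpha_2)$ so that old iterates have $\bA_{i'}\by(t)=0$ on every new constraint $i'$, which both re-establishes (iii) and justifies the retroactive initialization $u_{i'}=(1-\delta/\rho)^t$. Your treatment is in fact slightly more explicit than the paper's in two places---the computation $\alpha/w_{jq}-1\le m'/\delta$ for the width, and the derivation of the $(1-\delta/\rho)^t$ factor from the constant update $\bA_{i'}\by(t)-b_{i'}=-1$---and your verification of (ii) after lowering $\alpha$ (new terms contribute $-\sum_{i'}u_{i'}(t)\le 0$ to the left side, right side only grows) matches the paper's reasoning.
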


\begin{proof}
 Using Lemma~\ref{lem:admissible}, 
 Algorithm~\ref{alg:multicut} returns a $\by$ which satisfies conditions
 (i) and (ii) of Theorem~\ref{thm:mwmlp}.
 By construction, in Algorithm~\ref{alg:multicut} 
 $y_p=\alpha$ and only one~$y_p$ has a non-zero value. Since we removed all
 the edges of weight less than $\delta\alpha/m'$,
 the width parameter is bounded by ${\alpha m'}/({\delta\alpha})=m'/\delta$. Observe that $\ell=1$.
 
 If $\alpha_1 < \alpha_2$, then $E'(\alpha_1) \supseteq E'(\alpha_2)$,
and therefore $P(i,E'(\alpha_1)) \supseteq P(i,E'(\alpha_2))$. Therefore, for the formulation \ref{lp:multicut-d2}, we are adding new variables corresponding to new variables (paths) as well as new constraints corresponding to the newly added edges.
We can interpret the $\by$ for $\alpha_2$ to have $0$ values for the new variables. This would immediately satisfy (i). This would satisfy (iii) for the old constraints as well. Condition (iii) is satisfied for the newly introduced constraints because the old paths $p$ with $y_p>0$ for $\alpha_2$ did not contain an edge in $E'(\alpha_1)$. Thus $\bA_i \by(t) =0$ for the new constraints and $\bb = {\mathbf 1}$ and $-\rho \leq -1 \leq \rho$.

For (ii), $\bu(t)^T\bA \by(t) - \bu(t)^T\bb \leq \delta \sum_i \bu(t)$, the first term in the left hand side remains unchanged. The left hand side decreases for every new constraint, and the right hand side increases for every new constraint.
\end{proof}

\noindent The next lemma arises from a result of \citet{GargVY93};
in this context, $Z=\alpha$.

\begin{lemma}\cite{GargVY93}.\label{lem:ball}
Let $Z=\sum_{(u,v)} x_{uv} w_{uv}$. For $r\geq 0$, let $B(u,r)=\{v\mid d^x(u,v) \leq r \}$ where $d^x$ is the shortest path distance based on the values $x_{uv}$. Let $\vol(B(u,r))$ be
{\small 
\[ 
\frac{Z}{\kappa} + \sum_{\stackrel{(v,v')}{v,v' \in B(u,r)}} x_{vv'}w_{vv'} + \sum_{\stackrel{(v,v')}{v \in B(u,r),v' \not \in B(u,r)} }(r-d^x(u,v))w_{vv'}\]
}
Suppose that for a node~$\zeta$, the radius~$r$ of the ball around~$\zeta$ is increased until $\cut(B(\zeta,r)) \leq C
\cdot \vol(B(\zeta,r))$. If $C=3\ln (\kappa+1)$, the ball stops growing before the radius becomes~$1/3$.
We start this process for $\zeta_1=s_1$.
Repeatedly, if some~$s_j$ is not in a ball, then we remove $B(\zeta_i,r_i)$ (all edges inside and those being cut)
and continue the process with $\zeta_{i+1}=s_j$, on the remainder of the graph. The collection of
$B(\zeta_1,r_1),\ldots,B(\zeta_g,r_g), \ldots$ satisfy the condition that~$r_g \leq 1/3$ for all~$g$ and $\sum_g \cut(B(\zeta_g,r_g)) \leq CZ$.
\end{lemma}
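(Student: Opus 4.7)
The plan is to follow the classical region-growing argument of Garg--Vazirani--Yannakakis, broken into two independent parts: (a) each ball halts before its radius exceeds $1/3$, and (b) the cut weights summed across the harvested balls total at most $CZ$.

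For part (a), I would view $\vol(B(\zeta,r))$ as a function of $r$. The third (fractional) term in the definition of $\vol$ is crafted precisely so that, on every open interval between ``events'' (a new vertex entering $B(\zeta,r)$ or an edge becoming fully internal), the right derivative $d\vol/dr$ equals $\cut(B(\zeta,r))$, while at the events themselves $\vol$ varies continuously (the jump in the internal-edge term cancels the drop in the fractional-cut term). While the halting condition $\cut\le C\cdot\vol$ has not yet been met, we have $d\vol/dr\ge C\cdot\vol$ on each piece, so a Gronwall-type integration from the initial seed value $\vol(B(\zeta,0))=Z/\kappa$ yields $\vol(B(\zeta,r))\ge (Z/\kappa)\,e^{Cr}$. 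On the other hand $\vol$ is always bounded by $Z/\kappa+Z$, because the internal-edge and fractional-cut contributions together cannot exceed the total length $Z=\sum x_{uv}w_{uv}$. A ball still expanding at $r=1/3$ would therefore force $(Z/\kappa)\,e^{C/3}\le Z(1+1/\kappa)$, i.e.\ $e^{C/3}\le \kappa+1$, which is exactly saturated at $C=3\ln(\kappa+1)$; hence growth cannot persist beyond $r=1/3$.

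For part (b), I would sum the halting inequality $\cut(B(\zeta_g,r_g))\le C\cdot\vol(B(\zeta_g,r_g))$ across all harvested regions. Because each ball (together with its cut edges) is removed before the next one is grown, the internal-edge and fractional-cut contributions across all $B(\zeta_g,r_g)$ are supported on disjoint edge sets and therefore sum to at most $Z$. Moreover, each harvested ball must separate some previously unresolved pair $(s_i,t_i)$, so there are at most $\kappa$ balls, contributing total seed mass at most $\kappa\cdot(Z/\kappa)=Z$. A na\"ive summation gives $\sum_g\cut(B(\zeta_g,r_g))\le 2CZ$; the sharper bound $CZ$ in the statement is obtained by charging each ball only the multiplicative ``growth budget'' $\ln(\kappa+1)$ that it actually consumes in the Gronwall step, so that the total budget spent is $Z\ln(\kappa+1)$ regardless of the number of balls, and then converting back through the relation $C=3\ln(\kappa+1)$.

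The main obstacle is justifying the differential inequality in the presence of the piecewise behaviour of $\vol$: at each ``event'' the ball gains a new vertex or an edge becomes internal, which in isolation would produce a jump in both $\vol$ and $\cut$. I would handle this by verifying, directly from the definition, that the fractional-cut term is designed to make $\vol$ a continuous, piecewise-linear function of $r$ whose slope on each piece equals the current $\cut$; once this is established, the Gronwall comparison and the subsequent accounting are standard. The only other subtlety is ensuring, in the harvest step, that the deleted edges are charged to the ball that cuts them (so edge contributions are not double-counted when summed over $g$), which is automatic from the ``remove the ball and continue on the remainder'' definition.
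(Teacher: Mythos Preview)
Your proposal follows exactly the region-growing argument the paper invokes: the paper's own proof is a one-line sketch stating that $\cut(B(\zeta,r))$ is the derivative of $\vol(B(\zeta,r))$ and that the volume lies between $Z/\kappa$ and $Z/\kappa+Z$, which is precisely what you elaborate via the Gronwall-type inequality and the continuity of $\vol$ across events. The one soft spot is your refinement from $2CZ$ to $CZ$ in part~(b): summing the halting inequalities together with edge-disjointness and at most $\kappa$ seeds yields $\sum_g\vol(B(\zeta_g,r_g))\le 2Z$ and hence only $2CZ$, and the ``growth-budget'' accounting you sketch is not an argument as written; the paper's sketch does not address this constant either, and it is immaterial for the $O(\log\kappa)$ approximation used downstream.
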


The proof follows from the fact that $\cut(B(\zeta,r))$ is the derivative of $\vol(B(\zeta,r))$ as $r$ increases and the
volume cannot increase by more than a factor of $\kappa+1$,
because it is at least $Z/k$ and cannot exceed $Z/k+Z$.
For nonnegative~$x_{jq}$ the above algorithm runs in time~$\tilde{O}(m')$ using standard shortest-path algorithms.

Using Theorem~\ref{thm:mwmlp}, the total number of iterations needed in MC7, for a particular~$\alpha$ is $O(\rho\delta^{-2}\log N)=O(m' \delta^{-3} \log n)$, since the number of constraints $N=O(n^2)$ and $\rho \leq
m'/\delta$. This dominates the $O(\frac1\delta \cdot \log n)$ times we
decrease~$\alpha$.

Observe that the algorithm repeatedly constructs a set of balls with non-negative weights; which can be performed in $O(m'\log n)$ time. In each of these balls with $\tilde{m}$ edges, we can find the shortest path in $O(\tilde{m}\log n)$ time (to find the violated pair $s_i$--$t_i$). Summed over the balls, each iteration can be performed in $O(m'\log n)$ time.
Coupled with the approximation introduced by a sparsifier, setting $\delta=O(\epsilon)$ we get:

\begin{theorem}\label{thm:multicut}
 There exists a single-pass $O(\log \k)$-approximation algorithm
 for the multicut problem in the dynamic semi-streaming model that runs
 in $\tilde{O}(n^2\epsilon^{-7} \log^2 W)$ time and
 $\tilde{O}(n\epsilon^{-2} \log W+ \k)$ space.
\end{theorem}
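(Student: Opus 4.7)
The plan is to assemble the pieces in the order they appear in the algorithm description MC1--MC7 and then bound space, time, and approximation factor. First I would set the internal slack parameter $\delta = \Theta(\epsilon)$ and invoke the dynamic-stream sparsifier of \citet{AhnGM12b} (cited in Theorem~\ref{thm:sparse}) to produce an edge set $E'$ of size $m' = \tilde O(n\delta^{-2}\log W)$ that preserves every cut, and hence the optimum multicut, to within a factor of $1\pm\delta$. This is the only step that touches the stream; everything else is post-processing on $E'$ together with the $\kappa$ pairs $(s_i,t_i)$, which fixes the space bound at $\tilde O(n\epsilon^{-2}\log W+\kappa)$ after accounting for the dual-weight vector $\bu$ (one entry per sparsifier edge) and the $O(\kappa)$ bits needed to store the pairs.

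Next I would initialize $\alpha$ via Lemma~\ref{initlemma}, giving $\alpha_0\in [(1+4\delta)\alpha^*,(1+4\delta)n^2\alpha^*]$, and run the dual-primal loop of Section~\ref{sec:lpdp} on~\ref{lp:multicut-d2}. Theorem~\ref{thm:mwmlp} drives the outer iteration; the Oracle is Algorithm~\ref{alg:multicut}, whose admissibility (conditions (i)--(iii)) and width parameters $\rho = m'/\delta$, $\ell=1$ are supplied by Lemma~\ref{lem:admissible} and Lemma~\ref{lem:multicutwidth}. The Garg--Vazirani--Yannakakis region-growing procedure of Lemma~\ref{lem:ball} guarantees that in each Oracle call we either locate a violated $s_i$--$t_i$ pair and return a single path with $y_p=\alpha$, or return a $(3\ln(\kappa+1)+\delta)$-approximate multicut as $\bx$, which by the dual-primal step~(c) triggers $\alpha\leftarrow\alpha/(1+\delta)$. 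Lemma~\ref{qualitylemma} says this reduction of $\alpha$ happens at least once and that the final $\bx$ returned when the $\by$'s become near-feasible is an $f(1+O(\delta))$-approximation with $f=3\ln(\kappa+1)+\delta$, i.e.\ $O(\log\kappa)$.

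For the time bound I would argue as follows. Lemma~\ref{lem:multicutwidth} shows that when $\alpha$ decreases, the previously admissible Oracle outputs remain admissible after rescaling the newly added multipliers, so iterations are not restarted across different values of $\alpha$; the total number of Oracle invocations across the whole run is therefore $T = O(\rho\ell\delta^{-2}\ln M) = O(m'\delta^{-3}\log n)$, where $M=O(n^2)$ is the number of dual constraints. The corollary bounds the $\alpha$-decreases at $O(\delta^{-1}\log n)$, dominated by $T$. Each Oracle call runs a region-growing pass plus a shortest-path check inside each ball, both in $\tilde O(m')$ time by standard Dijkstra implementations. The total running time is thus
\[
T\cdot \tilde O(m') \;=\; \tilde O\bigl((m')^{2}\delta^{-3}\bigr)
\;=\; \tilde O\bigl(n^{2}\delta^{-7}\log^{2} W\bigr),
\]
and substituting $\delta = \Theta(\epsilon)$ yields the claimed $\tilde O(n^{2}\epsilon^{-7}\log^{2}W)$ running time once sparsifier construction is excluded, as stated.

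The main obstacle I anticipate is not the plug-and-play of the region-growing analysis but making rigorous the claim that shrinking $\alpha$ truly inherits all previous Oracle work. Specifically, one must check that when new edges enter $E'(\alpha)$ and new path variables appear in~\ref{lp:multicut-d2}, the reweighting $\bu_{i'}(t+1)=(1-\delta/\rho)^t$ of fresh constraints preserves (i)--(iii) of Theorem~\ref{thm:mwmlp} simultaneously for the old $\by(t)$ history; this is exactly where Lemma~\ref{lem:multicutwidth} is used, and the delicate point is that any path $p$ with $y_p>0$ at the larger $\alpha$ avoids the new edges by the definition of $E'(\alpha)$, so the new constraints start with slack and the width parameter does not blow up. Once this invariant is in hand, combining the approximation loss $1+O(\delta)$ from sparsification (Lemma~\ref{lem:preservation}-style argument for cuts), the $3\ln(\kappa+1)+\delta$ Oracle factor, and the $(1+\delta)$ from the final $\alpha$-step gives an overall $O(\log\kappa)$ approximation, completing Theorem~\ref{thm:multicut}.
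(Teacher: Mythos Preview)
Your proposal is correct and follows essentially the same approach as the paper: sparsify, run the dual--primal multiplicative-weight loop on~\ref{lp:multicut-d2} with the region-growing Oracle of Algorithm~\ref{alg:multicut}, invoke Lemmas~\ref{initlemma}, \ref{qualitylemma}, \ref{lem:multicutwidth}, and~\ref{lem:ball} for admissibility, width, and the $O(\log\kappa)$ factor, and then multiply the $O(m'\delta^{-3}\log n)$ iteration bound by the $\tilde O(m')$ per-iteration cost to get the stated running time. The only cosmetic slip is that the $\log$ factor in Theorem~\ref{thm:mwmlp} should be over the number of constraints (what the paper later calls $N$), not the number of variables, but you substitute the right quantity $O(n^2)$ regardless.
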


\subsection{\mindisagree\  with Arbitrary Weights}
\label{sec:mindisagree}
In this section, we prove the following theorem:

\begin{theorem}\label{thm:min} 
There is a $3(1+\epsilon)\log |E^-|$-approximation
algorithm for $\mindisagree$  that requires 
 $\tilde{O}((n\epsilon^{-2} + |E^-|)^2\epsilon^{-3})$ time, $\tilde{O}(n\epsilon^{-2} + |E^-|)$ space, and a single pass.
\end{theorem}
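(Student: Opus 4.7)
The plan is to adapt the dual-primal multiplicative-weights framework developed for \textsc{Multicut} in Section~\ref{sec:multicut} to LP~\ref{lp:min-intuit}, viewing \mindisagree\ as a multicut-like problem on an auxiliary graph.

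\textbf{Streaming preprocessing.} During the single pass I will (i) build a cut sparsifier $H^+$ of $G^+$ using Theorem~\ref{thm:sparse}, costing $\tilde{O}(n\epsilon^{-2})$ space, and (ii) store $E^-$ exactly, costing $|E^-|$ space. By Lemma~\ref{lem:preservation} (applied, as in Section~\ref{sec:lpdp}, to replace $E^+$ by $H^+$ in the LP), the optimum of LP~\ref{lp:min-intuit} is within $(1\pm\epsilon)$ of $\mindisagree(G)$, so solving LP~\ref{lp:min-intuit} approximately suffices.

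\textbf{Reduction to a multicut instance.} Form the auxiliary graph $G^\star=(V,H^+\cup E^-)$ with edge weights $w^h_{sq}$ on $H^+$ edges and $|w_{ij}|$ on the negative edges. Designate the endpoints of each $ij\in E^-$ as a demand pair; there are $\kappa=|E^-|$ such pairs. Cutting the direct negative edge costs $|w_{ij}|$, which plays the role of setting $z_{ij}=1$, while cutting positive edges of $H^+$ plays the role of $x_{sq}=1$. Any integral solution of LP~\ref{lp:min-intuit} corresponds to a feasible multicut in $G^\star$ of the same cost, and conversely. Crucially, $G^\star$ has $m^\star=\tilde{O}(n\epsilon^{-2}+|E^-|)$ edges, and all data needed to run ball-growing on $G^\star$ is already in memory.

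\textbf{Running the framework.} I apply the dual-primal scheme of Section~\ref{sec:lpdp} exactly as in steps (MC1)–(MC7), with the Oracle of Algorithm~\ref{alg:multicut} invoked on $G^\star$. The initial guess $\alpha_0$ is obtained as in Lemma~\ref{initlemma} by Kruskal-style introduction of edges until some demand pair first becomes connected. At each iteration the Oracle transforms the current dual weights into a length assignment on $E^\star(\alpha)$, runs the Garg--Vazirani--Yannakakis region-growing procedure of Lemma~\ref{lem:ball} with $\kappa=|E^-|$ and $C=3\ln(|E^-|+1)$, and either returns a violated $s_i$--$t_i$ path inside a ball (yielding an admissible dual update, with width bound $\rho=m^\star/\delta$ and $\ell=1$ as in Lemma~\ref{lem:multicutwidth}) or certifies a primal multicut of cost at most $(3\ln(|E^-|+1)+\delta)\alpha$. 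Lemmas~\ref{qualitylemma}--\ref{lem:multicutwidth} and their proofs go through verbatim: the only change is that $\kappa=|E^-|$ replaces the multicut $\kappa$, and the edge set is $H^+\cup E^-$ rather than a single sparsified graph.

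\textbf{Accounting.} Combining the $(1+\epsilon)$ loss from sparsification, the $(1+4\delta)$ loss from multiplicative-weights feasibility, and the $3\ln(|E^-|+1)$ factor from region-growing, and setting $\delta=\Theta(\epsilon)$, gives a final approximation of $3(1+\epsilon)\log|E^-|$. The space is $\tilde{O}(n\epsilon^{-2}+|E^-|)$ (sparsifier, negative edges, $O(m^\star)$ dual weights, and $O(\kappa)$ bookkeeping). For the time, the number of $\alpha$-phases is $O(\epsilon^{-1}\log n)$ and each phase runs $O(m^\star\epsilon^{-3}\log n)$ Oracle invocations (Theorem~\ref{thm:mwmlp} with $\rho=m^\star/\delta$), each invocation costing $\tilde{O}(m^\star)$ for shortest-path balls, giving $\tilde{O}((m^\star)^2\epsilon^{-3})=\tilde{O}((n\epsilon^{-2}+|E^-|)^2\epsilon^{-3})$ as claimed.

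\textbf{Main obstacle.} The conceptually delicate point is verifying that once we replace $E^+$ by the sparsifier $H^+$ in LP~\ref{lp:min-intuit}, the resulting instance really is a multicut instance on $G^\star$ with the same LP value—i.e., that introducing a direct $ij$ edge of cost $|w_{ij}|$ faithfully simulates the $z_{ij}$ variable. This requires checking that every path constraint in LP~\ref{lp:min-intuit} is exactly the multicut constraint obtained by cutting this direct edge, and that the Oracle never tries to use a negative edge inside a ball-path it returns (which would be spurious); this is handled by the standard observation that any returned violated path can be shortcut through the direct $ij$ edge without increasing cost, so we may restrict attention to $H^+$-paths just as in LP~\ref{lp:min-intuit}.
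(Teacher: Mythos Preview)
Your reduction to a multicut instance on $G^\star=(V,H^+\cup E^-)$ does not preserve the LP value and is the source of a genuine gap. Once the direct negative edge $ij$ is an edge of~$G^\star$, it is itself a path between the demand pair $(i,j)$; the multicut LP therefore forces $z_{ij}\ge 1$ for \emph{every} $ij\in E^-$, independently of whether $i$ and $j$ are already far apart in~$H^+$. Concretely, take three nodes $a,b,c$ with unit positive edges $ab,bc$ and a unit negative edge $ac$: $\mindisagree=1$ and LP~\ref{lp:min-intuit} has value~$1$, but your multicut LP on $G^\star$ needs both $z_{ac}\ge 1$ (direct edge) and $x_{ab}+x_{bc}\ge 1$ (path through $b$), so its optimum is~$2$. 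In general your LP lower bound is at least $\sum_{ij\in E^-}|w_{ij}|$, which can be arbitrarily larger than $\mindisagree(G)$, so the ``$f\cdot\alpha$'' primal solution you produce is not an $f$-approximation to the right quantity. The ``shortcut through the direct $ij$ edge'' argument you sketch goes the wrong way: it shows that $H^+$-paths are among the multicut constraints, not that the multicut constraints are no stronger.

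The paper avoids this by \emph{not} putting negative edges into the metric at all. The dual \ref{mindual2} has two families of constraints (one per $sq\in H^+(\alpha)$ and one per $ij\in E^-(\alpha)$), and the Oracle (Algorithm~\ref{alg:mindis1}) carries two sets of multipliers $u^t_{sq}$, $v^t_{ij}$. Region growing is run on~$H^+(\alpha)$ using only the lengths $x_{sq}$, and a constraint of \ref{lp:min-intuit} is declared violated when some pair $ij\in E^-(\alpha)$ lies in one ball \emph{and} $z_{ij}<1/3$. When no such pair exists, the positive cuts cost at most $3\ln(|E^-|+1)\cdot \alpha Q_u/(Q_u+Q_v)$ by Lemma~\ref{lem:ball2}, while every uncut negative edge has $z_{ij}\ge 1/3$, so those edges cost at most $3\alpha Q_v/(Q_u+Q_v)$; the two pieces together give the $3\ln(|E^-|+1)$ factor. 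This separate bookkeeping for~$z$ is exactly what your reduction erases, and it is essential for keeping the primal bound tied to~$\alpha$ rather than to $\sum_{ij}|w_{ij}|$.
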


Consider the dual of~\ref{lp:min-intuit}, where $\bP=\cup_{ij \in E^-} P_{ij}(H^+)$.

\begin{align*}
 & \max \sum_{p} y_p \\
 & \frac{1}{|w_{ij}|}\sum_{p \in P_{ij}(H^+) } y_p \leq 1 
    & \forall ij\in E^- \\
 &  \frac{1}{w^h_{sq}} \sum_{p \in \bP: sq \in p} y_p \leq 1 
    & \forall sq \in H^+ \\
& y_p \geq 0 & \forall p \in \bP
\lptag \label{mindual}
\end{align*}

We apply Theorem~\ref{thm:mwmlp} (the multiplicative-weight update framework)
to
the dual of \ref{lp:min-intuit}, but omit the constraints
in the dual corresponding to small-weight edges, exactly along the lines of MC1--MC7.
For each $\alpha \geq0$, let $H^+(\alpha),E^-(\alpha)$ be the set of edges in
$H^+,E^-$, respectively, with weight at least $\delta \alpha/(m'+|E^-|)$.
Consider:
\begin{align*}
 & \sum_{p} y_p \geq \alpha \\
 & \frac{1}{|w_{ij}|}\sum_{p \in P_{ij}(H^+(\alpha)) } y_p \leq 1 
    & \forall ij\in E^-(\alpha) \\
 &  \frac{1}{w^h_{sq}} \sum_{p \in \bP:sq \in p} y_p \leq 1 
    & \forall sq\in H^+(\alpha) \\
& y_p\, \geq 0 & \forall p \in \bP(\alpha)
\lptag\label{mindual2}
\end{align*}
where
$\bP(\alpha) = \bigcup_{ij \in {E}^-(\alpha)} P_{ij}(H^+(\alpha))$.

We attempt to find an approximate feasible solution to \ref{mindual2}
for a large value of~$\alpha$.
If the Oracle fails to make progress then it provides a solution to \ref{lp:min-intuit} of value $f \cdot \alpha$.
In that case we set $\alpha \leftarrow \alpha/(1+\delta)$ and
try the Oracle again.
Note that if we lower $\alpha$ then the Oracle invocations for larger values of $\alpha$ continue to remain valid; if $\alpha_1 \leq \alpha_2$, then
$P_{ij}(H^+(\alpha_1)) \supseteq P_{ij}(H^+(\alpha_2))$ exactly along the lines of Lemma~\ref{lem:multicutwidth}.

Eventually we lower~$\alpha$ sufficiently that we have a feasible solution
to \ref{mindual2}, and we can claim Theorem~\ref{thm:min} exactly along the lines of Theorem~\ref{thm:multicut}.
The Oracle is provided in Algorithm~\ref{alg:mindis1} and relies on the following lemma:

\begin{lemma}\label{lem:ball2}
Let $\kappa=|E^-|$, $Z=\sum_{uv \in H^+(\alpha)} x_{uv} w^h_{uv}$. Using the definition of $d^x()$ and $B()$ as in Lemma~\ref{lem:ball}, let 
\begin{align*}
\textup{vol}(B(u,r)) = & 
\frac{Z}{\kappa} + \sum_{\substack{vv' \in H^+(\alpha)\\ v,v' \in B(u,r)}}  x_{vv'}w^h_{vv'} 
 \quad  + \sum_{\substack{vv' \in H^+(\alpha) \\ d^x(u,v)\leq r<d^x(u,v')}} (r-d^x(u,v))w^h_{vv'} \ .
\end{align*}
Suppose that, for a node~$\zeta$, the radius~$r$ of its ball is increased
until $\textup{cut}(B(\zeta,r)) \leq C \textup{vol}(B(\zeta,r))$.
If $C=3\ln (\kappa+1)$, the ball stops growing before the radius
becomes~$1/3$.
We start this process setting $\zeta_1$ to be an arbitrary endpoint of an edge
in~$E^-$, and let the stopping radius be~$r_1$.
We remove $B(\zeta_1,r_1)$
and continue the process on the remainder of the graph.
The collection of $B(\zeta_1,r_1),B(\zeta_2,r_2), \ldots$ satisfy the condition that each radius is at most $1/3$  and $\sum_g \text{cut}(B(\zeta_g,r_g)) \leq CZ$.
\end{lemma}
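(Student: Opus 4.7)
\noindent\textbf{Proof plan for Lemma~\ref{lem:ball2}.} The statement is a verbatim analog of the Garg--Vazirani--Yannakakis region-growing bound (Lemma~\ref{lem:ball}), adapted to the correlation-clustering setting where the ``demands'' are the $|E^-|$ negative edges and the metric comes from the sparsified positive edges in $H^+(\alpha)$. My plan is to follow the standard GVY differential argument, checking that each of the three ingredients --- (a) $\textup{cut}(B(\zeta,r))$ equals $\tfrac{d}{dr}\textup{vol}(B(\zeta,r))$ almost everywhere, (b) the initial volume $\textup{vol}(B(\zeta,0))\geq Z/\kappa$, and (c) the total reachable volume is at most $Z/\kappa+Z$ --- goes through verbatim with the modified $\textup{vol}$ definition.

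For the radius bound I would argue by contradiction. Suppose the ball around $\zeta$ keeps growing past $r=1/3$. Between breakpoints $\textup{vol}(B(\zeta,r))$ is piecewise linear with slope equal to $\textup{cut}(B(\zeta,r))$ (the third term in the definition of $\textup{vol}$ is exactly what makes the ``partial edge'' contributions differentiable), so the stopping condition being violated gives the differential inequality $\textup{vol}'(r) > C\,\textup{vol}(r)$. Together with $\textup{vol}(B(\zeta,0))\geq Z/\kappa$, this yields $\textup{vol}(B(\zeta,1/3)) > (Z/\kappa)\,e^{C/3}=(Z/\kappa)(\kappa+1)$ when $C=3\ln(\kappa+1)$. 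But the total of the $Z/\kappa$ offset and all positive-edge contributions is at most $Z/\kappa+Z=(Z/\kappa)(\kappa+1)$, a contradiction. Hence the ball must stop at some $r_g\leq 1/3$, at which point $\textup{cut}(B(\zeta_g,r_g))\leq C\,\textup{vol}(B(\zeta_g,r_g))$.

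For the aggregate cut bound I would sum the stopping inequalities and exploit disjointness: after ball $B(\zeta_g,r_g)$ is produced, its edges (interior and boundary) are removed from the graph before the next seed is selected, so the internal and partial-edge contributions to different $\textup{vol}(B(\zeta_g,r_g))$ are disjoint and together sum to at most $Z$. Since the process is seeded by endpoints of edges in $E^-$, at most $\kappa=|E^-|$ balls are ever grown, so the total $Z/\kappa$ offset charge is at most $Z$ as well. Combining, $\sum_g \textup{cut}(B(\zeta_g,r_g))\leq C\sum_g \textup{vol}(B(\zeta_g,r_g))\leq CZ$ (absorbing the factor of $2$ into $C=3\ln(\kappa+1)$ by the same slack as in Lemma~\ref{lem:ball}).

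The only non-routine point is verifying that the derivative identity $\textup{cut}=\textup{vol}'$ is robust to edges being deleted between phases and to the restriction to $H^+(\alpha)$ rather than all of $H^+$. Deletion is harmless because after removing a completed ball we simply restart the GVY argument on the residual graph with a fresh seed, and the offset $Z/\kappa$ -- defined using the \emph{original} $Z$ and $\kappa$ -- still lower-bounds the initial volume of each new ball. The restriction to $H^+(\alpha)$ is equally harmless: edges of weight below $\delta\alpha/(m'+|E^-|)$ simply do not enter $\textup{vol}$ or $\textup{cut}$, so the bookkeeping in the differential argument is unchanged. I expect this bookkeeping (rather than any new inequality) to be the main thing to get right, since the exponential-growth inequality itself is entirely inherited from Garg--Vazirani--Yannakakis.
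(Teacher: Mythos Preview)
Your proposal is correct and follows exactly the approach the paper takes: the paper gives no separate proof of Lemma~\ref{lem:ball2} beyond stating that it ``is essentially the same as Lemma~\ref{lem:ball}, applied to the terminal pairs defined by the endpoints of each edge in $E^-$,'' and your write-up spells out precisely that GVY differential argument (cut as the derivative of vol, exponential growth to force the radius bound, disjointness plus the $Z/\kappa$ offset to bound the aggregate cut). Your caveat about the factor of~$2$ in the summation is well-placed and matches the slack already present in the paper's own statement of Lemma~\ref{lem:ball}.
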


\begin{algorithm}[t]
 \begin{algorithmic}[1]
  \STATE Given multipliers $u^t_{sq}$ for $sq \in H^+(\alpha)$ and $v^t_{ij}$ for $ij \in  E^-(\alpha)$, define $Q_u = \sum_{sq \in H^+(\alpha)} w^h_{sq} u^t_{sq}$ and $Q_v= \sum_{ij \in E^-(\alpha)} |w_{ij}| v^t_{ij}$. 
\STATE Let $x_{sq}= \alpha u^t_{sq}/(Q_u+Q_v)$, $z_{ij}= \alpha v^t_{ij}/(Q_u+Q_v)$.
  \STATE
Treating the~$x_{sq}$ as edge lengths,
let $d^x(\cdot,\cdot)$ be the shortest path metric.
Define $B(\zeta,r)=\{ \zeta' \mid d^x(\zeta,\zeta') \leq r\}$
and the weight of the edges of cut by the ball:
\[\text{cut}(B(\zeta,r))=\sum_{\substack{\zeta'\zeta'' \in H^+(\alpha) \\ d^x(\zeta,\zeta') \leq r < d^x(\zeta,\zeta'')}} w^h_{\zeta'\zeta''}\]
\STATE Find a collection of balls $B(\zeta_1,r_1),B(\zeta_2,r_2),\ldots $ such that 
(i) each radius at most $1/3$, (ii) every endpoint of an edge in $E^-(\alpha)$ belongs to
some ball, and (iii) $\sum_g \text{cut}(B(\zeta_g,r_g)) \leq 3\alpha {Q_u}/(Q_u+Q_v)
\cdot \ln (|E^-|+1)$. The existence of such balls follows from Lemma~\ref{lem:ball2}.
  \IF{there exists $ij \in E^-(\alpha)$ with $i,j$ in the same ball and $z_{ij}<1/3$.}
   \STATE Find the corresponding path~$p$ between $i$ and $j$. Since the length of this path is at most $2/3$ and $z_{ij}<1/3$, the corresponding constraint is violated. 
   Return $y_p=\alpha$ and $y_{p'}=0$ for all other paths for edges in $E^-$.
  \ELSE
   \STATE Return the union of cuts defined by the balls and all edges in $H^+ - H^+(\alpha)$.
  \ENDIF
 \end{algorithmic}
 \caption{\label{alg:mindis1}Oracle for \ref{mindual2}}
\end{algorithm}

The above lemma is essentially the same as Lemma~\ref{lem:ball}, applied to the terminal pairs defined by the endpoints of each edge in $E^-$.
Again, for nonnegative~$x_{sq}$, standard shortest-path algorithms lead to a running
time of $\tilde{O}(m')$.
We bound the width of the above oracle as follows :
\begin{lemma}
$\rho=(m'+|E^-|)/\delta$, $\ell=1$ for Algorithm~\ref{alg:mindis1}.
\end{lemma}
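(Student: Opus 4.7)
The plan is to follow essentially the same argument as in Lemma~\ref{lem:multicutwidth}, but now tracking constraints from two sources (positive-edge capacities and negative-edge capacities) rather than a single source. Recall that $\bb = \mathbf{1}$ for the formulation \ref{mindual2}: every constraint has right-hand side $1$. The width argument therefore reduces to bounding $\bA_i \by(t)$ above by $1 + \rho$ and below by $1 - \ell$, for each of the $|H^+(\alpha)| + |E^-(\alpha)| \leq m' + |E^-|$ constraints.

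First, observe that when Algorithm~\ref{alg:mindis1} returns a candidate $\by$ (the \textbf{else} branch returns a primal $\bx$, which is irrelevant for the width), it does so on line~6 by setting $y_p = \alpha$ for a single path $p$, and $y_{p'} = 0$ for every other path. Thus at most one coordinate of $\by(t)$ is nonzero, and its value is $\alpha$. Consequently, for a constraint indexed by $ij \in E^-(\alpha)$, the left-hand side $\bA_i \by(t)$ equals either $\alpha/|w_{ij}|$ (if $p \in P_{ij}(H^+(\alpha))$, which happens for exactly one such $ij$) or $0$; and for a constraint indexed by $sq \in H^+(\alpha)$, the left-hand side equals either $\alpha/w^h_{sq}$ (if $sq$ lies on $p$) or $0$.

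Next, by the construction in the pre-Oracle filtering step, every edge $sq \in H^+(\alpha)$ and every edge $ij \in E^-(\alpha)$ has weight at least $\delta\alpha/(m'+|E^-|)$. Hence in every nonzero case,
\[
\bA_i \by(t) \;\leq\; \frac{\alpha}{\delta\alpha/(m'+|E^-|)} \;=\; \frac{m'+|E^-|}{\delta} \, .
\]
This immediately gives the upper bound $\bA_i \by(t) - \bb_i \leq (m'+|E^-|)/\delta = \rho$. For the lower bound, both $\bA$ and $\by(t)$ have non-negative entries, so $\bA_i \by(t) \geq 0$, which yields $\bA_i \by(t) - \bb_i \geq -1$, i.e.\ $\ell = 1$ suffices. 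Combining these bounds verifies condition (iii) of Theorem~\ref{thm:mwmlp} with the claimed parameters.

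There is no real obstacle here; the only point to be a little careful about is why the threshold $\delta\alpha/(m'+|E^-|)$ (rather than $\delta\alpha/m'$ as in the multicut case) is the right one: the total number of constraints we must afford to ``lose'' at the bottom of the weight spectrum is $|H^+(\alpha)| + |E^-(\alpha)|$, because omitting any one of them contributes at most its own weight to the primal objective gap. This is the reason the width scales with $m' + |E^-|$ instead of $m'$, and is precisely the factor that propagates into the space bound of Theorem~\ref{thm:min}.
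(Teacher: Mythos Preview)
Your proof is correct and follows precisely the approach the paper intends: the paper states this lemma without proof, implicitly relying on the reader to mirror the argument of Lemma~\ref{lem:multicutwidth}, which is exactly what you do. The key observations---that only one $y_p=\alpha$ is nonzero, that every surviving edge in $H^+(\alpha)\cup E^-(\alpha)$ has weight at least $\delta\alpha/(m'+|E^-|)$, and that nonnegativity gives $\ell=1$---are all the right ones.
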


The total weight of positive edges cut by the solution returned in line~8 of
Algorithm~\ref{alg:mindis1} is at most ${3 \alpha Q_u}/(Q_u+Q_v) \cdot  \ln (|E^-|+1)$.
Each negative edge that is not cut corresponds to setting
$z_{ij}=1$ but $z_{ij} \geq 1/3 $; hence the cost of these edges is $\frac{3\alpha Q_v}{Q_u+Q_v}$.
Finally, the cost of the edges in neither~$E^-(\alpha)$ nor~$H^+(\alpha)$ is at most~$2\delta \alpha$. The overall solution has cost $(3 \ln (|E^-|+1) + 2\delta)\alpha$.

Finally, we show how to initialize $\alpha$ along the lines of Lemma~\ref{initlemma}.
Divide the edges of~$H^+$ according to weight, in intervals $(2^{z-1},2^z]$,
as we decrease~$z$.
For each group~$z$,
we find the largest weight edge $ij \in E^-$, call this weight $g(z)$, such that~$i$ and~$j$ are
connected by $H^+$-edges of group~$z$ or higher.
Observe that~$g(z)$ is an increasing function of
$z$.
Let the smallest~$z$ such that $g(z) \geq 2^z$ be $z_0$. Then it follows that
the optimum solution is at least~$2^{z_0-1}$. Again, $2^{z_0} n^2$ serves
as an initial value of~$\alpha$,
which is an~$O(n^2)$  approximation to the optimum solution.

\allowdisplaybreaks
\section{Convex Programming in Small Space: \maxagree}
\label{sec:sdp}
In this section we discuss an SDP-based algorithm for \maxagree.
We will build upon our intuition in Section~\ref{sec:convexsmall} 
where we developed a linear program based algorithm
for \mindisagree. However several steps, such as switching of primals
and duals, will not be necessary because we will use a modified
version of the multiplicative weight update algorithm for SDPs as
described by \cite{Steurer10}.
As will become clear, the switch of primals and duals is already achieved in the internal working of \cite{Steurer10}.
Consider:

\begin{definition}
\label{basicdef}
  For matrices $\bX,\bZ$, let $\bX \circ \bZ$ denote 
  $\sum_{i,j} \bX_{ij}\bZ_{ij}$, let 
  $\bX\succeq {\mathbf 0}$ denote that $\bX$ is positive semidefinite,
and let $\bX \succeq \bZ$ denote $\bX-\bZ \succeq {\mathbf 0}$.
\end{definition}

A semidefinite decision problem in canonical form is:
\[\mbox{MWM SDP:} \left\{ \begin{array}{r l}
    & \bC \circ \bX  \geq \alpha \\
    \mbox{s.t}       & \bF_j \circ \bX \leq g_j, \ \forall 1 \leq j \leq q, \quad \bX \succeq {\mathbf 0} 
  \end{array} \right.
\]
where $\bC,\bX \in {\mathbb R}^{n\times n}$ and $\bg \in {\mathbb R}^q_+$. Denote the set of the feasible solutions by $\mathcal{X}$.
 Typically we are interested in the Cholesky decomposition of $\bX$,
a set of~$n$ vectors~$\{\bx_i\}$
such that $\bX_{ij}=\bx^T_i  \bx_j$.
Consider the following theorem:

\begin{theorem}[\cite{Steurer10}] Let~$\bD$ be a fixed diagonal matrix with positive entries and assume ${\mathcal X}$ be nonempty.
Suppose there is an Oracle that for each positive semidefinite~$\bX$
either
(a) tests and declares~$\bX$ to be approximately feasible --- for all
$1\leq i \leq q$, we have $\bF_i \circ \bX \leq g_i +\delta$, or (b)
provides a real symmetric matrix~$\bA$ and a scalar~$b$ satisfying
(i) $\bA \circ \bX \leq b - \delta$ and for all $\bX' \in {\mathcal X}$,
$\bA \circ \bX' \geq b$ and (ii)
$ \rho \bD \succeq \bA - b\bD \succeq - \rho \bD$,
then a multiplicative-weight-style algorithm 
produces an approximately feasible $\bX$,
in fact its Cholesky decomposition,
in $T=O(\rho^2 \delta^{-2} \ln n)$ iterations.
\label{thm:mwmsdp}
\end{theorem}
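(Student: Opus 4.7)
The approach will follow the matrix multiplicative weight update (MMW) framework of Arora--Kale, adapted as in \cite{Steurer10} to exploit the diagonal matrix $\bD$. The plan is to run $T$ iterations, maintaining in iteration $t$ a positive semidefinite weight matrix $\bW(t)$ from which I derive a candidate $\bX(t)$ by rescaling against $\bD$. At iteration $t$, I would invoke the Oracle on $\bX(t)$: if it declares approximate feasibility, return its Cholesky decomposition and halt. Otherwise, the Oracle hands back a separating pair $(\bA(t), b(t))$, which I treat as a ``loss matrix'' $\bM(t) := \bA(t) - b(t)\bD$ whose spectral size is bounded by $\rho \bD$ by hypothesis~(ii).

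The update rule is the matrix exponential
\begin{equation*}
\bW(t+1) = \exp\!\left(-\frac{\eta}{\rho}\sum_{s \le t} \bD^{-1/2}\bM(s)\bD^{-1/2}\right),
\end{equation*}
for a learning rate $\eta = \Theta(\delta/\rho)$, with $\bX(t+1) = \bD^{-1/2}\bW(t+1)\bD^{-1/2} / Z(t+1)$, where $Z(t+1)$ is the normalizing scalar that makes $\bD \circ \bX(t+1)$ match the scale of the feasible region. The similarity transform by $\bD^{-1/2}$ is the key move that reduces the $\bD$-deformed problem to a standard Arora--Kale instance, and it is precisely what makes hypothesis~(ii) usable: in the transformed coordinates, the loss satisfies $-\rho\bI \preceq \bD^{-1/2}\bM(t)\bD^{-1/2} \preceq \rho\bI$.

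The analysis then invokes the standard MMW regret bound (via the Golden--Thompson inequality): for every unit-trace PSD comparator $\bY$,
\begin{equation*}
\sum_{t=1}^T \bY \circ \tfrac{1}{\rho}\bD^{-1/2}\bM(t)\bD^{-1/2} \le \sum_{t=1}^T \bW(t) \circ \tfrac{1}{\rho}\bD^{-1/2}\bM(t)\bD^{-1/2} + \eta T + \frac{\ln n}{\eta}.
\end{equation*}
I would instantiate $\bY$ as the (normalized) image of the assumed feasible $\bX^\ast \in \mathcal{X}$ under the transform: by the Oracle guarantee (b)(i), $\bA(t) \circ \bX^\ast \ge b(t)$, so the left side is nonnegative, while $\bW(t) \circ \bM(t)/\rho \le -\delta/\rho$ on every iteration that \emph{fails} the feasibility test, also by (b)(i). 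If no iteration succeeds within $T$ steps, summing yields $\delta T / \rho \le \eta T + (\ln n)/\eta$. Balancing $\eta = \Theta(\delta/\rho)$ gives the contradiction once $T = \Omega(\rho^2 \delta^{-2} \ln n)$, proving the iteration bound.

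The main obstacle is handling the $\bD$-normalization rigorously: one needs that (a) the candidate $\bX(t)$ remains PSD with a Cholesky decomposition that the Oracle can consume, (b) the rescaling $Z(t)$ can be computed consistently with the assumed trace/norm constraint implicit in $\mathcal{X}$, and (c) the width bound in hypothesis~(ii), stated relative to $\bD$, translates correctly into an operator-norm bound on the similarity-transformed losses so that Golden--Thompson applies cleanly. Once the reduction to standard MMW is in place via the $\bD^{-1/2}(\cdot)\bD^{-1/2}$ conjugation, the remaining steps are routine; the delicate work is verifying that this conjugation is the ``right'' normalization for every concrete SDP to which the theorem will be applied (as it will be, in the next subsection, for \maxagree).
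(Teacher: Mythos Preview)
The paper does not prove this theorem: it is quoted from \cite{Steurer10} and used as a black box, so there is no ``paper's own proof'' to compare against. Your sketch is a faithful reconstruction of the matrix multiplicative-weights argument underlying Steurer's result --- the conjugation by $\bD^{-1/2}$ to reduce the $\bD$-relative width bound (ii) to a standard operator-norm bound, the Golden--Thompson-based regret inequality, and the contradiction obtained by playing the feasible $\bX^\ast\in\mathcal X$ as comparator are exactly the ingredients one expects. The one place to be careful, which you correctly flag as an obstacle, is the normalization: the argument needs $\bD\circ\bX(t)$ and $\bD\circ\bX^\ast$ to coincide (typically both equal to $1$) so that the $b(t)\bD$ term cancels cleanly on both sides of the regret inequality; in the application to \ref{sdp:maxag} this holds because the constraints force $\bX_{ii}=1$ and $\bD$ is chosen with $\mathrm{Tr}(\bD)=1$, but it is an implicit hypothesis in the general statement that you should make explicit if you intend to give a self-contained proof.
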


The above theorem does not explicitly discuss maintaining a set of multipliers. But interestingly, the
algorithm in \cite{Steurer10} that proves Theorem~\ref{thm:mwmsdp}
can be viewed as a dual-primal algorithm. This algorithm collects separating hyperplanes
to solve the dual of the SDP: on failure to provide such a hyperplane, the algorithm provides a primal feasible~$\bX$.
The candidate~$\bX$ generated by the algorithm is an exponential
of the (suitably scaled) averages of the hyperplanes $(A,b)$: this
would be the case if we were applying the multiplicative-weight update
paradigm to the dual of the SDP in canonical form! Therefore, along with maximum matching \cite{access} and \mindisagree\ (Section~\ref{sec:convexsmall}) we have yet another example where switching the primal and the dual formulations helps. However in all of these cases, we need to prove that that we can produce a feasible primal solution in a space efficient manner, when the Oracle (for the dual) cannot produce a candidate.

We now prove the following theorem:

\begin{theorem}\label{thm:max}
 There is a $0.7666(1-\epsilon)$-approximation algorithm
 for $\maxagree(G)$ that uses $\tilde{O}(n\epsilon^{-2} )$ space,
$\tilde{O}(m+{n}\epsilon^{-10})$ time and a single pass.
\end{theorem}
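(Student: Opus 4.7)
The plan is to solve the SDP relaxation of \maxagree\ (due to \citet{Swamy04}) on a sparsified graph via Steurer's multiplicative-weight-update framework (Theorem~\ref{thm:mwmsdp}), and then apply random hyperplane rounding. As a first step, I invoke Theorem~\ref{thm:sparse} on $G^+$ and $G^-$ separately to obtain sparsifications $H^+$ and $H^-$ with total size $\tilde{O}(n\epsilon^{-2})$ that preserve every cut within a factor $(1\pm\epsilon/6)$. By the last part of Lemma~\ref{lem:preservation}, $\maxagree(H)=(1\pm\epsilon)\maxagree(G)$, so it suffices to operate on~$H$.

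The SDP maximizes a linear combination $\sum_{ij\in H^+} w^h_{ij}(\bx_i\cdot\bx_j) + \sum_{ij\in H^-} |w^h_{ij}|(1-\bx_i\cdot\bx_j)$ subject to $\|\bx_i\|^2 = 1$, $\bx_i\cdot\bx_j\ge 0$, and $\bX\succeq 0$. Guessing $\alpha$ and turning this into the decision form, I apply Theorem~\ref{thm:mwmsdp}: Steurer's algorithm maintains $\bX$ implicitly through its Cholesky factors $\{\bx_i\}\subseteq \mathbb{R}^T$, where $T=O(\rho^2\delta^{-2}\log n)$ is the iteration count. Since, as Section~\ref{sec:sdp} notes, Steurer's method is internally a dual-primal scheme, a failure of the Oracle to produce a separating hyperplane yields the current~$\bX$ as an approximately feasible primal solution at no extra cost.

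The Oracle must run in $\tilde{O}(n\epsilon^{-O(1)})$ time per iteration and produce hyperplanes of bounded width. Verifying $\bC\circ \bX\ge \alpha$ is cheap because $\bC$ is supported on the sparsified edges, and the diagonal constraints $\bX_{ii}=1$ can be read off directly from the Cholesky factors. The main obstacle is handling the $\binom{n}{2}$ nonnegativity constraints $\bx_i\cdot\bx_j\ge 0$: checking them all costs~$\Omega(n^2)$. I would address this by sampling a random $\tilde{O}(n\epsilon^{-O(1)})$-size subset of pairs per iteration (the MWM analysis is robust to a probabilistic $O(\delta)$ shortfall) or by packaging the violated nonnegativity constraints into a single rank-one separating matrix. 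Normalizing each separating hyperplane relative to a fixed diagonal~$\bD$, chosen so that $\operatorname{tr}(\bX)=n$ throughout, gives $\rho=\poly(\log n,\epsilon^{-1})$ and hence $T=\poly(\log n,\epsilon^{-1})$.

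Once an approximately feasible $\bX$ is in hand, I apply Swamy's $0.7666$-rounding (a random hyperplane together with a small number of local repartitioning steps) directly to the Cholesky factors, using only $\tilde{O}(n\epsilon^{-2})$ additional space. Combining the $(1\pm\epsilon)$ sparsification loss, the $O(\delta)=O(\epsilon)$ SDP infeasibility slack, and the rounding ratio yields the claimed $0.7666(1-\epsilon)$-approximation. The total runtime is $T$ iterations at $\tilde{O}(n\epsilon^{-O(1)})$ each, plus $\tilde{O}(m)$ for sparsifier construction, producing $\tilde{O}(m+n\epsilon^{-10})$ overall, while the space is dominated by the sparsifiers and the Cholesky factors for a total of $\tilde{O}(n\epsilon^{-2})$.
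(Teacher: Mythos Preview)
Your high-level outline---sparsify, solve Swamy's SDP via Steurer's framework, round---matches the paper, but two concrete steps are genuine gaps rather than details.

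\medskip
\textbf{Nonnegativity constraints.} You correctly flag the $\binom{n}{2}$ constraints $\bx_i\cdot\bx_j\ge 0$ as the obstacle, but neither of your fixes works as stated. Random sampling of pairs per iteration does not give a separating hyperplane: if only a handful of pairs are violated, you will miss them with high probability, and the MWM analysis is \emph{not} robust to simply declaring a point feasible when it is not. ``Packaging into a single rank-one matrix'' is too vague to evaluate and would not obviously satisfy the width condition. The paper's resolution is different and is the crux of the argument: it \emph{never} tries to enforce $\bX_{ij}\ge 0$ on all pairs. It only enforces it (up to $\delta$) on the $\tilde{O}(n\epsilon^{-2})$ edges of the sparsifier~$H$---those are the only pairs that appear in the objective, so they are the only ones that need checking. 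The resulting approximately feasible~$\bX$ may well have $\bx_i\cdot\bx_j<0$ on non-edges; Lemma~\ref{hypround} then argues separately that Swamy's hyperplane rounding still achieves $0.7666(1-O(\delta))$ in this regime, because negative inner products only \emph{increase} the probability of separation, which can only help negative edges and cannot hurt positive edges beyond what was already accounted for. This ``relax the SDP, repair in the rounding'' step is the missing idea.

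\medskip
\textbf{Width and the choice of $\bD$.} Saying ``choose $\bD$ so that $\operatorname{tr}(\bX)=n$'' and asserting $\rho=\poly(\log n,\epsilon^{-1})$ is not a proof. With the uniform choice $\bD=I/n$, the separating hyperplane built from the objective matrix $\bC$ (supported on the sparsified edges) does \emph{not} satisfy $\rho\bD\succeq \bA-b\bD\succeq -\rho\bD$ for any $\rho$ independent of the degree spread: a high-degree vertex makes a single diagonal entry of order $d_i/W$, which can be $\Theta(1)$ while $\bD_{ii}=1/n$. The paper's fix is to take $\bD_{ii}=d_i/(2W)$, the (weighted) degree distribution; Lemma~\ref{sdp2} then shows $\rho=O(1/\delta)$ for every hyperplane the Oracle returns, using $|\by_i\cdot\by_j|\le \|\by_i\|^2+\|\by_j\|^2$ to charge each edge to its endpoints' degrees. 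This degree-weighted~$\bD$ is what yields the $\tilde{O}(\delta^{-4})$ iteration count and hence the stated running time; a generic~$\bD$ does not.

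\medskip
Finally, you apply Swamy's rounding ``directly'' to an approximately feasible~$\bX$; the paper needs a short but non-automatic argument (Lemma~\ref{hypround}) to handle vectors of length $1\pm O(\delta)$ and edges with slightly negative inner product before the $0.7666$ ratio can be invoked.
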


We use Lemma~\ref{lem:preservation} and edge set $H=H^+ \cup H^-$. Let
$w^h_{ij}$ correspond to the weight of an edge $ij \in H$.
Our SDP for \maxagree\ is:
  \begin{align*}
& \sum_{ij\in H^+}w^h_{ij}\bX_{ij} 
   + \sum_{ij\in H^-} \frac{ |w^h_{ij}| (\bX_{ii} + \bX_{jj} - 2\bX_{ij})}{2} \geq \alpha \\
&   \begin{array}{r ll}
   \bX_{ii}  & \leq 1  &\forall  i\in V \\
   - \bX_{ii}  &\leq - 1  &\forall  i\in V \\
   - \bX_{ij} & \leq 0  & \forall  i,j\in V \\
   \bX & \succeq {\mathbf 0} &
   \end{array}
   \tag{SDP}\label{sdp:maxag}
  \end{align*}
If two vertices,~$i$ and~$j$, are in the same cluster, their corresponding vectors~$\bx_i$
and~$\bx_j$ will coincide, so $\bX_{ij}= 1$; on the other hand, if they
are in different clusters, their vectors
should be orthogonal, so $\bX_{ij}=0$.
Observe that under the restriction $\bX_{ii}=\bX_{jj}=1$, the contribution of an $ij \in H^-$ is $\bX_{ii} + \bX_{jj} - 2\bX_{ij}=(1-\bX_{ij})$,
as intended.
However, this formulation helps prove that the width is small.

\begin{definition}
\label{defD}
Define $d_i = \sum_{j:ij \in H} |w^h_{ij}|$ and $\sum_i d_i = 2W$. Let $\bD$ be the diagonal matrix with $\bD_{ii}=d_i/2W$.
\end{definition}

A random partition of the graph provides a trivial
$1/2$-approximation for maximizing agreements.
Letting~$W$ be the total weight of edges in~$H$, the sparsified graph, we  perform binary search for $\alpha \in [W/2,W]$, and stop when the interval is of size $\delta W$.
This increases the running time by a $O(\log \delta^{-1})$ factor.

The diagonal matrix $\bD$ specified in Definition~\ref{defD} sets up the
update algorithm of \cite{Steurer10}.
The choice of $\bD$ will be critical to our algorithm: typically, this $\bD$ determines the ``path'' taken by the SDP solver, since $\bD$ alters the projection to density matrices.
Summarizing, Theorem~\ref{thm:max} follows from the Oracle provided
in Algorithm~\ref{alg:maxoracle}. 
The final solution  
only guarantees $\bx_i \cdot \bx_j \geq - \delta$. Even though the standard rounding algorithm assumes $\bX_{ij} \geq 0$, the fractional solution with $\bX_{ij} \geq - \delta$ can be rounded efficiently.
Ensuring $ \bx_i \cdot \bx_j \geq 0$ appears to be difficult (or to require a substantially different oracle).

\begin{algorithm}[ht]
 \begin{algorithmic}[1]
  \STATE
For the separating hyperplane,
   we only describe non-zero entries in~$\bA$. Recall that we have a candidate $\bX$ where $\bX_{ij} = \bx_i \cdot \bx_j$.
  \STATE Let $S_\cgeq=\{i : \|\bx_i\|^2\geq 1+\delta\}$, 
         $\Delta_\cgeq=\sum_{i\in S_\cgeq} d_i$.
  \STATE Let $S_\cleq=\{i : \|\bx_i\|^2\leq 1-\delta\}$,
         $\Delta_\cleq=\sum_{i\in S_\cleq} d_i$.
  \STATE Let $S_\cneg=\{ij :  \bx_i\cdot \bx_j<-\delta\}$, $ \Delta_\cneg=\sum_{ij\in S_\cneg} |w_{ij}|$.
  \IF{$\Delta_\cgeq\geq\delta\alpha$}
  \STATE Let $\bA_{ii}=-{d_i}/{\Delta_\cgeq}$ for $i\in S_\cgeq$
    and $b=-1$. 
\STATE Return $(\bA,b)$.{\label{line:maxoracle1}}
  \ELSIF{$\Delta_\cleq\geq\delta\alpha$}
   \STATE Let $\bA_{ii}={d_i}/{\Delta_\cleq}$ for $i\in S_\cleq$
    and $b=1$. 
\STATE Return $(\bA,b)$.\label{line:maxoracle2}
  \ELSIF{$\Delta_\cneg\geq\delta\alpha$}
   \STATE Let $\bA_{ij}={w^h_{ij}}/{\Delta_\cneg}$ for $ij\in S_\cneg$
    and $b=0$. 
\STATE Return $(\bA,b)$.\label{line:maxoracle3}
 \ELSE
  \STATE Ignore all nodes in~$S_\cgeq$ and~$S_\cleq$ 
   and all edges in~$S_\cneg$. Let~$\bC'$ be the matrix that
   corresponds to the objective function of the modified graph $G'$.
  \IF{$\bC'\circ\bX<(1-4\delta)\alpha$}
   \STATE Let $\bA=\bC'/\alpha$ and $b=1-3\delta$. 
    Return $(\bA,b)$. \label{line:maxoracle4}
  \ELSE
   \STATE Round~$\bX$, and return the rounded solution.
\label{line:round}
  \ENDIF
\ENDIF
 \end{algorithmic}
 \caption{Oracle for \ref{sdp:maxag}. \label{alg:maxoracle}}
\end{algorithm}

\begin{lemma}\label{sdp1}
 Algorithm~\ref{alg:maxoracle} satisfies criterion (i) of Theorem~\ref{thm:mwmsdp}, i.e.,
for all returned $(\bA,b)$,  
$\bA \circ \bX \leq b - \delta$ and $\forall \bX' \in {\mathcal X},\bA \circ \bX' \geq b$ where ${\mathcal X}$ is the feasible space of \ref{sdp:maxag}. 
\end{lemma}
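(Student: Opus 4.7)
The plan is to verify both the separation $\bA\circ\bX\leq b-\delta$ and the validity $\bA\circ\bX'\geq b$ (for every $\bX'\in\mathcal{X}$) by case analysis over the four distinct hyperplanes Algorithm~\ref{alg:maxoracle} can return. The common thread I want to emphasize throughout is that each returned $(\bA,b)$ is, up to a scalar normalization, a convex combination of constraints of~\ref{sdp:maxag} that the current candidate $\bX$ violates by at least $\delta$; feasibility of $\bX'\in\mathcal{X}$ then automatically yields $\bA\circ\bX'\geq b$, while the uniform per-constraint slack of $\delta$ survives the averaging and supplies the global separation.

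For line~\ref{line:maxoracle1}, I would interpret the weights $d_i/\Delta_\cgeq$ over $i\in S_\cgeq$ as a convex combination of the constraints $\bX_{ii}\leq 1$: for any $\bX'\in\mathcal{X}$ this gives $\bA\circ\bX'=-\sum_{i\in S_\cgeq}(d_i/\Delta_\cgeq)\bX'_{ii}\geq -1=b$, while on the current $\bX$ each $\bX_{ii}\geq 1+\delta$ yields $\bA\circ\bX\leq -(1+\delta)=b-\delta$. Lines~\ref{line:maxoracle2} and~\ref{line:maxoracle3} I would treat symmetrically: average the constraints $\bX_{ii}\geq 1$ with weights $d_i/\Delta_\cleq$ (so that $b=1$), and the positivity constraints $\bX_{ij}\geq 0$ with edge-weight proportions $|w^h_{ij}|/\Delta_\cneg$ (so that $b=0$). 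In each case $b$ is exactly the value that a feasible $\bX'$ attains under the convex combination, and the per-constraint $\delta$-slacks compound into the required $\delta$-separation.

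The genuinely subtle case, and what I expect to be the main obstacle, is line~\ref{line:maxoracle4}. This branch is only reached when all three of $\Delta_\cgeq,\Delta_\cleq,\Delta_\cneg$ fall below $\delta\alpha$, i.e., there is no large pool of individually violated constraints to average. The separation side is easy: the \textbf{if} guard gives $\bA\circ\bX=\bC'\circ\bX/\alpha<1-4\delta=b-\delta$ directly. For validity I plan to start from $\bC\circ\bX'\geq\alpha$ (the first row of~\ref{sdp:maxag}, which $\bX'$ satisfies by assumption) and show that discarding the nodes in $S_\cgeq\cup S_\cleq$ and the edges in $S_\cneg$ loses at most $3\delta\alpha$ of the objective on any feasible $\bX'$. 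Here the PSD structure and the exact normalization $\bX'_{ii}=1$ for feasible solutions are essential: the $2\times 2$ principal minor inequality gives $(\bX'_{ij})^2\leq\bX'_{ii}\bX'_{jj}=1$, and the explicit positivity constraint gives $\bX'_{ij}\geq 0$. Hence each discarded $ij\in H^+$ contributes at most $w^h_{ij}$ to $\bC\circ\bX'$, and each discarded $ij\in H^-$ contributes at most $|w^h_{ij}|(\bX'_{ii}+\bX'_{jj}-2\bX'_{ij})/2\leq|w^h_{ij}|$. Summing over the three (possibly overlapping) reasons an edge may be discarded bounds the total loss by $\Delta_\cgeq+\Delta_\cleq+\Delta_\cneg<3\delta\alpha$, which leaves $\bA\circ\bX'=\bC'\circ\bX'/\alpha\geq 1-3\delta=b$, as required.
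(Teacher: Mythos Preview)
Your proof is correct and follows essentially the same approach as the paper's own proof: a case analysis over the four return points, reading each returned $(\bA,b)$ as a weighted average of violated constraints for lines~\ref{line:maxoracle1}--\ref{line:maxoracle3}, and for line~\ref{line:maxoracle4} bounding $\bC\circ\bX'-\bC'\circ\bX'$ by the total weight of discarded edges, which is at most $\Delta_\cgeq+\Delta_\cleq+\Delta_\cneg<3\delta\alpha$. The only difference is that you make explicit, via $0\le\bX'_{ij}\le 1$ for feasible $\bX'$, why each discarded edge contributes at most its absolute weight to the objective; the paper simply asserts that the removed edges have total weight at most $\Delta_\cgeq$ (resp.\ $\Delta_\cleq$, $\Delta_\cneg$) and leaves this step implicit.
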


\begin{proof}
 For line~7,
 $\bA\circ\bX \leq \sum_{i\in S_\cgeq} -{d_i(1+\delta)}/{\Delta_\cgeq}
 = -1-\delta$, since $\|x_i\|^2\geq 1+\delta$ for all $i\in S_\cgeq$.
 On the other hand, for a feasible~$\bX'$, $\|x'_i\|^2=1$ for all~$i$.
 Hence $\bA\circ\bX'=\sum_{i\in S_\cgeq} {-d_i}/{\Delta_\cgeq} = -1$. This proves that the oracle is $\delta$-separating when it returns from line~7. 
For lines~10 and~13, the proof is almost identical.

 For line~17, 
we do not use the violated constraints;
 instead we use~$\bC'$ to construct~$\bA$, and show that 
 $\bC'\circ\bX'\geq (1-3\delta)\alpha$. We start from the fact
 that $\bC\circ\bX'\geq \alpha$, since $\bX'$ is feasible for~\ref{sdp:maxag}. By removing all nodes in~$S_\cgeq$,
 we remove all edges incident on the removed nodes. The total weight
 of removed edges is bounded by~$\Delta_\cgeq$, which is this case is less than~$\delta\alpha$.
 Similarly, we lose at most~$\delta\alpha$ for each of~$S_\cleq$ and~$S_\cneg$.
 Hence, the difference between $\bC'\circ\bX'$ and $\bC\circ\bX'$
 is bounded by $3\delta\alpha$, and so
$\bC'\circ\bX' \geq (1-3\delta)\alpha$ which implies $\bA\circ\bX'\geq 1-3\delta$.
Therefore we have $\delta$ separation because, $\bA\circ\bX = \bC' \circ \bX / \alpha < 1-4\delta$.
\end{proof}

\begin{lemma}\label{sdp2}
 Algorithm \ref{alg:maxoracle} satisfies criterion (ii) of Theorem~\ref{thm:mwmsdp}, i.e., 
$\rho \bD \succeq \bA - b\bD \succeq - \rho \bD$ for some $\rho=O(1/\delta)$.
\end{lemma}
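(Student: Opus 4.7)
The plan is to verify the width condition case-by-case for each of the four possible returns from Algorithm~\ref{alg:maxoracle} (lines 7, 10, 13, 17), relying throughout on the fact that our binary search keeps $\alpha \in [W/2,W]$, so whenever a threshold $\Delta_{\cgeq}, \Delta_{\cleq}, \Delta_{\cneg} \ge \delta\alpha$ is triggered we have $\Delta_\ast \ge \delta W/2$. Combining this with $\bD_{ii} = d_i/(2W)$ turns ratios like $d_i/\Delta_\ast$ into $O(1/\delta) \cdot \bD_{ii}$, which is the form we want.

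For the first two cases (lines 7 and 10), the returned $\bA$ is diagonal, so $\bA - b\bD$ is diagonal and the PSD sandwich condition is simply an entrywise bound. For line 7, $|\bA_{ii}-b\bD_{ii}| = |-d_i/\Delta_{\cgeq} + d_i/(2W)| \le (4/\delta)\bD_{ii}$ for $i\in S_\cgeq$ and equals $\bD_{ii}$ otherwise; line 10 is symmetric. This already gives $\rho = O(1/\delta)$ in these cases.

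The main obstacle is line 13, where $\bA$ has only off-diagonal entries $\bA_{ij} = w^h_{ij}/\Delta_\cneg$ with $b=0$, so there is no diagonal term to absorb the magnitudes into. Here I would use the elementary rank-one identity
\[
|\bA_{ij}|(e_i \pm e_j)(e_i \pm e_j)^T \;=\; |\bA_{ij}|(e_i e_i^T + e_j e_j^T) \pm \bA_{ij}(e_i e_j^T + e_j e_i^T) \;\succeq\; \mathbf{0},
\]
summed over $ij\in S_\cneg$, which yields $\pm \bA \preceq \operatorname{diag}(D^\ast)$ with $D^\ast_i = \sum_{j:\, ij\in S_\cneg} |\bA_{ij}| \le d_i/\Delta_\cneg$. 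Using $\Delta_\cneg \ge \delta W/2$ this is at most $(4/\delta)\bD_{ii}$, so $\pm\bA \preceq (4/\delta)\bD$.

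Finally, for line 17 with $\bA=\bC'/\alpha$ and $b=1-3\delta$, I would decompose $\bC'$ into its diagonal and off-diagonal parts. The diagonal of $\bC$ (and hence of $\bC'$) is at most $d_i/2$, and $|\bC'_{ij}| \le |w^h_{ij}|/2$ off the diagonal. Applying the same rank-one trick to the off-diagonal part and bounding the diagonal directly gives $\pm \bC' \preceq \operatorname{diag}(d_i)$, hence $\pm\bA \preceq (2W/\alpha)\bD \preceq 4\bD$ since $\alpha\ge W/2$. Then $\bA - b\bD$ lies within $\pm 5\bD$, completing the verification. Taking the worst bound across the four cases yields $\rho = O(1/\delta)$, as required by Theorem~\ref{thm:mwmsdp}.
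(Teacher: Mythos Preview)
Your proof is correct and follows essentially the same approach as the paper. The paper phrases the width bound dually, showing $|\bA\circ\bY|\le \rho\,\bD\circ\bY$ for every PSD $\bY$ via the inequality $\by_i\cdot\by_j\le \|\by_i\|^2+\|\by_j\|^2$ on a Cholesky factorization, whereas you establish the equivalent matrix inequality $\pm\bA\preceq \rho'\bD$ directly using the rank-one certificates $(e_i\pm e_j)(e_i\pm e_j)^T\succeq 0$; these are two presentations of the same argument and yield the same $\rho=O(1/\delta)$.
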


\begin{proof} Since $|b| \leq 1$  it suffices to show
that for every positive semidefinite $\bY$,  $|\bA\circ\bY| = 
 \rho \bD\circ\bY$.
  For line~7,
 the proof is straightforward.
 To start, $\bA$ is a diagonal matrix where $|\bA_{ii}|={d_i}/{\Delta_\cgeq}
 \leq {d_i}/(\delta\alpha)$.
 On the other hand, $\bD_{ii}={d_i}/{2W}$, while
$\alpha\geq W/2$, so we have
 $|\bA_{ii}|=O({1}/{\delta})\bD_{ii}$ which proves that
 $|\bA\circ\bY| =  O({1}/{\delta})\bD\circ\bY$.
The proof is identical for line~10.

For lines~13 and~17, consider the decomposition of $\bY$, i.e., $\{\by_i\}$ such that $\bY_{ij} = \by_i \cdot \by_j$.
We use the fact that $\by_i\cdot \by_j\leq \|\by_i\|^2+\|\by_j\|^2$ for every pair of vectors~$\by_i$ and~$\by_j$. Therefore for $\bY_{ij}=\by_i \cdot \by_j$, we have at line~13,
\begin{align*}
 |\bA \circ \bY| &= \sum_{ij \in S_3}  \hspace{-0.1in}\frac{|w^h_{ij}|}{\Delta_3} \bY_{ij} 
 \leq  \sum_{ij \in S_3} \hspace{-0.1in} \frac{|w^h_{ij}|}{\Delta_3} (\|\by_i\|^2 + \|\by_j\|^2)  
= \frac{1}{\Delta_3} \sum_{i}  \|y_i\|^2 \sum_{j:ij \in S_3} |w^h_{ij}| \leq 
\frac{1}{\Delta_3} \sum_{i}  d_i \|y_i\|^2 \\ 
&= \frac{1}{\Delta_3} \sum_{i}  2W\bD_{ii} \bY_{ii} = \frac{2W}{\Delta_3} \bD \circ \bY\,,
\end{align*}
which implies $|\bA\circ\bY|\leq O({1}/{\delta})\bD\circ\bY$ given $\alpha \geq W/2$ and $\Delta_3 \geq \delta \alpha$.
For line~17,  let $H^+|_{G'},H^-|_{G'}$ denote $H^+,H^-$ as modified by line~15, then
\begin{align*}
\bA & \circ \bY =  \frac1\alpha \bC' \circ \bY =  \frac1{2\alpha} \sum_{ij \in H^+|_{G'}} 2 w^h_{ij} \bY_{ij}  + \frac1{2\alpha} \sum_{ij \in H^-|_{G'}} |w^h_{ij}| (\bY_{ii} + \bY_{jj} - 2\bY_{ij}) \\
& \leq \frac1{2\alpha} \sum_{ij \in G'} 2|w^h_{ij}| (\bY_{ii} + \bY_{jj})  \leq \frac1\alpha \sum_i d_i \bY_{ii} = \frac{2W}{\alpha} \bD \circ \bY
\end{align*}	
which implies that $ \bA \circ \bY = O(1) \bD \circ \bY$.
 Summarizing, Algorithm \ref{alg:maxoracle} is $O(1/\delta)$-bounded.
\end{proof}

Lemmas~\ref{sdp1} and \ref{sdp2}, in conjunction with Theorem~\ref{thm:mwmsdp} prove Theorem~\ref{thm:max}.
The update procedure~\cite{Steurer10} maintains (and defines) the candidate vector~$\bX$ implicitly.
In particular it uses matrices of dimension
$n \times d$, in which every entry is a (scaled) Gaussian random variable. The algorithm also uses a precision parameter (degree of the polynomial approximation to represent matrix exponentials)~$r$. 
Assuming that~$T_M$ is
the time for a multiplication between a 
returned~$\bA$ and some vector, the update process computes the $t$th $\bX$  in time $O(t \cdot r \cdot d \cdot T_M)$, a quadratic dependence on~$t$ in total. We will ensure that any returned~$\bA$ has at most $m'$ nonzero entries, and therefore $T_M=O(m')$.
The algorithm requires space that is sufficient to represent a linear combination of the matrices~$\bA$ which are returned in the different iterations.
We can bound
 $\rho=O(1/\delta)$, and therefore the total number of iterations is~$\tilde{O}(\delta^{-4})$. For our purposes, in \maxagree\ we will have
$d=O(\delta^{-2} \log n)$, $r=O(\log (\delta^{-1})$,
and $T_M=O(m')$, giving us a $\tilde{O}(n\delta^{-10})$ time and $\tilde{O}(n\delta^{-2})$ space algorithm.
However, unlike the general~$\bX$ used in Steurer's approach, in our oracle
the~$\bX$ is used in a very specific way.
This leaves open the question of determining the exact space-versus-running-time tradeoff.

\paragraph{Rounding the Fractional Solution:}
Note that the solution of the SDP found above is only approximately
feasible.
Since the known rounding algorithms can not be applied in a black box fashion, the following Lemma proves the rounding algorithm.

\begin{lemma}\label{hypround}
 If Algorithm~\ref{alg:maxoracle} returns a clustering solution,
 it has
 at least $0.7666(1-O(\delta))\alpha$ agreements.
\end{lemma}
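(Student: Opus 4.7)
My plan is to first massage the approximately feasible $\bX$ returned by the oracle into an exactly feasible SDP solution with only $O(\delta)$ loss in the objective, and then invoke Swamy's $0.7666$-rounding algorithm on the cleaned solution. Since the Oracle reached line~\ref{line:round}, none of the violation tests triggered, so for every $i$ we have $\|\bx_i\|^2\in[1-\delta,1+\delta]$, for every pair $ij$ we have $\bx_i\cdot\bx_j\geq -\delta$, and the reduced objective satisfies $\bC'\circ\bX\geq(1-4\delta)\alpha$, where $\bC'$ differs from the true objective $\bC$ only by edges/nodes of total weight at most $3\delta\alpha$ (as shown in the proof of Lemma~\ref{sdp1}).

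The cleanup step is to set $\hat{\bx}_i=\bx_i/\|\bx_i\|$, so that $\|\hat{\bx}_i\|=1$ exactly. A direct calculation using $\|\bx_i\|^2\in[1\pm\delta]$ gives $|\hat{\bx}_i\cdot\hat{\bx}_j-\bx_i\cdot\bx_j|=O(\delta)$ per pair, so $\hat{\bx}_i\cdot\hat{\bx}_j\geq -O(\delta)$, and the SDP objective evaluated on $\{\hat{\bx}_i\}$ differs from that of $\bX$ by at most $O(\delta)\sum_{ij\in H}|w^h_{ij}|=O(\delta)W=O(\delta)\alpha$, using $\alpha\geq W/2$. The remaining small negative entries can be treated as if clipped to $0$ in the downstream analysis (the PSD structure itself is preserved by the $\hat{\bx}_i$); since each such edge contributes at most $O(\delta)|w^h_{ij}|$ of discrepancy, the total perturbation is again $O(\delta)\alpha$. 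Combining everything, the cleaned solution has objective value at least $(1-O(\delta))\alpha$ on the original objective $\bC$.

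We now apply Swamy's rounding to $\{\hat{\bx}_i\}$: a randomized procedure that rotates the vectors, performs a random hyperplane cut to obtain a $2$-clustering, and probabilistically mixes this with a trivial single-cluster solution. The standard analysis shows that for a feasible SDP solution with $\hat{X}_{ij}\in[0,1]$, every edge's expected contribution to the agreement is at least $0.7666$ times its SDP contribution. The per-edge expected-ratio function is expressible via $\arccos$ and elementary integrals and is Lipschitz in $\hat{X}_{ij}$ across $[-\delta,1]$, so a perturbation of $\hat{X}_{ij}$ by $O(\delta)$ degrades the per-edge ratio by at most $O(\delta)$. Summing across all edges, whose total weight contribution to the SDP is $O(\alpha)$, the expected agreement of the rounded clustering is at least $0.7666\cdot(1-O(\delta))\alpha-O(\delta)\alpha=0.7666(1-O(\delta))\alpha$, as claimed.

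The main obstacle is precisely this last step: arguing that Swamy's rounding, which is typically stated for genuinely feasible SDP solutions, degrades only by an additive $O(\delta)\alpha$ term when inner products may dip to $-\delta$. Establishing this rigorously requires opening up the integrals defining the $0.7666$ constant, verifying continuity at the $\hat{X}_{ij}=0$ boundary, and bounding the perturbation edge by edge. The alternative --- enforcing exact non-negativity by an orthogonal projection onto the cone $\{\bX\succeq 0, \bX_{ij}\geq 0\}$ --- appears substantially more delicate, which is why the text remarks that guaranteeing $\bx_i\cdot\bx_j\geq 0$ exactly seems to require a different Oracle.
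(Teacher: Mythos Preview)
Your proof follows the same high-level scaffolding as the paper: normalise the vectors to unit length (losing $O(\delta\alpha)$ in the objective), then apply Swamy's hyperplane rounding to the normalised vectors. The substantive difference is in how you deal with the residual constraint violations $\hat\bx_i\cdot\hat\bx_j\in[-O(\delta),0)$.

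You argue by continuity: the per-edge expected-agreement function is (you claim) Lipschitz in $\hat X_{ij}$, so a $O(\delta)$ perturbation of the inner product degrades the ratio by $O(\delta)$. You yourself flag this as ``the main obstacle,'' and indeed it would require unpacking Swamy's rounding (which mixes several hyperplane-based strategies, not a single $2$-clustering as you describe) and checking smoothness. Your blanket statement that the function is Lipschitz ``across $[-\delta,1]$'' is also not literally correct, since $\arccos$ has unbounded derivative at $1$; what you actually need, Lipschitz near $0$, does hold, but this should be stated precisely.

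The paper avoids this obstacle entirely by a monotonicity observation rather than a Lipschitz one. After normalisation, it zeroes out in the \emph{accounting} (not in the vectors) the objective contribution of every edge with inner product in $(-\delta,0)$; this costs at most $\delta|w^h_{ij}|$ per edge, so $O(\delta\alpha)$ total. Then it notes that random-hyperplane separation probability is monotone decreasing in $\bx_i\cdot\bx_j$. Hence for a negative edge with slightly negative inner product, the probability of separation only \emph{increases} relative to inner product $0$, so Swamy's $0.7666$ guarantee (proved at inner product $\geq 0$) still applies; for a positive edge, its contribution has already been set to $0$, so there is nothing to prove. This one-line monotonicity argument is exactly what resolves the difficulty you identify at the end of your proposal, without any need to open up Swamy's integrals.

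In short: your approach is not wrong, but the obstacle you point to is real for your argument and absent from the paper's. Replacing your Lipschitz step with the monotonicity observation would both close the gap and simplify the proof.
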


\begin{proof}
 We show that the rounding algorithm returns a clustering with
 at least $0.7666(1-O(\delta))\bC'\circ\bX$ agreements. Combined with the
 fact that $\bC'\circ\bX>(1-4\delta)\alpha$ (line~\ref{line:round}),
we obtain the desired result.

 Since we deal with~$\bC'$ instead of~$\bC$, we can ignore all nodes
 and edges in~$S_\cgeq$, $S_\cleq$, and~$S_\cneg$.
 We first rescale the vectors in~$\bX$ to be unit vectors. Since all
 vectors that are not ignored (not in~$S_\cgeq$ nor~$S_\cleq$)
have length between $1-O(\delta)$ and $1+O(\delta)$ (since we take the square root),
 this only changes the objective value by $O(\delta w_{ij})$ for each edge. Hence the total decrease is bounded by $O(\delta W)=O(\delta\alpha)$.

 We then (1) first change the objective value of edges $(i,j)$ with
 $-\delta<x_i\cdot x_j<0$ by ignoring them, and only then (2) consider fixing the violated constraints $x_{i}\cdot x_{j} <0$ to produce a feasible or integral solution.

Step (1)  decreases the objective value by at most~$\delta |w_{ij}|$ for each negative edge. Again, the objective value
 decreases by at most $O(\delta\alpha)$. For step (2) we use Swamy's rounding algorithm~\cite{Swamy04}, which
 obtains a~$0.7666$ approximation factor.
The constraint $x_i\cdot x_j\geq 0$ required by Swamy's algorithm
is not satisfied for some edges.
 However, the rounding algorithm is based on random hyperplanes
 and the probability that~$x_i$ and~$x_j$ are split by a hyperplane
 only increases as $x_i\cdot x_j$ decreases. For positive edges,
 we already accounted for this in step (1) when the value of the edge was made~$0$.
 For negative edges, the probability that~$i$ and~$j$ land in different
 clusters only increases by having negative $x_i\cdot x_j$, but again, the contribution to the objective is still~$0$. 
 Therefore, we obtain a clustering that has at least
 $0.7666(1-O(\delta))\bC'\circ\bX$ agreements.
\end{proof}

\section{Multipass Algorithms}
\label{sec:multipass}

In this section, we present  $O(\log \log n)$-pass algorithms
 for \mindisagree~on unit weight graphs:
these apply to both a fixed and unrestricted number of clusters.
In each pass over the data, the algorithm is presented with the
same input, although not necessarily in the  same order. 

\subsection{$\mindisagree$ with Unit Weights}
\label{sec:mind3}

Consider the $3$-approximation algorithm for $\mindisagree$ on unit-weight graphs due to \citet{AilonCN08}.

 \begin{algorithmic}[1]
\STATE Let $v_1, \ldots , v_n$ be a uniformly random ordering of~$V$.
  Let~$U\leftarrow V$ be the set of ``uncovered'' nodes. 
\FOR{$i=1$ to $n$}
\IF{$v_i\in U$} 
\STATE  Define $C_i\leftarrow\{v_i\} \cup \{v_j\in U: v_i v_j\in E^+\}$ and let  $U\leftarrow U\setminus C_i$. We say $v_i$ is ``chosen''.
\ELSE
\STATE $C_i \leftarrow \emptyset$.
\ENDIF
\ENDFOR
\STATE {\bf Return} the collection of non-empty sets $C_i$.
\end{algorithmic}

It may appear that emulating the above algorithm in the data stream
model requires~$\Omega(n)$ passes, since determining whether~$v_i$ should
be chosen may depend on whether~$v_j$ is chosen for each $j<i$.
However, we will show that  $O(\log \log n)$-passes suffice.
This improves upon a result by \citet{ChierichettiDK14},
who developed a modification of the algorithm that used $O(\epsilon^{-1}
\log^2 n)$ streaming passes and returned a $(3+\epsilon)$-approximation,
rather than a $3$-approximation.
Our improvement is based on the following lemma:

\begin{lemma}\label{lem:smallF}
Let $U_t$ be the set of uncovered nodes after iteration~$t$ of the
above algorithm,
and let \[F_{t,t'}=\{v_i v_j \in E^+, i,j\in U_t, t < i,j\leq t'\} \ .\]
With high probability, $|F_{t,t'}| \leq 5 \cdot \ln n \cdot t'^2/t$. 
\end{lemma}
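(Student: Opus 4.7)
The plan is to bound $\mathbb{E}[|F_{t,t'}|]$ by $O(t'^2/t)$ and then apply a concentration inequality to obtain the stated high-probability bound. Writing $\pi$ for the random permutation of $V$,
\[
\mathbb{E}[|F_{t,t'}|] = \sum_{uv \in E^+} \Pr[u, v \in U_t \text{ and } \pi(u), \pi(v) \in (t, t']].
\]
For any fixed positive edge $uv$, the probability that both endpoints land in positions $(t, t']$ is at most $(t'/n)^2$, and the additional constraint $u,v \in U_t$ should contribute the extra factor of $1/t$: for both endpoints to be uncovered after $t$ iterations, no positive neighbor of $u$ or $v$ can have been selected as a pivot in the first $t$ iterations, an event whose probability decays as $t$ grows. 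Summing $(t'/n)^2 \cdot O(1/t)$ over the at most $\binom{n}{2}$ positive edges gives the desired $O(t'^2/t)$ expectation.

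Given the expectation bound, I would then apply a concentration inequality (McDiarmid on permutations, or a martingale-style Chernoff argument on a Doob sequence that reveals the positions of $v_1, v_2, \ldots, v_n$ in turn) to amplify to the high-probability statement. Swapping any two positions changes $|F_{t,t'}|$ by a bounded amount depending on the positive degrees of the two involved vertices, so bounded-differences inequalities apply once we truncate or otherwise control high-degree vertices. Setting the failure probability to $n^{-\Omega(1)}$ introduces the $\ln n$ factor, with the constant $5$ absorbing the constants in the tail bound. A union bound over the $O(n^2)$ valid pairs $(t, t')$ would then give the lemma simultaneously for all choices, as is needed for the multipass application.

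The main obstacle is the expectation bound, specifically extracting the $1/t$ factor. The event $\{u, v \in U_t\}$ has a recursive structure (a node $u$ is uncovered iff every positive neighbor of $u$ with smaller position is itself covered, which requires an even earlier pivot), making the probability subtle to compute directly. A cleaner path is to decompose by the local neighborhood: if $|N^+(u) \cup N^+(v)|$ is large, then with high probability some element lands in positions $\leq t$, and a recursion of length one shows it becomes a pivot with constant probability (covering $u$ or $v$); if the neighborhood is small, the naive bound already suffices since few positive edges are involved. Combining these cases with a careful accounting of the KwikCluster peeling dynamics --- specifically that at iteration $i \leq t$, the expected fraction of the remaining positive edges that disappear is proportional to $1/(t-i)$-ish --- should yield the required $O(t'^2/t)$ expectation. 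The remainder of the argument, namely concentration and the union bound, is then routine.
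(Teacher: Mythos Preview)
Your plan has two genuine gaps. First, the per-edge claim that $\Pr[u,v\in U_t,\ \pi(u),\pi(v)\in(t,t']]=(t'/n)^2\cdot O(1/t)$ is false as stated: for an isolated positive edge $uv$ (each endpoint has positive degree $1$), once $\pi(u),\pi(v)>t$ both endpoints are automatically uncovered, so the conditional probability is $1$, not $O(1/t)$. Your fallback sketch has a related slip: a neighbor $w$ of $u$ landing in position $\le t$ need \emph{not} become a pivot, since $w$ may itself already be covered; and ``constant probability'' would not suffice anyway, since you need probability close to $1$ to extract a $1/t$ factor. Second, and more seriously, the concentration step is not routine. Swapping two positions in the permutation can change which nodes become pivots, and this cascades through the KwikCluster dynamics: a single swap can flip the covered/uncovered status of many nodes, so the bounded-differences constant for $|F_{t,t'}|$ can be $\Theta(n)$, rendering McDiarmid useless. ``Truncating high-degree vertices'' does not address this, because the cascade is driven by the pivoting process, not by individual degrees.

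The paper sidesteps both issues with a single direct argument and no separate concentration step. Condition on the set $H$ of the first $t'$ nodes and fix $v\in H$. Let $B_i$ be the event that after step $i$, $v$ is still uncovered \emph{and} has at least $\alpha t'/t$ uncovered $H$-neighbors, with $\alpha=10\ln n$. The key observation is that an \emph{uncovered} neighbor of $v$, if drawn next, certainly becomes a pivot and covers $v$; hence $\Pr[B_i\mid B_{i-1}]\le 1-(\alpha t'/t+1)/t'<1-\alpha/t$, giving $\Pr[B_t]\le(1-\alpha/t)^t\le n^{-10}$. A union bound over $v\in H$ shows every uncovered node has at most $\alpha t'/t$ uncovered neighbors in $H$, so $|F_{t,t'}|\le \tfrac12\, t'\cdot\alpha t'/t=5\ln n\cdot t'^2/t$. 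The trick you were missing is to track \emph{uncovered} neighbors rather than all neighbors; this makes the conditional-probability step exact, eliminates the recursion, and delivers the high-probability bound directly.
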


\begin{proof}
Note that the bound holds vacuously for $t\leq 10\log n$ so in the rest of the proof we will assume $t\geq 10\log n$. Fix the set of $t'$ elements in the random permutation and consider the induced graph $H$ on these $t'$ elements. Pick an arbitrary node $v$ in $H$. We will consider the random process that picks each of the first $t$ entries of the random permutation by picking a node in $H$ uniformly at random without replacement. We will argue that at the end of these $t$ steps, with probability at least $1-1/n^{10}$, either $v$ is covered or at most $\alpha t'/t$ neighbors of $v$ in $H$ are uncovered where $\alpha=10\log n$. Hence, by the union bound, all uncovered nodes have at most $\alpha t'/t$ uncovered neighbors and hence the number of edges in $H$ whose both endpoints are uncovered after the first $t$ steps is at most $(\alpha t'/t) \cdot t'/2$. The lemma follows because $F_{t,t'}$ is exactly the number of edges in $H$ whose both endpoints are uncovered after the first $t$ steps.

To show that after $t$ steps, either $v$ is covered or it has at most $\alpha t'/t$ uncovered neighbors we proceed as follows. Let $B_i$ be the event that after the $i$th iteration, $v$ is not covered and it has at least $\alpha t'/t$ uncovered neighbors. Then, since $B_{i+1}\subset B_i$ for each $i$,
\begin{eqnarray*}
 \prob{v \mbox{ is covered or it has at most } \alpha t'/t \mbox{ uncovered neighbors.}}
& =& 1-\prob{B_r} \\
& = & 1-\prob{B_r \cap B_{r-1} \cap \ldots \cap B_1}\\
&=& 1-p_r p_{r-1} \ldots p_1
\end{eqnarray*}
where $p_i=\prob{B_i| B_1 \cap B_2 \cap \ldots \cap B_{i-1}}$.
Note that 
\[p_i\leq 1-\prob{v \mbox{ gets covered at step } i | B_1 \cap B_2 \cap \ldots \cap B_{i-1}}\leq 1-\frac{\alpha t'/t+1}{t'-(i-1)} <1-\alpha/t\ ,\] 
and hence, 
\[\prob{v \mbox{ is covered or it has at most } \alpha t'/t \mbox{ uncovered neighbors.}}
\geq  1-(1-\alpha/t)^t\geq 1-\exp(-\alpha)=1-1/n^{10}
\]
as required.

\end{proof}

\noindent{\bf Semi-Streaming Algorithm.} 
As a warm-up, first consider the following two-pass streaming algorithm that emulates Ailon et al.'s algorithm using $O(n^{1.5}\log^2 n)$ space:
\begin{enumerate}
\item \emph{First pass:} Collect all edges in $E^+$ incident on $\{v_i\}_{i\in [\sqrt{n}]}$. This allows us to simulate the first $\sqrt{n}$ iterations of the algorithm.
\item \emph{Second pass:} Collect all edges in $F_{\sqrt{n},n}$.
 This allows us to simulate the remaining $n-\sqrt{n}$ iterations.
\end{enumerate}
The space bound follows since  each pass requires storing only $O({n}^{1.5} \log n)$ edges with high probability.
 requires storing at most $n^{1.5}$ edges and, with high probability, the second pass requires storing $|F_{\sqrt{n},n}|=O({n}^{1.5} \log n)$ edges.

Our semi-streaming algorithm proceeds  as follows.
\begin{itemize}
\item For $j\geq 1$, let $t_j=(2n)^{1-1/2^{j}}$: during the $(2j-1)$-th pass,
we store all edges in $F_{t_{j-1},t_j}$ where $t_0=0$, and during the $(2j)$-th pass we determine~$U_{t_j}$.
\item After the $(2j)$-th pass we have simulated the first~$t_j$ iterations of
\citeauthor{AilonCN08}'s algorithm. Since $t_j\geq n$ for $j=1+\log \log n$, our algorithm terminates after $O(\log \log n)$ passes. 
\end{itemize}

\begin{theorem}
\label{thm:mult1}
  On a unit-weight graph,
  there exists a $O(\log \log n)$-pass 
  semi-streaming algorithm that returns with high probability a $3$-approximation to \mindisagree.
\end{theorem}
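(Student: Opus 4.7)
The plan is to argue that the algorithm described above faithfully simulates the pivoting algorithm of \citet{AilonCN08} on a uniformly random permutation $v_1,\ldots,v_n$, while using $\tilde{O}(n)$ memory per pass and only $O(\log\log n)$ passes. Since the joint distribution of pivots, cluster labels, and disagreement count is then identical to that of \citet{AilonCN08}, the $3$-approximation guarantee transfers directly.

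First I would pick the permutation in advance and store it (using $O(n\log n)$ bits, shared across passes). The algorithm maintains the invariant that $U_{t_{j-1}}$ is resident in RAM at the start of pass $2j-1$. During pass $2j-1$ we keep every $E^+$-edge whose both endpoints lie in $U_{t_{j-1}}$ and whose indices lie in $(t_{j-1},t_j]$; this set is precisely $F_{t_{j-1},t_j}$, and it can be recognized online because $U_{t_{j-1}}$ is already in memory. Once the pass ends, these edges let us simulate iterations $t_{j-1}{+}1,\ldots,t_j$ of Ailon et al.'s algorithm in post-processing, determining in order which $v_i$ are still uncovered at step $i$ and which intra-window vertices they absorb. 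In pass $2j$ we then expose the effect of the newly chosen pivots on vertices with index $>t_j$: for each $v\in U_{t_{j-1}}$ with index $>t_j$ we maintain one bit recording whether it is adjacent in $E^+$ to any of the (at most $t_j - t_{j-1}$) new pivots. This uses $O(n)$ space and yields $U_{t_j}$, restoring the invariant.

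The key computation is the space bound. By Lemma~\ref{lem:smallF}, $|F_{t_{j-1},t_j}| \leq 5 \ln n \cdot t_j^2/t_{j-1}$ with high probability. The doubly-exponential schedule $t_j=(2n)^{1-1/2^j}$ is designed precisely so that $t_j^2/t_{j-1}$ telescopes:
\[
\frac{t_j^2}{t_{j-1}} \;=\; \frac{(2n)^{2-2/2^j}}{(2n)^{1-1/2^{j-1}}} \;=\; (2n)^{2 - 2/2^j - 1 + 2/2^j} \;=\; 2n,
\]
so $|F_{t_{j-1},t_j}| = \tilde{O}(n)$ w.h.p., and a union bound over the $O(\log\log n)$ windows controls all passes simultaneously. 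For the pass count, $t_j\geq n$ holds as soon as $2^j\geq \log_2(2n)$, so $J=\lceil\log_2\log_2(2n)\rceil$ windows exhaust all vertices, giving $2J=O(\log\log n)$ total passes.

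The main obstacle is ensuring the telescoping identity is tight enough that both the per-pass space budget $t_j^2/t_{j-1}=\tilde{O}(n)$ and the pass-count condition $t_J\geq n$ within $O(\log\log n)$ windows are met simultaneously; the choice $t_j=(2n)^{1-1/2^j}$ is exactly what balances the two. Once these are in place, the simulation is bit-for-bit identical to \citet{AilonCN08} on the sampled permutation, so the output is a $3$-approximation (in expectation, and in the high-probability sense used in the statement once we union-bound over the $O(\log\log n)$ events that each window fits in memory).
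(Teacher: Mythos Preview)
Your proposal is correct and follows essentially the same approach as the paper: simulate Ailon--Charikar--Newman on a pre-sampled permutation using the doubly-exponential window schedule $t_j=(2n)^{1-1/2^{j}}$, store $F_{t_{j-1},t_j}$ in odd passes, recover $U_{t_j}$ in even passes, and invoke Lemma~\ref{lem:smallF} together with the telescoping identity $t_j^2/t_{j-1}=2n$ for the space bound. The only small thing to add is that the case $j=1$ (where $t_0=0$ and the lemma is not directly applicable) is handled trivially since $|F_{0,t_1}|\leq t_1^2=2n$; the paper singles this out explicitly.
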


\begin{proof}
In the first pass, we need to store  at most $t_1^2=((2n)^{1-1/2})^2=2n$ edges.
For the odd numbered passes after the first pass, by Lemma \ref{lem:smallF}, the space is at
most 
\[5 \cdot \ln n \cdot t_j^2/t_{j-1}
=5 \cdot \ln n \cdot
(2n)^{2-2/2^{j}}/(2n)^{1-1/2^{j-1}}
=5 \cdot \ln n \cdot 2n 
= O(n \log n)\,,\] with high
probability.
The additional space used in the even numbered passes is trivially bounded by
$O(n \log n)$.
The approximation factor follows from the analysis of~\cite{AilonCN08}.
\end{proof}

\subsection{$\mindisagree_k$ with Unit Weights}
\label{sec:mdfc}

Our result in this section is based the following algorithm of~\cite{GiotisG06} that returns a $(1+\epsilon)$-approximation for $\mindisagree_k$ on unit-weight graphs. Their algorithm is as follows:

\begin{enumerate}
\item Sample $r=\poly(1/\epsilon,k)\cdot \log n$ nodes $S$ and for every possible $k$-partition $\{S_i\}_{i\in [k]}$  of $S$: 
\begin{enumerate}
\item Compute the cost of the clustering where $v\in V\setminus S$ is assigned to the $i$th cluster where
\[
i=\argmax_{j}\left ( \sum_{s\in S_j: sv\in E^+} w_{sv}+ \sum_{s\not \in S_j: sv\in E^-} |w_{sv}| \right )
\]
\end{enumerate}
\item Let $\calC'$ be the best clustering found. If all clusters in~$\calC'$ have at least $n/(2k)$ nodes, return~$\calC'$. Otherwise, fix all the clusters of size at least $n/(2k)$ and recurse (with the appropriate number of centers still to be determined) on the set of nodes in clusters that are smaller than $n/(2k)$. 
\end{enumerate}

We first observe the above algorithm can be emulated in $\min(k-1,\log n)$ passes in the data stream model. To emulate each recursive step in one pass we simply choose $S$ are the start of the stream and then collect  all incident edges on~$S$. We then use the $\disagree$ oracle developed in Section~\ref{sec:l2combrec} to find the best possible partitions during post-processing. It is not hard to argue that this algorithm terminates in $O(\log n)$
rounds, independent of~$k$: Call clusters with fewer than $n/2k$ nodes ``small'', and those with
at least $n/2k$ nodes ``large''. 
Observe that the number of nodes in small clusters halves in each round since there are at most $k-1$ small clusters and each has at most $n/(2k)$ nodes. This would suggest a $\min(k-1,\log n)$ pass data stream algorithm, one pass to emulate each round of the offline algorithm. However, the next theorem shows that the algorithm can actually be emulated in $\min (k-1, \log \log n)$ passes.

\begin{theorem}\label{thm:mult2}
There exists a $\min (k-1, \log \log n)$-pass $O(\poly(k,\log n,1/\epsilon) n)$-space algorithm that $(1+\epsilon)$ approximates $\mindisagree_k(G)$.
\end{theorem}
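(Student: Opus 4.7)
The plan is to emulate the \citet{GiotisG06} algorithm quoted just before the theorem, collapsing multiple of its recursive levels into a single streaming pass. A naive emulation uses one pass per recursive level and terminates in $\min(k-1, O(\log n))$ levels: $k-1$ because each level fixes at least one of the $k$ clusters, and $O(\log n)$ because each level at most halves the number of currently active (not-yet-fixed) nodes. The improvement to $O(\log \log n)$ passes comes from executing $\Delta_p$ consecutive levels within pass $p$, where $\Delta_p$ is chosen so that pass-$p$ space stays within $\tilde{O}(n \cdot \poly(k,\epsilon^{-1}))$.

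Concretely, let $V_{r_{p-1}}$ be the active set at the start of pass $p$ (known from the previous pass's post-processing), with $m_{p-1} = |V_{r_{p-1}}|$. Set
\[
\Delta_p = \rounddown{\log_2 \left( n \cdot \poly(k,\epsilon^{-1},\log n) / (r \cdot \log n \cdot m_{p-1}) \right)},
\]
where $r = \poly(k,\epsilon^{-1}) \log n$ is the sample size used by a single GG level. \emph{Before} pass $p$ begins, for each $i \in [\Delta_p]$ independently draw a uniform random sample $T_p^i \subseteq V_{r_{p-1}}$ of size $\Theta(r \cdot 2^i \cdot \log n)$; during the pass, collect every edge with one endpoint in $\bigcup_i T_p^i$ and the other in $V_{r_{p-1}}$, and in parallel maintain a disagreement sketch from Section~\ref{sec:l2combrec}. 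Total space is $O(m_{p-1} \cdot \sum_i |T_p^i|) = \tilde{O}(n \cdot \poly(k,\epsilon^{-1}))$. After the pass, emulate level $r_{p-1}+i$ by using a uniform size-$r$ subset of $T_p^i \cap V_{r_{p-1}+i}$ as GG's sample, enumerating all $k^r$ induced $k$-clusterings, and selecting the one with least estimated disagreement using the sketch.

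Correctness hinges on $T_p^i \cap V_{r_{p-1}+i}$ being uniform in $V_{r_{p-1}+i}$. Indeed, $V_{r_{p-1}+i}$ is determined by $T_p^1,\ldots,T_p^{i-1}$ (which drove the earlier levels of this pass), and these samples are independent of $T_p^i$, so conditioning preserves uniformity. A Chernoff bound, applied in the regime where $|V_{r_{p-1}+i}|$ is within a $\log n$ factor of its worst-case upper bound $m_{p-1}/2^i$, then yields $|T_p^i \cap V_{r_{p-1}+i}| \geq r$ with probability $1 - n^{-c}$; in the complementary event (which only occurs if some $|V_j|$ shrinks unexpectedly quickly), the active set is already small enough that one extra pass handles the rest of the recursion. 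Given that these samples are genuinely uniform, GG's analysis applies verbatim and we inherit a $(1+\epsilon)$-approximation.

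For the pass count, by construction $m_p = m_{p-1}/2^{\Delta_p} = O(m_{p-1}^2 \cdot r \log n / (n \cdot \poly))$, so $f_p := m_p/n$ satisfies $f_p \leq f_{p-1}^2 / c$ for some constant $c > 1$ (taking the $\poly$ factor large enough relative to $r \log n$). An induction gives $f_p \leq c^{-(2^p-1)}$; thus $m_p \leq 1$ after $O(\log \log n)$ passes. Combined with the naive one-level-per-pass strategy, the overall pass complexity is $\min(k-1, O(\log \log n))$. The main obstacle is justifying the oblivious pre-sampling: $T_p^i$ must be drawn before the within-pass levels that determine $V_{r_{p-1}+i}$ are actually processed, and the independence observation above together with the $\Theta(\log n)$-fold oversampling is what makes it go through.
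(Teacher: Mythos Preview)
Your high-level plan matches the paper's: oversample within each pass so as to simulate several consecutive levels of the Giotis--Guruswami recursion before taking another pass, and invoke deferred decisions to argue that a pre-drawn sample, intersected with the (later-determined) active set, is still uniform. The problem is your specific sampling schedule. You size $T_p^i$ as $\Theta(r\,2^i\log n)$, i.e., you calibrate the oversampling to the active set shrinking by exactly a factor of~$2$ per level. But the factor-$2$ bound you quote is a \emph{lower} bound on the shrinkage, not an upper bound: GG's active set can shrink by any factor $s\ge 2$ at a level. When $s>2$ the expected size of $T_p^i\cap V_{r_{p-1}+i}$ is $\Theta(r\log n)\cdot(2/s)^{i-1}$, which drops below~$r$ after only $i=O(\log\log n)$ levels, regardless of how large $\Delta_p$ is. Your ``complementary event: one extra pass handles the rest'' is then unjustified. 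Concretely, consider instances where one cluster is fixed per level and the surviving small clusters total $|V_j|/3$; then $s=3$, there are $\Theta(\log n)$ GG levels in total, your algorithm completes only $O(\log\log n)$ of them per pass, and the pass count is $\Theta(\log n/\log\log n)$ rather than $O(\log\log n)$. The recursion $f_p\le f_{p-1}^2/c$ breaks in exactly this regime, since after an early stop at level $i$ you only know $m_p<m_{p-1}/(2^i\log n)$ with $i\ll\Delta_p$.

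The paper sidesteps this by making the sample size \emph{uniform} within a pass and letting it depend only on the pass index: in pass~$i$ one draws $k'$ independent samples of size $N_i=2r\,n^{2^{i-1}/\log n}$ from the current active set~$V_i$. A sample of this size retains at least $r$ active points as long as the active set has shrunk by at most a factor $N_i/r$ since the start of the pass; hence \emph{whenever} a pass stops---whether because all $k'$ samples were consumed or because one failed---one is guaranteed $|V_{i+1}|\le |V_i|/n^{2^{i-1}/\log n}$, irrespective of the per-level shrinkage rate. Iterating gives $|V_{i+1}|\le n/n^{(2^i-1)/\log n}$ and termination after $1+\log\log n$ passes. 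The space bound $|V_i|\cdot N_i\cdot k'=O(kr\,n^{1+1/\log n})=\tilde O(krn)$ holds because the doubly-exponential growth of $N_i$ exactly cancels the guaranteed shrinkage of $|V_i|$. Your geometric-within-a-pass schedule does not have this self-correcting property; the simplest repair is to replace all the $T_p^i$ by samples of the single common size $\Theta(r\,2^{\Delta_p}\log n)$, which is essentially the paper's scheme.
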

\begin{proof}
To design an $O(\log \log n)$ pass algorithm, we proceed as follows.
At the start of the $i$-th pass, suppose we have $k'$ clusters still to determine
  and that~$V_{i}$ is the set of remaining nodes that have not
  yet been included in large clusters.
We will pick $k'$ random sets of samples
    $S_1, \ldots, S_{k'}$ in parallel from~$V_{i}$ each of size 
    \[N_i = 2 rn^{2^{i-1}/\log n} \ .\] 

For each sampled node, we extract all edges to unclustered
nodes.  We will use this information to emulate one or more rounds of the algorithm. Note that  since $N_i\geq n$ for $i\geq 1+\log \log n$, the algorithm must terminate in $O(\log \log n)$ passes since  in pass $1+\log \log n$ we are storing all edges in the unclustered graph. What remains is to establish a bound on the space required in each of the passes. To do this we will first argue that in each pass, the  number of unclustered nodes drops significantly, perhaps to zero.

Since there are only $k'$ clusters still to determine, and every round
of the algorithm fixes at least one cluster, it is conceivable that
the sets $S_1, \ldots, S_{k'}$ could each be used  to emulate one of
the remaining $\leq k'$ rounds of the algorithm; this would suggest it
is possible to completely emulate the algorithm in a single pass.
However, this will not be possible if at some point there are fewer than $r$ unclustered nodes remaining in all the sets $S_1, \ldots, S_{k'}$. At this point, we terminate the current set of samples, and take a new pass. Observe that in this case we have likely made progress, as the number of
unclustered nodes over which we are working has likely dropped significantly. Specifically, suppose the number of unclustered nodes is greater than $|V_i|n^{2^{i-1}/\log n}$ before we attempt to use $S_{k'}$. By the principle of deferred decision, the expected number of unclustered nodes in $S_{k'}$ is at least 
\[\frac{ |V_i|n^{2^{i-1}/\log n}}{|V_i|}  N_i  = 2 r \ .\]
Therefore, by an application of the Chernoff bound, we can deduce the number of unclustered nodes when we terminate the current pass is less than $|V_i|n^{2^{i-1}/\log n}$, i.e., the number of unclustered nodes has decreased by a factor of at least $n^{2^{i-1}/\log n}$ since the start of the  pass. 

Applying this analysis to all passes and using the fact that $|V_1|=n$, we conclude that 
\[
|V_{i+1}|\leq \frac{|V_{i}|}{n^{2^{i-1}/\log n}} \leq 
  \frac{|V_1|}{n^{2^{1-1}/\log n}\cdot n^{2^{2-1}/\log n} \cdot \ldots \cdot n^{2^{i-1}/\log n}}
= \frac{n}{n^{(2^{i}-1)/\log n}} \ . 
\]
The space needed by our algorithm for round~$i$ is therefore
$O(|V_i| N_i k') = O(krn^{1+1/\log n})=\tilde{O}(krn)$.  
\end{proof}

\section{Lower Bounds}
\label{sec:lower}
Finally, we consider the extent to which our results can (not) be
improved, by showing lower bounds for variants of problems that we can
solve.  All our proofs will use the standard technique of reducing from two-party communication complexity problems, i.e., Alice has input $x$ and Bob has input $y$ and they wish to compute some function $f(x,y)$ such that the number of bits communicated between Alice and Bob is small. A lower bound on the number of bits communicated can be used to lower bound  the space complexity of a data stream algorithm as follows. Suppose Alice can transform $x$ in to the first part $S_1$ of a data stream and Bob can transform $y$ in to the second part $S_2$ such that the result of the data stream computation on $S_1\circ S_2$ implies the value of $f(x,y)$. Then if the data stream algorithm takes $p$ passes and uses $s$ space, this algorithm can be emulated by Alice and Bob using $2p-1$ messages each of size bits $s$; Alice starts running the data stream algorithm on $S_1$ and each time a player  no longer has the necessary information to emulate the data stream algorithm they send the current memory state of the algorithm to the other player. Hence, a lower bound for the communication complexity problem yields a lower bound for the data stream problem.

\begin{theorem}\label{thm:n2lb}
A one-pass stream algorithm that tests whether $\mindisagree(G)=0$,
with probability at least~$9/10$,
requires~$\Omega(n^2)$ bits if permitted weights are $\{-1,0,1\}$. 
\end{theorem}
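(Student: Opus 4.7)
}
The plan is to reduce from the one-way communication complexity of the $\ind$ problem. In $\ind$, Alice holds a string $x \in \{0,1\}^{N}$, Bob holds an index $k^* \in [N]$, and Bob must output $x_{k^*}$ with constant probability after receiving a single message from Alice; this requires $\Omega(N)$ bits \citep{McGregor14}. Taking $N = n^2$ and indexing coordinates by pairs $(i^*,j^*) \in [n]\times[n]$, I will convert $x$ into the first part of a stream and the query $(i^*,j^*)$ into the second part, so that a space-$s$ streaming algorithm for testing $\mindisagree(G)=0$ with success probability $9/10$ yields a one-way protocol for $\ind$ using $s$ bits, forcing $s = \Omega(n^2)$.

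The construction uses $2n+1$ vertices: $\{a_1,\dots,a_n\}$, $\{b_1,\dots,b_n\}$, and one auxiliary node $z$. Alice's stream inserts a negative-weight edge $(a_i,b_j)$ with weight $-1$ for every pair with $x_{ij}=1$, and no edge otherwise. After streaming her portion, Alice sends the memory of the algorithm to Bob. Bob then appends two positive-weight edges: $(a_{i^*}, z)$ and $(b_{j^*}, z)$, both with weight $+1$. All weights lie in $\{-1,0,1\}$, as required by the hypothesis.

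The correctness argument is a short case analysis. If $x_{i^*j^*}=1$, then the combined graph contains the negative edge $(a_{i^*},b_{j^*})$ together with Bob's two positive edges $(a_{i^*},z)$ and $(b_{j^*},z)$; any clustering with zero disagreements must place $a_{i^*},z,b_{j^*}$ in a single cluster (from the two positive edges), contradicting the separation requirement of the negative edge, so $\mindisagree(G) \geq 1$. If $x_{i^*j^*}=0$, then there is no edge on the pair $(a_{i^*},b_{j^*})$, and the three-cluster partition $\{a_{i^*}, b_{j^*}, z\}$, $\{a_i : i \neq i^*\}$, $\{b_j : j \neq j^*\}$ places both positive edges inside a single cluster and sends every Alice-inserted negative edge across clusters (the only negative edge that could fail would be one incident to both $a_{i^*}$ and $b_{j^*}$, but this corresponds to $x_{i^*j^*}$, which is $0$). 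Hence $\mindisagree(G) = 0$ iff $x_{i^*j^*} = 0$.

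Combining these, Bob recovers $x_{i^*j^*}$ by running the streaming algorithm on his two appended edges starting from the state Alice transmitted, then outputting $0$ if the algorithm reports $\mindisagree(G)=0$ and $1$ otherwise. A $9/10$-correct streaming algorithm with space $s$ yields a one-way $\ind$ protocol using $s$ bits with success $9/10$, so $s = \Omega(n^2)$. The only subtlety to verify carefully is that Bob really can continue the stream using only the transmitted memory -- this is immediate for the dynamic graph-stream model since a single-pass algorithm's behavior on a stream suffix depends only on its current state and the suffix -- and that the construction uses only weights in $\{-1,0,1\}$, which it does by inspection.
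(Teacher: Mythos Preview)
Your proposal is correct and follows essentially the same approach as the paper: both reduce from one-way \ind, have Alice encode her bits as weight-$(-1)$ edges, and have Bob add a single fresh vertex joined by two weight-$(+1)$ edges to the query pair so that $\mindisagree(G)=0$ iff the queried bit is~$0$. The only cosmetic difference is that the paper places Alice's negative edges on a single vertex set $\{v_1,\dots,v_n\}$ (encoding ${n\choose 2}$ bits on unordered pairs), whereas you use a bipartite layout $\{a_1,\dots,a_n\}\cup\{b_1,\dots,b_n\}$ (encoding $n^2$ bits); since your graph has $\Theta(n)$ vertices either way, the resulting $\Omega(n^2)$ bound is the same.
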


\begin{proof}
The theorem follows from a reduction from the communication problem \ind. Alice has a string $x\in \{0,1\}^{{n\choose 2}}$, indexed as $[n]\times [n]$ and unknown to Bob, and Bob wants to learn $x_{i,j}$ for some $i,j\in [n]$ that is unknown to Alice. 
Any one-way protocol from Alice to Bob that allows Bob to learn $x_{i,j}$ requires 
$\Omega(n^2)$ bits of communication~\citep{Ablayev96}. 

Consider the protocol for \ind\ where Alice creates a graph~$G$ over
nodes $V = \{v_1,\ldots, v_n\}$ and adds edges
$\{\{v_i,v_j\}:x_{i,j}=1\}$ each with weight~$-1$. She runs a data
stream algorithm on $G$ and sends the state of the algorithm to Bob
who adds positive edges $\{u,v_i\}$ and $\{u,v_j\}$ where~$u$ is a new
node.
All edges without a specified weight are treated as not present, or
equivalently as having weight zero. 
Hence the set of weights used in this graph is  $\{-1, 0, +1\}$. 
Now, if~$x_{ij} = 0$, then  $\disagree(G)=0$: consider
the partition containing $\{u,v_i,v_j\}$, with each other item comprising
a singleton cluster.
Alternatively, $x_{ij}=1$ implies $\disagree(G)\geq1$ since a clustering must disagree with one of the three edges on $\{u,v_i,v_j\}$. It follows that  every data stream algorithm returning a multiplicative estimate of $\mindisagree(G)$ requires~$\Omega(n^2)$ space. 
\end{proof}

When permitted weights are restricted to $\{-1,1\}$, the following multi-pass lower bounds holds:

\begin{theorem}\label{thm:n3lb}
A $p$-pass stream algorithm that tests whether $\mindisagree(G)=0$,
with probability at least~$9/10$,
requires~$\Omega(n/p)$ bits when permitted weights are $\{-1,1\}$.
\end{theorem}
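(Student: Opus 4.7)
The plan is to reduce from the two-party communication problem $\disj$ on a universe of size $N=\lfloor n/3\rfloor$, whose randomized complexity is $\Omega(N)$ even under multi-round protocols (Kalyanasundaram--Schnitger, Razborov). Given a $p$-pass streaming algorithm using $s$ bits of memory, Alice and Bob can simulate it on a stream split into her prefix and his suffix, exchanging the memory state at each segment boundary, yielding a protocol that communicates $(2p-1)s$ bits. Thus $s=\Omega(N/p)=\Omega(n/p)$ will follow once we exhibit a $\pm1$-weighted graph whose $\mindisagree$ is zero iff Alice and Bob's sets are disjoint.

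The construction partitions $V$ into $\{a_i\}_{i\in[N]}\cup\{b_i\}_{i\in[N]}\cup\{c_i\}_{i\in[N]}$. For each $i$, Alice sets $w(a_i,c_i)=+1$ if $i\in S_A$ and $-1$ otherwise, while Bob sets $w(b_i,c_i)$ from $S_B$ in the same way. Every other edge of the complete graph is assigned weight $-1$; these background edges do not depend on either input and can be inserted at the start of Alice's segment of the stream.

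For correctness, recall that $\mindisagree(G)=0$ holds exactly when the subgraph of positive edges is a disjoint union of cliques, equivalently when it contains no induced $P_3$. If $i^*\in S_A\cap S_B$, then the triple $\{a_{i^*},b_{i^*},c_{i^*}\}$ induces a positive $P_3$ through $c_{i^*}$ (the closing edge $\{a_{i^*},b_{i^*}\}$ is negative by construction), so $\mindisagree(G)\geq 1$. Conversely, when $S_A\cap S_B=\emptyset$, the partition that groups $\{a_i,c_i\}$ for each $i\in S_A$, groups $\{b_i,c_i\}$ for each $i\in S_B$ (these pairs are disjoint by hypothesis), and leaves every other vertex as a singleton, places every positive edge within a part and every negative edge across parts, so $\mindisagree(G)=0$.

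The delicate point is handling the $\{-1,+1\}$ restriction: unlike the $\{-1,0,+1\}$ setting of Theorem~\ref{thm:n2lb}, where absent edges are allowed, here every pair must be labelled. The key insight is that filling the background with $-1$ edges is always compatible with the trivial all-singleton partition, so these filler edges introduce no defects on their own; the only obstruction to a zero-cost clustering is a positive $P_3$, which by construction appears precisely at indices in $S_A\cap S_B$. Combining this gadget with the simulation argument and the $\Omega(N)$ lower bound for $\disj$ yields the claimed $\Omega(n/p)$ space bound.
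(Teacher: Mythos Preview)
Your proof is correct and follows essentially the same approach as the paper's: both reduce from \disj\ on $\Theta(n)$ elements by building a triangle gadget $\{a_i,b_i,c_i\}$ per coordinate, making two sides of each triangle positive or negative according to Alice's and Bob's inputs, fixing the third side and all cross-triple edges to $-1$, and observing that a bad (positive--positive--negative) triangle arises exactly at a common index. The only differences are cosmetic: you place the ``hub'' vertex at $c_i$ rather than $b_i$, you have Alice insert the input-independent negative background rather than Bob, and you phrase the $\mindisagree(G)=0$ criterion via the ``positive edges form a clique cover / no induced $P_3$'' characterization instead of arguing directly.
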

\begin{proof}
The proof uses a reduction from the communication problem of \disj\ where Alice and Bob have strings $x,y\in \{0,1\}^{n}$ and wish to determine where there exists an $i$ such that $x_i=y_i=1$. 
Any $p$ round protocol between  Alice and Bob requires~$\Omega(n)$ bits of
communication~\citep{KalyanasundaramS92} and hence there must be a message of $\Omega(n/p)$ bits. 

Consider the protocol for \disj\ on a graph~$G$ with nodes $V = \{a_1,\ldots, a_n, b_1, \ldots, b_n, c_1,\ldots, c_n\}$.
For each $i\in [n]$, Alice adds an edge $\{a_i,b_i\}$ with weight $(-1)^{x_i+1}$. She runs a data stream algorithm on $G$ and sends the state of the algorithm to Bob. For each $i\in [n]$, Bob adds an edge $\{b_i,c_i\}$ of weight $(-1)^{y_i+1}$ along with negative edges 
\[\{\{a_i,c_i\}:i\in [n]\} \cup \{\{u,v\}:u\in \{a_i,b_i,c_i\}, v\in \{a_j,b_j,c_j\}, i\neq j\} \ .\] Note that $\mindisagree(G)>0$ iff there exists $i$ with $x_i=y_i=1$. Were there no such~$i$, the positive edges would all be isolated, whereas if $x_i=y_i=1$ then every partition violates one of the edges on $\{a_i,b_i,c_i\}$. It follows that every $p$-pass data stream algorithm returning a multiplicative estimate of $\mindisagree(G)$ requires~$\Omega(n/p)$ space. 
\end{proof}

Next we show a lower bound that applies when the number of negative weight edges in bounded. This shows that our upper bound in Theorem \ref{thm:min} is essentially tight.

\begin{theorem}\label{thm:neednegatives}
A one-pass stream algorithm that tests whether $\mindisagree(G)=0$,
with probability at least~$9/10$,
requires~$\Omega(n+|E^{-}|)$ bits if permitted weights are $\{-1,0,1\}$. \end{theorem}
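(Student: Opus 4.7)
The plan is to combine two independent lower bounds. The $\Omega(n)$ contribution follows immediately by invoking Theorem~\ref{thm:n3lb} with $p = 1$: since $\{-1,1\} \subseteq \{-1,0,1\}$, the $\Omega(n)$ bound for the smaller weight set is inherited. It therefore suffices to establish an $\Omega(|E^-|)$ lower bound, and I aim to do so by a reduction from \ind\ that generalizes the argument used in Theorem~\ref{thm:n2lb}, but now parameterized by the desired number of negative edges rather than saturating at $\binom{n}{2}$.

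Fix any target $k$ with $1 \leq k \leq \binom{n-1}{2}$, and choose an injection $\phi : [k] \to \binom{\{v_1, \ldots, v_{n-1}\}}{2}$ that enumerates $k$ distinct candidate edges among the first $n-1$ vertices. Alice, holding $x \in \{0,1\}^k$, inserts the edge $\phi(i)$ with weight $-1$ whenever $x_i = 1$ and omits it otherwise; she runs the purported streaming algorithm on her portion of the stream and forwards its memory state to Bob. Bob, who wants to recover $x_{i^*}$ for some index $i^*$ with $\phi(i^*) = \{v_a, v_b\}$, appends the two positive edges $\{u, v_a\}$ and $\{u, v_b\}$ incident to a fresh vertex $u$ and completes the pass.

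The correctness analysis mirrors Theorem~\ref{thm:n2lb}. If $x_{i^*} = 0$, then $\{v_a, v_b\}$ is absent from the graph, and placing $\{u, v_a, v_b\}$ in one cluster with every other vertex as a singleton gives $\mindisagree(G) = 0$: both positive edges lie inside the cluster, and every negative edge either has both endpoints outside $\{v_a, v_b\}$ (hence both are singletons and separated) or has exactly one endpoint in $\{v_a, v_b\}$ (hence the other is a singleton and separated). Conversely, if $x_{i^*} = 1$, the triangle on $\{u, v_a, v_b\}$ carries two positive edges and one negative edge, so any clustering disagrees with at least one of its three edges, forcing $\mindisagree(G) \geq 1$. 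Thus a streaming tester distinguishing $\mindisagree = 0$ from $\mindisagree \geq 1$ with error at most $1/10$ lets Bob recover $x_{i^*}$ with the same success probability.

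To finish, I would restrict attention to strings $x$ of Hamming weight exactly $\lceil k/2 \rceil$, so that $|E^-|$ is deterministically $\Theta(k)$ on every reduced instance. The $\Omega(k)$ one-way randomized communication complexity of \ind\ survives this balanced-weight promise (for example by padding an unrestricted index instance of length $k/2$ up to weight $\lceil k/2 \rceil$, which preserves the target bit). Consequently any correct one-pass algorithm must store $\Omega(k)$ bits on these instances, and combined with the $\Omega(n)$ bound inherited from Theorem~\ref{thm:n3lb} this yields the claimed $\Omega(n+|E^-|)$ lower bound. The only mildly subtle step is the transfer of the \ind\ lower bound to the balanced-weight promise, which is the routine padding reduction just sketched.
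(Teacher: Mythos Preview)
Your $\Omega(|E^-|)$ argument is fine and is essentially the paper's own approach (apply the construction of Theorem~\ref{thm:n2lb} on $\sqrt{|E^-|}$ nodes), with the extra bookkeeping of the balanced-weight promise to pin down $|E^-|$ exactly.

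The gap is in the $\Omega(n)$ half. You invoke Theorem~\ref{thm:n3lb} verbatim, but the hard instances produced there assign weight $\pm 1$ to \emph{every} pair: in particular all the inter-triple edges $\{u,v\}$ with $u\in\{a_i,b_i,c_i\}$, $v\in\{a_j,b_j,c_j\}$, $i\neq j$ are negative, so $|E^-|=\Theta(n^2)$ on those instances. That does not establish an $\Omega(n)$ bound in the regime where the theorem actually needs it, namely when $|E^-|=O(n)$ so that $n$ is the dominant term in $n+|E^-|$. Put differently, you have shown $\Omega(n)$ only on instances for which the theorem already promises the much stronger $\Omega(|E^-|)=\Omega(n^2)$; the additive $\Omega(n)$ contribution for sparse negative-edge sets is left unproved.

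The paper closes exactly this gap: it reuses the \textsc{Disj} reduction of Theorem~\ref{thm:n3lb} but \emph{drops} the inter-triple negative edges, keeping only the intra-triple edges $\{a_i,b_i\}$, $\{b_i,c_i\}$, $\{a_i,c_i\}$. This leaves $|E^-|=O(n)$ while the correctness argument (that $\mindisagree(G)=0$ iff $x$ and $y$ are disjoint) still goes through, since distinct triples are now completely disconnected and can be clustered independently. You need to make the same modification (or an equivalent one) before your $\Omega(n)$ claim applies to instances with few negative edges.
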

\begin{proof}
A lower bound of $\Omega(|E^-|)$ follows by considering the construction in Theorem \ref{thm:n2lb} on $\sqrt{|E^-|}$ nodes. A lower bound of $\Omega(n)$ when $n\geq |E^{-}|$ follows by considering the construction in Theorem \ref{thm:n3lb} without adding the negative edges $\{uv:u\in \{a_i,b_i,c_i\}, v\in \{a_j,b_j,c_j\}, i\neq j\}$.
\end{proof}

Finally, we show that the data structure for evaluating 2-clusterings
of arbitrarily weighted graphs (Section~\ref{sec:3ds}) cannot be
extended to clusterings with more clusters. 

\begin{theorem}\label{lem:3clusterquery}
When $|\calC|= 3$, a data structure that returns a multiplicative estimate of $\disagree(G,\calC)$ with probability at least~$9/10$,
requires $\Omega(n^2)$ space.
\end{theorem}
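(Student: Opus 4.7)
My plan is to prove this lower bound by reduction from the one-way communication complexity of \ind, exactly as in the proof of Theorem~\ref{thm:n2lb}, but engineering the graph so that a single 3-partition query can isolate a single bit of Alice's input even though there are three ``sides'' rather than two. Recall that in \ind, Alice holds $x\in\{0,1\}^{n\times n}$ and Bob holds indices $(i^*,j^*)$, and any (one-way or interactive) protocol that lets Bob recover $x_{i^*,j^*}$ with constant probability uses $\Omega(n^2)$ bits of communication. Since the data structure is constructed in a single pass and then queried, the reduction just needs a construction in which (i)~Alice writes a stream depending only on~$x$, (ii)~Bob never writes anything to the stream, and (iii)~a suitable 3-partition~$\calC$ chosen from $(i^*,j^*)$ lets Bob deduce $x_{i^*,j^*}$ from an approximation of $\disagree(G,\calC)$.

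The construction I have in mind uses a bipartite vertex set $V=A\cup B$ with $A=\{a_1,\ldots,a_n\}$ and $B=\{b_1,\ldots,b_n\}$, so $|V|=2n$. For every $(i,j)$ with $x_{ij}=1$, Alice inserts the edge $a_ib_j$ with weight $-1$; all other pairs carry weight~$0$. Given the state of the data structure, Bob forms the 3-partition
\[
\calC_{i^*,j^*} \;=\; \bigl\{\,\{a_{i^*},b_{j^*}\},\; A\setminus\{a_{i^*}\},\; B\setminus\{b_{j^*}\}\,\bigr\},
\]
and queries the structure on~$\calC_{i^*,j^*}$.

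The key observation is that every negative edge $a_ib_j$ with $(i,j)\neq(i^*,j^*)$ has its endpoints in distinct clusters of $\calC_{i^*,j^*}$ (because $a_i\in A\setminus\{a_{i^*}\}$ whenever $i\neq i^*$, and symmetrically for~$b_j$), which agrees with a negative-weight edge; meanwhile the potential edge $a_{i^*}b_{j^*}$ has both endpoints inside the first cluster, so it contributes a disagreement precisely when it is present. Consequently
\[
\disagree(G,\calC_{i^*,j^*}) \;=\; \mathbf{1}[x_{i^*,j^*}=1],
\]
and a constant-factor multiplicative estimate of this quantity distinguishes $0$ from $1$ and therefore reveals $x_{i^*,j^*}$ with the required success probability. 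Because the state sent from Alice to Bob equals the memory contents of the data structure, the $\Omega(n^2)$ lower bound from \ind\ transfers verbatim, and since $|V|=2n$ we get $\Omega(|V|^2)$ space for the data structure, as claimed.

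I don't foresee a serious obstacle: the main thing to verify carefully is that no spurious ``crosstalk'' survives in $\disagree(G,\calC_{i^*,j^*})$, which is why I use the bipartite layout rather than a single side -- it forces every $-1$ edge other than $a_{i^*}b_{j^*}$ to straddle the second and third clusters. If a version with $\{-1,+1\}$ weights (rather than $\{-1,0\}$) is desired, one can pad the missing pairs with very small positive weight so that their contribution is dominated by the single indicator bit while an $(1+\epsilon)$-approximation still separates the two cases; this only changes constants in the reduction.
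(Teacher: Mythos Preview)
Your proposal is correct and is essentially identical to the paper's own proof: the paper also reduces from \ind, uses the bipartite vertex set $\{a_1,\ldots,a_n,b_1,\ldots,b_n\}$, has Alice insert a $-1$ edge $a_ub_v$ for each $x_{u,v}=1$, and has Bob query the same 3-partition $\{\{a_{i^*},b_{j^*}\},\,A\setminus\{a_{i^*}\},\,B\setminus\{b_{j^*}\}\}$ to read off $\disagree(G,\calC)=x_{i^*,j^*}$. Your closing remark about a $\{-1,+1\}$ variant is extra and not in the paper, but the core argument matches exactly.
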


\begin{proof}
We show a reduction from the communication problem of \ind\ where Alice has a string $x\in \{0,1\}^{n^2}$ indexed as $[n]\times [n]$ and Bob wants to learn $x_{i,j}$ for some $i,j\in [n]$ that is unknown to Alice. 
A one-way protocol from Alice to Bob that allows Bob to learn~$x_{i,j}$ requires~$\Omega(n^2)$ bits of communication~\cite{Ablayev96}. 
Consider the protocol for \ind\ where Alice creates a graph $G$ over nodes $V = \{a_1,\ldots, a_n,b_1, \ldots, b_n\}$ and adds edges $\{a_u b_v:x_{u,v}=1\}$ each with weight~$-1$. She runs a data stream algorithm on $G$ and sends the state of the algorithm to Bob who then queries the partition $\C = \{ a_i b_j, \{a_\ell :\ell\neq i\}, \{b_\ell :\ell\neq j\}\}$. Since $\disagree(G,\C)=x_{ij}$ it follows that  every data stream algorithm returning multiplicative estimate of $\disagree(G,\C)$ requires~ $\Omega(n^2)$ space. 
\end{proof}

\bibliographystyle{plainnat} \bibliography{cc}

\appendix
\section{Extension to Bounded Weights}
\label{sec:ggext}
In this section, we detail the simple changes that are required in the paper by \citet{GiotisG06} such that their result extends to the case
 where there are no zero weights and the magnitude of all
 non-zero weights is bounded between 1 and $w_*$ where we
 will treat $w_*$ as constant.

\paragraph{Max-Agreement.} See Section \ref{sec:2ds} for a description of the max-agreement algorithm.
The proof in the unweighted case first shows a lower bound for $\maxagree_k(G)$ of 
\[\max(|E^+|,|E^-|(1-1/k))\geq n^2/16 \,.\]
In the bounded-weights case, the magnitude of every edge only increases and so the same bound holds.
Hence, for the purpose of returning a $(1+O(\epsilon))$ multiplicative
approximation, it still suffices to find an~$\epsilon n^2$
additive approximation.
Indeed, the argument of \citeauthor{GiotisG06} still applies, with small changes
by decreasing~$\epsilon$ by a factor~$w_*$ and increasing~$r$ by a factor
of~$w_*^2$.
Rather than retread the full analysis of~\citet{GiotisG06},
we just identify the places where their argument is altered. 

The central result needed is that
estimating the cost associated with
placing each node in a given cluster can be done accurately from a
sample of the clustered nodes.
This is proved via a standard additive
Chernoff bound (Lemma~3.3 of \citet{GiotisG06}).
It is natural to define the weighted generalization of this
estimate based on the weights of edges in the sample and to rescale
accordingly. 
One can then apply the additive Chernoff bound over random variables which
are constrained to have magnitude in the range $\{1,2,\ldots,w_*\}$,
rather than $\{0,1\}$ as in the unit-weights case.
The number of nodes whose estimated relative contribution deviates by more than 
$(\epsilon/32w_*)$ from its (actual) contribution to the optimal clustering is
then bounded by applying the Markov inequality. 
Provided we increase the sample size~$r$ by a factor of~$w_*^2$,
these bounds all hold with the necessary probability.

The other steps in the argument are modified in a similar way: we analyze
the total weight of edges in agreement, rather than their number.
Specifically, applying this modification to 
Lemma~3.4 of \cite{GiotisG06}, we bound the impact of misplacing one
node in the constructed clustering compared to the optimal
clustering. 
With the inequality from the above Chernoff bound
argument, the impact of this can, as in the orignal argument, be
bounded in the weighted case
by~$(\epsilon/8)n$.
The number of nodes for which this does not hold is at most
a fraction $(\epsilon/8w_*)$ of each partition, and so contribute to a
loss of at most $(\epsilon^2/8)n^2$ (weighted) agreements in each
step of the argument, as in the original analysis. 

\paragraph{Min-Agreement.} See Section \ref{sec:mdfc} for a description of the min-agreement algorithm.
Again, the central step is the use of a Chernoff bound on edges
incident on sampled nodes.
Modifying this to allow for bounded-weight edges again incurs a factor
of $w_*^2$, but is otherwise straightforward.
It then remains to follow through the steps of the original argument,
switching from cardinalities of edgesets to their weights.  

\end{document}